\documentclass[11pt]{article}
\usepackage[margin=1in]{geometry}
\usepackage{amsmath,amssymb,amsthm,mathtools}
\usepackage{booktabs,tabularx}
\usepackage{graphicx}
\usepackage{xcolor}
\usepackage{tikz}
\usetikzlibrary{arrows.meta,positioning,calc}
\usepackage[hidelinks]{hyperref}
\usepackage{microtype}
\newtheorem{theorem}{Theorem}
\newtheorem{proposition}[theorem]{Proposition}
\newtheorem{lemma}[theorem]{Lemma}
\newtheorem{claim}{Claim}
\newtheorem{rem}[theorem]{Remark}

\newtheorem{condition}{Fact}
\DeclareMathOperator{\Tr}{Tr}
\DeclareMathOperator{\ran}{ran}

\newcommand{\reviewedit}[1]{{#1}}
\newcommand{\ip}[2]{\langle #1,#2\rangle_{\rho_\beta}}
\newcommand{\norm}[1]{\left\|#1\right\|}

\title{Spectral Analysis Beyond the Spectral Gap: Rapid Mixing of Quantum Gibbs Samplers in One Dimension}
\author{%
Hongrui Chen\\[0.3em]
{\small Department of Mathematics}\\
{\small Stanford University}\\
{\small Stanford, CA 94305, USA}\\
{\small \href{mailto:hongrui@stanford.edu}{\texttt{hongrui@stanford.edu}}}
\and
Zhiyan Ding\\[0.3em]
{\small Department of Mathematics}\\
{\small University of Michigan}\\
{\small Ann Arbor, MI 48109, USA}\\
{\small \href{mailto:zyding@umich.edu}{\texttt{zyding@umich.edu}}}
}
\date{}
\begin{document}
\maketitle

\begin{abstract}
We prove rapid mixing of a quantum Gibbs sampler for general
one-dimensional non-commuting local spin systems at every fixed positive
temperature. In particular, for a system of $n$ sites, we show that the Kubo--Martin--Schwinger (KMS) detailed-balanced
Gibbs sampler of~\cite{CKG2023} converges from any initial state
to trace-norm error $\epsilon$ in time
$O(\log^4(n)+\log(1/\epsilon))$. This improves the previous
$O(n+\log(1/\epsilon))$ bound and establishes rapid
mixing beyond high-temperature and weak-interaction regimes. Our proof explains why the slowest decay rate, captured by the spectral
gap, can substantially underestimate the speed of equilibration.  The key
insight is that a state far from equilibrium must have a
substantial component outside the span of low-weight Pauli operators,
where stronger dissipation estimates apply. This yields two stages of
relaxation: a double-exponential convergence from a bad initial state,
followed by exponential convergence near equilibrium. We turn this
mechanism into a framework for proving rapid mixing through spectral
analysis beyond the gap.
Our decay estimate is equivalent to exponential convergence in
sandwiched R\'enyi-2 divergence. Together with the hypercontractivity framework of~\cite{GW2026},
this estimate yields a modified logarithmic Sobolev inequality with rate
$\Omega(\log^{-4}(n))$.
\end{abstract}

\section{Introduction}\label{sec:introduction}
Understanding quantum many-body systems at finite temperature is a central problem in quantum information science, condensed matter physics, and statistical mechanics. Thermal equilibrium states describe a wide range of experimentally relevant systems and encode fundamental physical properties, including correlations, thermodynamic observables, and finite-temperature phases~\cite{Alhambra2023}. Efficiently preparing and manipulating these states is therefore an important task in quantum simulation. For a system with Hamiltonian \(H\) at inverse temperature \(\beta\), the equilibrium state is described by the Gibbs state $\rho_\beta=\frac{e^{-\beta H}}{\operatorname{Tr}(e^{-\beta H})}$, and its preparation is commonly referred to as quantum Gibbs sampling~\cite{PoulinWocjan2009,ChowdhurySomma2017,Alhambra2023}. Besides providing direct access to finite-temperature expectation values, efficient Gibbs-state preparation serves as an important primitive for estimating thermodynamic quantities such as partition functions and free energies~\cite{PoulinWocjan2009,RFA2024rapid}.

Dissipative state preparation has been extensively studied in the context of quantum Gibbs sampling. The central idea is to engineer a dissipative quantum dynamics that can be simulated efficiently while rapidly driving the system toward the target Gibbs state~\cite{KB2016}. This approach is closely analogous to classical Markov-chain sampling, where a suitably designed stochastic dynamics converges to a prescribed stationary distribution. Early theoretical studies of dissipative Gibbs sampling largely relied on
the Davies generator, which describes a quantum system weakly coupled to a thermal
bath~\cite{Davies1974,Davies1976}. For non-commuting many-body
systems, however, the resulting dynamics can be nonlocal and
difficult to implement efficiently~\cite{CKBG2023}. Recent constructions address these
implementation difficulties with quasi-local Gibbs samplers that
satisfy Kubo--Martin--Schwinger (KMS) detailed balance and can be
simulated efficiently on quantum computers~\cite{CKG2023,DLL2024,CKBG2025}. Furthermore, several studies have established an upper bound \reviewedit{on} the convergence time, often referred to as the \emph{mixing time}, for different classes of Hamiltonians, including high-temperature or weakly interacting Hamiltonians~\cite{RFA2024rapid,SMBB2025rapid}, and one-dimensional local Hamiltonians~\cite{BC2025}. \reviewedit{Together, these results give} a theoretical guarantee for the end-to-end efficiency of dissipative Gibbs sampling for a wide range of physically relevant systems.

Although the efficient implementation of Davies generators can be challenging, their structure is particularly amenable to theoretical analysis of mixing times. For commuting Hamiltonians, the mixing behavior of Davies generators is by now relatively well understood, with general criteria linking the decay of equilibrium correlations to convergence rates~\cite{KB2016,KACR2025,StengeleEtAl2026CSS}. In particular, \emph{rapid mixing}, characterized by a mixing time that grows only polylogarithmically with the system size at fixed accuracy, has been established for one-dimensional commuting local spin systems at any fixed finite temperature~\cite{BCGLPR2024,KACR2025}, for higher-dimensional local spin systems at sufficiently high temperatures~\cite{KACR2025}, and for certain classes of quantum error-correcting-code Hamiltonians~\cite{StengeleEtAl2026CSS}. Extending these results to non-commuting Hamiltonians, however, presents significant difficulties. Rapid mixing bounds \reviewedit{for} the KMS Gibbs
samplers \cite{CKG2023,DLL2024} have only been established in
\reviewedit{perturbative regimes} such as high-temperature or weak-interaction regimes~\cite{RFA2024rapid,SMBB2025rapid}.
Beyond these \reviewedit{perturbative} regimes, even in one dimension, the quantum Gibbs sampler is only known to exhibit \emph{fast mixing}, with a recent
spectral-gap analysis providing a polynomial upper bound on the mixing
time~\cite{BC2025}. This raises a natural question:

\emph{Can one establish rapid mixing of the KMS detailed-balance Lindbladian for general one-dimensional non-commuting local spin systems at every fixed finite temperature? Or is there a fundamental obstruction to rapid mixing in this setting?}

We resolve this problem by establishing the first rapid mixing for general one-dimensional non-commuting local spin systems at every fixed
finite temperature. Specifically, we prove that, for a one-dimensional local spin system of $n$ sites with uniformly bounded nearest-neighbor interactions and a fixed local dimension, the KMS Gibbs sampler mixes to trace-norm error $\epsilon$ in time $\mathcal{O}(\mathrm{polylog}(n)+\log(1/\epsilon))$. An informal statement of our main result is as follows.

\begin{theorem}[Rapid mixing in one dimension, informal]\label{thm:main-1d}
For a one-dimensional spin system of $n$ sites with uniformly bounded nearest-neighbor
interactions and a fixed local dimension, at any fixed finite temperature and from any initial state,
the sampler mixes to trace-norm error $0<\epsilon<1$ in time
\[
 t_{\mathrm{mix}}(\epsilon)
 =O\!\left(\log^4(n)+\log(1/\epsilon)\right).
\]
\end{theorem}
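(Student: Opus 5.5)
The plan is to work in the KMS-weighted Hilbert space $L^2(\rho_\beta)$ with inner product $\ip{X}{Y}=\Tr(X^\dagger \rho_\beta^{1/2} Y \rho_\beta^{1/2})$. In this space the Lindbladian $\mathcal L$ of~\cite{CKG2023} is self-adjoint and negative semidefinite. For an initial state $\rho$ I write $\rho=\rho_\beta^{1/4}(I+X)\rho_\beta^{1/4}$ (suitably symmetrized), so that the trace-norm error is controlled by $\norm{X}_{\rho_\beta}$. Here $\norm{X}^2_{\rho_\beta}$ is the sandwiched R\'enyi-2 (or $\chi^2$) divergence, which is at most $\|\rho_\beta^{-1}\| \le e^{O(\beta n)}$. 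A pure gap argument, with gap $\lambda=\Omega(1)$ (or $1/\mathrm{polylog}$) from~\cite{BC2025}, then gives $t=O(n+\log(1/\epsilon))$. To remove the $n$, I need a decay rate that is large when $X$ is large.

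The key inequality I aim for is a \emph{norm-dependent Poincar\'e inequality}. First I decompose observables by Pauli weight (or locality), using $P_{\le k}$ for the projection onto the span of operators of weight at most $k$, adapted to the $\rho_\beta$ inner product. I will prove two estimates.
\begin{itemize}
\item[(i)] A counting/hypercontractive bound: the low-weight part satisfies $\norm{P_{\le k}X}_{\rho_\beta}\le C^k \cdot\|X\|_{\text{op-like}}$. Because of the normalization $\Tr\rho=1$, this forces $\norm{P_{\le k}X}\lesssim n^{k}e^{O(k)}$ in a suitable sense. Hence if $\norm{X}^2\ge M$, a fraction of $X$ of order $1-e^{O(k)}/M$ lies in the high-weight sector.
\item[(ii)] A local gap estimate in one dimension: for $X$ orthogonal to low-weight operators, $-\ip{X}{\mathcal L X}\ge c\,k^{a}\norm{X}^2$ up to polylog losses. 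Intuitively, each of the $\gtrsim k$ sites in the support dissipates independently, and the one-dimensional quasi-locality of $\mathcal L$ gives approximate tensorization over blocks of size $\mathrm{polylog}(n)$.
\end{itemize}

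Combining (i) and (ii) with $k\sim\log\norm{X}^2$ gives $-\frac{d}{dt}\norm{X_t}^2 \ge c\,(\log\norm{X_t}^2)^{a}\,\norm{X_t}^2/\mathrm{polylog}(n)$. Integrating this differential inequality drives $\log\norm{X_t}^2$ from $O(n)$ down to $O(1)$ in time $\mathrm{polylog}(n)$; this is the double-exponential phase. After that, the ordinary gap bound gives $\log(1/\epsilon)$. The exponent $4$ should come from the block size $\log n$ raised to the powers lost in the gap of~\cite{BC2025} and in the tensorization.

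The main obstacle is step (ii). The generator is only quasi-local, and Pauli weight is not preserved by $\mathcal L$ or by the $\rho_\beta$-weighting. So I must show that high-weight components dissipate at a rate that grows with weight, despite the non-commutativity. I would first try a block decomposition into intervals of length $\ell=\Theta(\log n)$. The approach is to use the local gaps from~\cite{BC2025} on each block, together with the exponential decay of correlations of one-dimensional Gibbs states, and to run a martingale/approximate-tensorization argument. This argument should show that the Dirichlet form dominates the sum over blocks of local variances, and that a high-weight $X$ has nontrivial local variance on many blocks. Truncation errors from quasi-locality are controlled by choosing $\ell$ logarithmic.
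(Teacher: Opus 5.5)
Your architecture is the paper's: KMS-project onto the low-weight space $V_{\le s}$, cap the low-weight part using normalization, show the remainder dissipates at a weight-dependent rate, integrate the resulting inequality for $\log(1+F)$ (the double-exponential phase), and finish with the $\Omega(1)$ gap. Two steps of the reduction do not go through as written.

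First, the cap. Use the relative density $A=\rho_\beta^{-1/2}(\eta-\rho_\beta)\rho_\beta^{-1/2}$; your $\rho_\beta^{1/4}$ sandwich is off. Normalization then enters through $\norm{Q_sA}_{\rho_\beta}^2=\ip{Q_sA}{A}=\Tr\bigl((Q_sA)^\dagger(\eta-\rho_\beta)\bigr)\le2\norm{Q_sA}_\infty$. The achievable cap is $n^{\Theta(s)}$, not $e^{O(s)}$: already at $\beta=0$ with $q=1$, the product state $|0\rangle\langle0|^{\otimes n}$ has $\norm{Q_sA}_{\rho_\beta}^2=\sum_{j=1}^s\binom nj$. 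So the cutoff must be $k\sim\log F/\log n$. With your choice $k\sim\log F$, the cap is $F^{\Theta(\log n)}\gg F$. The four logarithms are then $\log^2 n$ from the weight--dissipation bound, one $\log n$ from this cutoff, and one from driving $\log(1+F)$ down from $O(n)$; the gap costs nothing, so your account of the exponent is off.

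Second, (ii) stated as a Poincar\'e inequality on $V_{\le k}^\perp$ does not by itself bound $\mathcal E_{\mathcal L}(A,A)$. The reason is that $A$ has a low-weight part and the Dirichlet form has cross terms between the two pieces. You need one of two things:
\begin{itemize}
\item the Jackson-type form the paper proves, $\norm{(I-Q_s)A}_{\rho_\beta}^2\le\frac{C\log^2 n}{s+1}\mathcal D_{\rho_\beta}(A)$ for \emph{all} $A$; or
\item the triangle inequality for the seminorm $\mathcal D_{\rho_\beta}^{1/2}$, together with the crude bound $\mathcal D_{\rho_\beta}(Y)\le O(n)\norm{Y}_{\rho_\beta}^2$ on the low-weight part, whose factor $n$ is absorbed by the $n^{O(k)}$ slack.
\end{itemize}

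The substantive gap is that both key estimates are asserted without a mechanism, and (ii) is the heart of the result. Approximate tensorization gives only the gap, i.e.\ a bound of $\norm{X-\Tr(\rho_\beta X)I}_{\rho_\beta}^2$ by $\sum_i\norm{(I-\mathbb E_{B_i})X}_{\rho_\beta}^2$. Your claim that ``a high-weight $X$ has local variance on many blocks'' needs more: an approximately orthogonal decomposition $X=\sum_S\widehat X_S$ with $\operatorname{supp}\widehat X_S\subseteq\bigcup_{i\in S}U_i$ and $\norm{\sum_{|S|>k}\widehat X_S}_{\rho_\beta}^2\le\frac{C\ell}{k+1}\bigl[\mathcal D_{\rho_\beta}(X)+\norm X_{\rho_\beta}^2\bigr]$. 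The ANOVA construction from the exact $\mathbb E_{B_i}$ fails, because these maps are neither local nor mutually commuting: applying one destroys the identity factors imposed by another. The paper resolves this in three steps:
\begin{itemize}
\item It replaces $\mathbb E_{B_i}$ by strictly local, hence exactly commuting, projections $\mathcal T_i$. These are built from the local Gibbs state on buffered intervals $U_i=A_iB_iC_i$ of length $\Theta(\log n)$, chosen so that $\norm{\mathcal T_i-\mathbb E_{B_i}}_{\rho_\beta\to\rho_\beta}\le 1/(100n)$. Accuracy $O(1/n)$, not merely a small constant, is needed because almost orthogonality is iterated over all $m$ blocks with factors $(1\pm\xi)^m$.
\item A second decomposition, averaging over $\Lambda_j=B_jD_jB_{j+1}$, removes the residual dependence on every buffer $D_j=C_jA_{j+1}$. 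This is legitimate only where the component is already the identity on $B_j\cup B_{j+1}$.
\item A separate bound $\sum_P\mathcal D_{\rho_\beta}(X_P)\le C\ell\bigl[\mathcal D_{\rho_\beta}(X)+\norm X_{\rho_\beta}^2\bigr]$ supplies one factor of $\ell$; the conversion from blocks to sites supplies the other.
\end{itemize}
You also never need to truncate the quasi-local $\mathcal L$. Fact~\ref{cond:framework-comparison} lower-bounds $\mathcal E_{\mathcal L}$ by the commutator form built from exactly local single-site Paulis, so only the correlations of $\rho_\beta$ must be controlled.

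For (i), counting or hypercontractivity is not enough. Pauli strings are not KMS-orthogonal, and comparing with the flat Hilbert--Schmidt norm loses $e^{O(n)}$. The paper proves $\norm{Y}_\infty\le(Cn)^{2s}\norm{Y}_{\rho_\beta}$ on $V_{\le s}$ by Lagrange interpolation in the weight operator $\mathcal N$, followed by a Pauli twirl on $\operatorname{supp}P$. This uses only the bound $\norm{UX}_{\rho_\beta},\norm{XU}_{\rho_\beta}\le M_\beta\norm{X}_{\rho_\beta}$ for single-site Paulis $U$.
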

The formal statement of Theorem~\ref{thm:main-1d} is given in Theorem~\ref{thm:main_result} of Section~\ref{sec:main-results}. In Theorem~\ref{thm:main-1d}, we consider the same Gibbs sampler as in~\cite{BC2025}, namely, a KMS detailed-balance Lindbladian with single-site Pauli coupling operators. For comparison,~\cite{BC2025} established the previously best-known mixing-time bound of $\mathcal{O}\!\left(n+\log(1/\epsilon)\right)$.

Conventionally, for Lindbladian dynamics satisfying KMS detailed balance, convergence in the $\chi^2$ divergence is controlled by the spectral gap with respect to the KMS inner product, which governs the slowest nontrivial decay mode. However, the worst-case initial $\chi^2$ divergence from equilibrium can grow exponentially with the system size. Consequently, even a system-size-independent spectral gap generally yields only a mixing-time bound that scales linearly with the system size. Establishing rapid mixing therefore typically requires a different measure of convergence that simultaneously admits a more favorable initial bound and a system-size-independent decay rate. To the best of our knowledge, three such approaches have been employed in the literature: (i) relative entropy~\cite{CRF2020,BCGLPR2024,KACR2025,StengeleEtAl2026CSS,StengeleEtAl2026QD}, (ii) oscillator norms~\cite{RFA2024rapid,SMBB2025rapid}, and (iii) quantum Wasserstein distances~\cite{BLMT2025,BT2026}. Relative entropy has proved particularly effective for establishing rapid mixing of Davies generators associated with commuting Hamiltonians. The key step is to establish a modified logarithmic Sobolev inequality (MLSI), which directly implies exponential decay of the relative entropy toward equilibrium. For noncommuting Hamiltonians and KMS detailed-balance Lindbladians, however, establishing suitable MLSI bounds remains challenging. Oscillator norms and quantum Wasserstein distances provide alternative routes to rapid mixing in the noncommuting setting. Existing results based on these approaches, however, typically require the influence between different sites to be sufficiently weak compared with the local relaxation rate, restricting their applicability to perturbative regimes such as sufficiently high temperatures or weak interactions.

In this work, we develop a new framework that goes beyond the three approaches described above. Unlike previous rapid-mixing analyses, our approach remains entirely within the conventional KMS $L^2$-space, namely, the Hilbert space of operators equipped with the KMS inner product. The key observation is that a state with a large $\chi^2$ divergence from the target state often has a substantial component in rapidly decaying spectral subspaces, whose decay rates can be much faster than the slowest rate determined by the spectral gap. When the system is far from equilibrium, these fast-decaying components can dominate the deviation from equilibrium, resulting in stronger effective dissipation and substantially faster convergence than what is captured by the spectral gap alone. This mechanism differs fundamentally from standard spectral-gap analyses, which uniformly control all initial states using only the slowest nontrivial decay rate.

To prove Theorem~\ref{thm:main-1d}, we provide sufficient conditions for rapid mixing of KMS detailed-balance Lindbladians in terms of the physical properties of the Gibbs state. We then apply this framework to general one-dimensional local spin
systems at every fixed finite temperature by rigorously showing that these conditions are satisfied. The condition verification process relies on well-established structural results in~\cite{BC2025}, including a comparison of Dirichlet forms, a size-independent global/local spectral gap, and exponential clustering in the Gibbs state. The framework can potentially be applied more generally to establish rapid mixing in other settings where the same structural conditions hold.


\subsection{Related work}\label{sec:related-work}

\paragraph*{Mixing times of quantum Gibbs samplers.}
The mixing-time analysis of dissipative quantum dynamics is most mature
for commuting Hamiltonians. A foundational insight in this regime
connects the spectral gaps of Davies generators to spatial correlations
in the underlying Gibbs state~\cite{KB2016}. Building on this connection,
modified logarithmic Sobolev inequality (MLSI) methods based on spatial
mixing and entropy factorization~\cite{BCLPR2021,CRF2020} have yielded
rapid mixing for one-dimensional commuting local spin systems at every fixed
finite temperature, as well as for higher-dimensional local spin systems at high
temperature~\cite{BCGLPR2023,BCGLPR2024,KACR2025}.
Further developments of MLSI methods include comparison bounds for
shallow-circuit parent Hamiltonians~\cite{BCL2024}, mixing criteria that
combine weak MLSI with transport inequalities~\cite{CGKR2025}, and
size-independent MLSI bounds for classes of two-dimensional quantum
code Hamiltonians, including the toric code and other Abelian quantum
double models, at all positive
temperatures~\cite{StengeleEtAl2026CSS,StengeleEtAl2026QD}.

For non-commuting Hamiltonians, the construction of quasi-local
samplers satisfying KMS detailed balance~\cite{CKG2023,DLL2024}
enabled a similar analytical program. First, in the perturbative regime of high temperature or weak interactions, meaning that the \reviewedit{inverse temperature} or interaction strength is smaller than a constant depending on the \reviewedit{local parameters} of the Hamiltonian but independent of the system size, several fast-mixing and rapid-mixing results have been established. At high temperature,
system size independent spectral gaps were
proved~\cite{RFA2024gap}, giving polynomial mixing times.
Rapid mixing in this regime was subsequently established using
oscillator norms~\cite{RFA2024rapid} and quantum Wasserstein
distances~\cite{BLMT2025}.
Building on the Wasserstein approach, \cite{BT2026} constructs a
sampler that mixes rapidly at high temperature even with arbitrary
on-site fields.
Weakly interacting systems provide another setting with fast-mixing
results for fermions~\cite{TZ2025,SMBB2025gap} and rapid-mixing results
for both spin and fermionic systems~\cite{SMBB2025rapid}.~Beyond these perturbative regimes, for general one-dimensional local spin systems, \cite{BC2025} proved a system-size independent spectral gap for the sampler of~\cite{CKG2023} at every fixed finite temperature. In summary, although rapid mixing has been established for commuting Hamiltonians and for non-commuting Hamiltonians in perturbative regimes, it remains an open question whether rapid mixing holds for general non-commuting Hamiltonians beyond these perturbative regimes. Our work resolves this question in the affirmative for one-dimensional local spin systems at every fixed finite temperature.

In a different direction, recent work has studied the spectral gaps of
Davies generators through comparison with classical chains~\cite{BGSSSV2025} and by
exploiting model symmetries, as in the mean-field Heisenberg
ferromagnet~\cite{BBLRS2026}.

\paragraph*{Algorithms for one-dimensional Gibbs states.}
It is well known that efficient classical and quantum procedures are
available for approximating and preparing one-dimensional Gibbs states
at any fixed finite temperature.
Classical tensor-network algorithms construct matrix product
approximations~\cite{MSVC2015}, with a quasilinear-time construction
in~\cite{KAA2021}. For translation-invariant chains, transfer-matrix
methods also give efficient classical algorithms for the thermodynamic
free energy and local Gibbs marginals~\cite{FFS2023}.
More recently, \cite{BCP2026} gives an exact decomposition into a mixture
of matrix product states with bond dimension independent of $n$, together
with a polynomial-time classical algorithm for approximately sampling
this ensemble.
On the quantum side, \cite{BK2019} constructs preparation circuits from
local recovery channels, and \cite{BC2025} obtains polylogarithmic-depth
preparation using an adiabatic construction.

\paragraph*{Connections to classical mixing theory.} Our framework is closely related to the spectral-profile method in classical Markov-chain mixing theory~\cite{GMT2006}. The classical spectral-profile method goes beyond the spectral gap by establishing stronger dissipation estimates for functions localized on sets of small stationary measure, leading to accelerated convergence when the distribution is far from equilibrium. Our work develops a quantum analogue of this underlying principle through a different decomposition in the KMS $L^2$ space. Specifically, we decompose operators into the subspace spanned by low-weight Pauli strings and its KMS-orthogonal complement. We establish stronger dissipation estimates for the complementary component, while using state positivity and normalization to control the contribution of the low-weight component. Together, these estimates show that states far from equilibrium necessarily experience stronger effective dissipation than that predicted by the spectral gap alone, providing a quantum counterpart of the enhanced-dissipation mechanism underlying classical spectral-profile methods.

\paragraph*{Concurrent work.}
Four concurrent works obtain further results for one-dimensional Gibbs
states at every fixed finite temperature.
\cite{PSBM2026} proves an exact decomposition into mixtures of product
states on contiguous blocks of bounded size. This gives constant-depth
local unitary preparation with polynomial-time classical preprocessing. \cite{Kim2026},~\cite{cps26}, and~\cite{rf26} proves $O(\log(n/\epsilon))$ mixing for a different
sampler: a block heat-bath dynamics whose updates trace out a contiguous
block of constant size and apply the Petz recovery map. Both \cite{Kim2026} and \cite{cps26} use
contraction in quantum Wasserstein distance and also yields
polylogarithmic-depth preparation circuits. In contrast, our work directly addresses the CKG sampler~\cite{CKG2023},
proving rapid mixing through spectral analysis in the KMS $L^2$ space. A more detailed comparison of our work with these concurrent works is given in the next subsection. After the first version of this work was posted on arXiv, Gao, Guo, and Wang~\cite{GGW2026} proved a complete modified logarithmic Sobolev inequality with a system-size-independent constant for the CKG sampler on one-dimensional spin chains with bounded finite-range interactions at every fixed positive temperature, yielding $O(\log n+\log(1/\epsilon))$ mixing in both trace norm and diamond norm. This gives a stronger mixing-time bound than our result, but the proof is based on a different approach that does not rely on spectral analysis in the KMS $L^2$ space.

\subsection{Implications}

Although our main result focuses on one-dimensional systems, our spectral analysis captures a general physical mechanism for rapid equilibration. The resulting framework can potentially be applied to a broader range of systems and related problems. We highlight three implications below.

\paragraph{A new and general mechanism for rapid mixing.} Our framework provides a simple approach to rapid mixing. Unlike approaches based on relative entropy, oscillation norms, or quantum Wasserstein distance, it relies only on spectral analysis in the KMS $L^2$ space. This also gives a direct physical interpretation of the relaxation process: faster-decaying modes drive the rapid initial approach to equilibrium, while the slow modes govern the final relaxation. The enhanced convergence thus reflects how different parts of the spectrum contribute at different stages of the dynamics.

For mixing-time analysis, this approach also allows us to treat more general dynamics because our dissipation estimates rely on a global Dirichlet-form comparison rather than local estimates like bounds requiring inter-site influence to be weak relative to on-site relaxation. For example, we can include a coherent term $-i[H,\cdot]$ in the dynamics without changing our dissipation estimates, since the corresponding unitary evolution preserves the Gibbs state and the KMS norm. Moreover, the comparison is not intrinsically tied to one-dimensional geometry, so the framework can potentially apply to more general Hamiltonians, including nonlocal systems such as the SYK model~\cite{CZZ2026}.

\paragraph{Rapid mixing of CKG-type Lindbladian beyond perturbative regimes.} Our result establishes rapid mixing of the CKG sampler~\cite{CKG2023} for general one-dimensional noncommuting local spin systems at any fixed finite temperature, extending beyond the previously established perturbative regimes of high temperature or weak interactions~\cite{RFA2024rapid,SMBB2025rapid}. A key practical advantage of the CKG sampler is that it can be implemented directly from the Hamiltonian and simple coupling operators, without requiring prior knowledge of the target Gibbs state. In contrast, other rapidly mixing samplers, such as Petz-recovery-based methods~\cite{Kim2026,cps26,rf26}, often require additional information about the Gibbs state, such as the ability to prepare local Gibbs marginals~\cite{Kim2026}.

Our work demonstrates that the robustness and simplicity of the CKG sampler do not come at the expense of slower convergence. In particular, we resolve the previously open question of whether the CKG sampler admits rigorous rapid-mixing guarantees beyond perturbative regimes.

\paragraph{Connections to other mixing metric.} The result developed in this work is also connected with other mixing metrics and classical mixing theory. Our decay estimate can also be expressed as exponential convergence in sandwiched R\'enyi-2 divergence. Together with the hypercontractivity framework of~\cite{GW2026}, this estimate yields a modified logarithmic Sobolev inequality with rate $\Omega(\log^{-4}(n))$. We will discuss these connections in Appendix~\ref{app:renyi-log-sobolev}. Thus, our result provides a new route to establishing MLSI for KMS-DBC Lindbladians, which is often difficult to prove directly.



\section{Preliminaries}\label{sec:kms-framework}
In this section, we introduce the class of Hamiltonians considered in this work, review Kubo-Martin-Schwinger (KMS) detailed-balance Lindbladian dynamics, and discuss the standard mixing-time analysis for such dynamics.

\begingroup
\paragraph*{Notation}
Let $\mathcal H$ be a finite-dimensional Hilbert space and let $\rho$
be a full-rank density operator on $\mathcal H$. We write
$\mathcal B(\mathcal H)$ for the space of \reviewedit{bounded} linear operators on $\mathcal H$,
and $\norm{\cdot}_1$, $\norm{\cdot}_2$, and $\norm{\cdot}_\infty$ for the
trace, Hilbert--Schmidt, and operator norms, respectively.
For a system of $n$ sites, each containing a fixed number $q$ of qubits,
the local dimension is $d=2^q$ and $\mathcal H=(\mathbb C^d)^{\otimes n}$.
For $A\subseteq[n]:=\{1,\ldots,n\}$, let
$\mathcal H_A=\bigotimes_{i\in A}\mathbb C^d$ denote the Hilbert space
of the subsystem $A$, with operator space $\mathcal B(\mathcal H_A)$.

\endgroup
\subsection{One-dimensional local spin systems}\label{sec:hamiltonian-setup}

We consider a one-dimensional open chain of $n>1$ sites, with a Hamiltonian of finite interaction range and uniformly bounded interaction strength:
\begin{equation}\label{eqn:local_hamiltonian}
 H=\sum_{\substack{Z\subseteq[n]\\\operatorname{diam}(Z)\leq r}}h_Z,
 \qquad
 \sup_{i\in[n]}\sum_{i\in Z}\norm{h_Z}_\infty\leq J,
\end{equation}
where $h_Z$ is \reviewedit{a Hermitian operator supported on the sites in} $Z$, meaning that $h_Z=\mathbb{I}_{[n]\setminus Z}\otimes H_Z$, where $H_Z\in\mathcal B(\mathcal H_Z)$ is Hermitian. Here, both the interaction range $r$ and strength bound $J$ are independent of $n$.

For the purpose of this work, we will only consider the nearest-neighbor interaction model with unit-strength interactions.
\begin{equation}\label{eqn:near_neighbor_interaction}
 H=\sum_{i=1}^{n-1}h_{i,i+1},
 \qquad \norm{h_{i,i+1}}_\infty\leq1,
\end{equation}
which corresponds to \reviewedit{the case} with $r=1$ and $J\leq2$. According to~\cite[Section I.1]{BC2025}, we can group neighboring sites and rescale the interaction strength to reduce~\eqref{eqn:local_hamiltonian} to the nearest-neighbor interaction model~\eqref{eqn:near_neighbor_interaction} by appropriately redefining $q$, the local dimension $d$, and \reviewedit{inverse temperature} $\beta$. \reviewedit{In the rest of the paper}, all the parameters $q$, $d$, and $\beta$ are considered fixed and independent of $n$. Without specific mention, the constants \reviewedit{appearing} in all subsequent analysis depend on these fixed parameters.

\subsection{KMS detailed-balance Lindbladian dynamics}

For a Hamiltonian $H$ given in~\eqref{eqn:near_neighbor_interaction}, and inverse temperature $\beta>0$, we define the thermal state as $\rho_\beta=\frac{e^{-\beta H}}{\Tr(e^{-\beta H})}$. The Lindbladian dynamics that prepares this thermal state often \reviewedit{takes the form}
\[
\frac{\mathrm{d}\rho}{\mathrm{d}t} = \mathcal L(\rho):= \reviewedit{-i[G,\rho]} + \sum_{\alpha} \left( L_\alpha \rho L_\alpha^\dagger - \frac{1}{2} \{ L_\alpha^\dagger L_\alpha, \rho \} \right)\,,
\]
where \reviewedit{the coherent term} $G$ and \reviewedit{jump operators} $L_{\alpha}$ \reviewedit{are chosen to ensure} that $\mathcal{L}(\rho_\beta)=0$, meaning $\rho_\beta$ is a stationary state of the Lindbladian dynamics. One way to achieve this is through the KMS detailed-balance condition, which ensures that the dynamics is reversible with respect to the thermal state. We put the rigorous definition of the KMS detailed-balance condition in Section~\ref{sec:KMS_geometry_mixing}.

The first efficiently simulable KMS detailed-balance Lindbladian for general local Hamiltonians was introduced in~\cite{CKG2023}. It takes the form
\begin{equation}\label{eq:setup-gibbs-generator}
 \mathcal L(\rho)=-i[G,\rho]
 +\sum_i\int_{\mathbb R}\gamma(\omega)
 \left(J_i(\omega)\rho J^\dagger_i(\omega)
 -\tfrac12\{J_i(\omega)^\dagger J_i(\omega),\rho\}\right)d\omega.
\end{equation}
Here, the jump operators are defined by
\[
 J_i(\omega)=\frac1{\sqrt{2\pi}}
 \int_{\mathbb R} f_\beta(t)e^{-i\omega t}
 e^{itH}A_i e^{-itH}\,dt,
 \qquad
 f_\beta(t)=\frac{(2/\pi)^{1/4}}{\sqrt\beta}e^{-t^2/\beta^2},
\]
where the family of coupling operators $\{A_i\}$ is chosen to be closed under adjoints. The weight function is given by $\gamma(\omega)=e^{-(\beta\omega+1)^2/2}$, and the coherent correction $G$ is defined in the energy eigenbasis, $H|e\rangle=e|e\rangle$, through
\[
 \langle e|G|e'\rangle
 =\frac{i}{2}\tanh\!\left(\frac{\beta(e-e')}{4}\right)
 \sum_i\int_{\mathbb R}\gamma(\omega)
 \langle e|J_i(\omega)^\dagger J_i(\omega)|e'\rangle\,d\omega.
\]
Equivalently, writing $A_i(t):=e^{itH}A_i e^{-itH}$,
the time-domain formula of~\cite[Corollary III.1]{CKG2023} gives
\[
 G=\sum_i\int_{\mathbb R}b_1(t)e^{-i\beta Ht}
       \left(\int_{\mathbb R}b_2(s)
       A_i(\beta s)^\dagger A_i(-\beta s)\,ds\right)
       e^{i\beta Ht}\,dt,
\]
where the smooth, rapidly decaying kernels are
\[
 \begin{aligned}
 b_1(t)=-2\sqrt\pi\,e^{1/8}
       \int_{\mathbb R}\frac{\sin(s)e^{-2s^2}}
                                  {\cosh(2\pi(t-s))}\,ds,\quad b_2(s)=\pi^{-3/2}e^{-4s^2-2is}.
 \end{aligned}
\]
Compared with Davies generators, the jump operators in~\eqref{eq:setup-gibbs-generator} are much easier to block encode since $f_\beta(t)$ is a smooth, rapidly decaying function. This ensures the efficient simulation of the corresponding Lindbladian dynamics on a quantum computer with \reviewedit{implementation details given} in~\cite{CKG2023}.

Later, in~\cite[Section 3.1]{DLL2024}, the authors introduced a more general construction of KMS detailed-balance Lindbladians using a finite family of jump operators and flexible filter functions. It takes the form
\begin{equation}\label{eq:setup-general-gibbs-generator}
 \mathcal L(\rho)=-i[G,\rho]
 +\sum_i\left(L_i\rho L_i^\dagger
 -\tfrac12\{L_i^\dagger L_i,\rho\}\right).
\end{equation}
Here, the jump operators are defined by
\[
 L_i=\int_{\mathbb R}f_i(t)e^{itH}A_i e^{-itH}\,dt,\quad f_i(t)=\frac1{2\pi}\int_{\mathbb R}
 q_i(\nu)e^{-\beta\nu/4}e^{-i\nu t}\,d\nu,
\]
where the coupling operator $A_i$ is self-adjoint and the weighting functions satisfy $q_i(-\nu)=\overline{q_i(\nu)}$. The filters are chosen so that $q_i(\nu)e^{-\beta\nu/4}$ and $f_i(t)$ are integrable. The coherent correction $G$ is defined in the energy eigenbasis through $\langle e|G|e'\rangle
 =\frac{i}{2}\tanh\!\left(\frac{\beta(e-e')}{4}\right)
 \sum_i\langle e|L_i^\dagger L_i|e'\rangle$.

In our work, we will only focus on the case of~\eqref{eq:setup-gibbs-generator} to illustrate the main ideas and results. We consider the same setting as in~\cite{BC2025} with $\{A_i\}$ being the set of all single-site Pauli operators. Specifically, let
\[
 \mathcal P_q=\{I_2,\sigma^x,\sigma^y,\sigma^z\}^{\otimes q}
 =\{P_u:0\leq u\leq d^2-1\},\qquad P_0=I_d,
\]
be the Hermitian Pauli basis of one site, where $\sigma^x,\sigma^y,\sigma^z$ are the usual single-qubit Pauli matrices. We choose the coupling family
\begin{equation}\label{eq:setup-pauli-couplings}
 \{A_{i,u}:1\leq i\leq n,\ 1\leq u\leq d^2-1\},\quad A_{i,u}=\sigma_{i,u}
 :=I_d^{\otimes(i-1)}\otimes P_u\otimes I_d^{\otimes(n-i)}.
\end{equation}
Here $i$ labels the site and $u$ labels its $d^2-1=4^q-1$ nonidentity Pauli operators. Each coupling has operator norm one. The coupling index in~\eqref{eq:setup-gibbs-generator} is thus the pair $(i,u)$. The extension of our analysis to other classes of Lindblad generators will be discussed later after Theorem~\ref{thm:main_result}.

\subsection{KMS geometry, spectral gap, and mixing time analysis}\label{sec:KMS_geometry_mixing}

\paragraph{KMS inner product and KMS detailed-balance condition.}

The Kubo--Martin--Schwinger (KMS) inner product and its associated norm are
\[
 \langle X,Y\rangle_\rho=\Tr(X^\dagger\rho^{1/2}Y\rho^{1/2}),
 \qquad
 \norm X_\rho=\norm{\rho^{1/4}X\rho^{1/4}}_2.
\]
For a Schr\"odinger-picture Lindblad generator $\mathcal L$, let
$\mathcal L^\dagger$ denote its Hilbert--Schmidt adjoint.
We say that $\mathcal L$ satisfies
\emph{KMS detailed balance} with respect to $\rho$ if $\mathcal{L}^\dagger$ is self-adjoint with respect to the KMS inner product, i.e.,
\[
 \langle X,\mathcal L^\dagger Y\rangle_\rho
 =\langle\mathcal L^\dagger X,Y\rangle_\rho
 \qquad\text{for all }X,Y\in\mathcal B(\mathcal H)\,.
\]
This ensures the stationarity of $\rho$, meaning that $\mathcal L(\rho)=0$.

\paragraph{Dirichlet form and \reviewedit{spectral} gap.}

For a KMS detailed-balanced $\mathcal L$, the operator $-\mathcal L^\dagger$ is self-adjoint and nonnegative with respect to the KMS inner product. Define its Dirichlet form by
\begin{equation}\label{eqn:Dirichlet_form}
  \mathcal E_{\mathcal L}(X,Y):=\langle X,-\mathcal L^\dagger Y\rangle_\rho\,.
\end{equation}
If $\rho$ is the unique stationary state, the spectral gap is given by
\begin{equation}\label{eq:setup-gap}
 \lambda_{\rm gap}\left(\mathcal{L}\right)=\inf_{\substack{X\ne0\\\Tr(\rho X)=0}}
 \frac{\mathcal E_{\mathcal L}(X,X)}{\norm X_\rho^2}>0.
\end{equation}
Equivalently, the spectral gap $\lambda_{\rm gap}\left(\mathcal{L}\right)$ is the largest constant such that $\mathcal E_{\mathcal L}(X,X)\geq\lambda_{\rm gap}\left(\mathcal{L}\right)\norm X_\rho^2$ whenever $\Tr(\rho X)=0$.

\paragraph{Mixing time from the spectral gap.}

For a KMS detailed-balanced $\mathcal L$ with unique stationary
state $\rho$, define the worst-case mixing time at accuracy
$0<\epsilon<1$ by
\[
 t_{\mathrm{mix}}(\epsilon;\mathcal{L})
 :=\inf\left\{t\geq0:
 \sup_{\eta_0\ \mathrm{density}}
 \norm{e^{t\mathcal L}(\eta_0)-\rho}_1\leq\epsilon\right\}.
\]
There is a direct connection between the spectral gap and the mixing time, as the spectral gap controls the rate of exponential decay of \reviewedit{the squared KMS norm $F$ defined below}.
For any $A$ with $\Tr(\rho A)=0$, let
$A_t=e^{t\mathcal L^\dagger}A$ and $F(t)=\norm{A_t}_\rho^2$.
The gap \eqref{eq:setup-gap} implies exponential decay:
\[
 F(t)\leq e^{-2\lambda_{\rm gap}\left(\mathcal{L}\right) t}F(0).
\]
Applying this to the state evolution $\eta_t=e^{t\mathcal L}(\eta_0)$,
with $A=\rho^{-1/2}(\eta_0-\rho)\rho^{-1/2}$,
we obtain $F(t)=\chi^2(\eta_t\Vert\rho)$, where
\[
 \chi^2(\eta\Vert\rho)
 :=\norm{\rho^{-1/4}(\eta-\rho)\rho^{-1/4}}_2^2
\]
is the KMS $\chi^2$ divergence of a state $\eta$ from $\rho$.
The bounds $\norm{\eta_t-\rho}_1^2\leq F(t)$ and
$F(0)\leq\lambda_{\min}(\rho)^{-1}-1$ therefore yield
\begin{equation}\label{eq:setup-gap-mixing}
 t_{\mathrm{mix}}(\epsilon;\mathcal{L})
 \leq\frac{1}{2\lambda_{\rm gap}\left(\mathcal{L}\right)}\log\frac1{\lambda_{\min}(\rho)}
       +\frac1{\lambda_{\rm gap}\left(\mathcal{L}\right)}\log\frac1\epsilon.
\end{equation}
\reviewedit{For the Gibbs state $\rho_\beta$ of the Hamiltonian in~\eqref{eqn:near_neighbor_interaction}, the bound $\log(\lambda_{\min}(\rho_\beta)^{-1})\leq n\log d+2\beta(n-1)$ gives a mixing-time upper bound that is linear in $n$ when the spectral gap is independent of the system size.} In~\cite{BC2025}, the authors showed that $\lambda_{\rm gap}\left(\mathcal{L}\right)=\Omega(1)$ in the case of~\eqref{eq:setup-gibbs-generator} and~\eqref{eq:setup-pauli-couplings}. According to the analysis above, this implies that the mixing time $t_{\mathrm{mix}}(\epsilon;\mathcal{L})=\mathcal{O}(n+\log\frac1\epsilon)$.

\section{Main result}\label{sec:main-results}
The main contribution of our work is to improve $n$ to $\operatorname{polylog}(n)$ in the mixing time bound, as stated in the following theorem.
\begin{theorem}\label{thm:main_result}
Assume $H$ takes the form of~\eqref{eqn:near_neighbor_interaction} {and $n>20$}. Given $\beta>0$ and $\mathcal{L}$ as in~\eqref{eq:setup-gibbs-generator} with~\eqref{eq:setup-pauli-couplings}, we have
\[
 t_{\mathrm{mix}}(\epsilon;\mathcal{L})=\mathcal{O}\left(\log^4(n)+\log\frac1\epsilon\right).
\]
Here, the constant in the $\mathcal{O}$ notation is independent of $n$ and $\epsilon$ but may depend on the fixed model parameters such as $\beta,q,d$.
\end{theorem}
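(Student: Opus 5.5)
We will track $F(t)=\|A_t\|_{\rho}^2=\chi^2(\eta_t\Vert\rho_\beta)$ with $A_t=\rho_\beta^{-1/2}(\eta_t-\rho_\beta)\rho_\beta^{-1/2}$ evaluated along the state evolution. We then prove a \emph{nonlinear} differential inequality of the form
\[
F'(t)\le -2\,\Phi(F(t))\,F(t),\qquad \Phi(F)\gtrsim \frac{\log F}{\mathrm{polylog}(n)}\ \text{ for } F\ge C,\qquad \Phi\ge\lambda_{\rm gap}=\Omega(1)\ \text{always.}
\]
Integrating this inequality gives the two-stage picture from the introduction. Starting from $F(0)\le e^{O(n)}$, $\log F$ decays exponentially at rate $1/\mathrm{polylog}(n)$. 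So after time $\mathrm{polylog}(n)\cdot\log n$, which is where the final $\log^4 n$ comes from, $F$ is $O(1)$. The spectral gap of \cite{BC2025} then contributes the additional $O(\log(1/\epsilon))$ term, using $\|\eta_t-\rho_\beta\|_1^2\le F(t)$.

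\textbf{Step 1: the decomposition.} Fix a weight $k$ that will be tuned later, roughly $k\sim \log F/\log\log$ or $k\sim\mathrm{polylog}(n)$. Let $V_k$ be the span of Pauli strings of weight at most $k$, and split $A=A_{\le}+A_{>}$, where $A_{\le}\in V_k$ and $A_{>}\perp_{\rho_\beta} V_k$ in the KMS inner product.

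\textbf{Step 2: the low-weight part is small.} Here we use positivity and normalization of $\eta$. For a state, $\langle X,A\rangle_{\rho}=\Tr(\rho^{1/2}X^\dagger\rho^{1/2}\rho^{-1/2}(\eta-\rho)\rho^{-1/2})$, which is essentially $\Tr(X^\dagger\eta)-\Tr(X^\dagger\rho)$ up to the modular twist. Hence the projection of $A$ onto $V_k$ is controlled by the expectation values of at most $\binom nk 4^{qk}$ bounded operators. Pulling the modular twist through requires exponential clustering and locality of $\rho^{\pm1/4}(\cdot)\rho^{\mp1/4}$ in 1D. Bounding the Gram matrix of $V_k$ in the KMS metric from below then gives
\[
\|A_{\le}\|_\rho^2\le e^{O(k\log n)} .
\]
Consequently, whenever $F\ge e^{Ck\log n}$, most of the KMS mass lies in $A_{>}$.

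\textbf{Step 3: enhanced dissipation on $A_{>}$.} Using the Dirichlet-form comparison of \cite{BC2025} between $\mathcal E_{\mathcal L}$ and a sum of local Dirichlet forms, each with a local gap, we expect
\[
\mathcal E_{\mathcal L}(X,X)\gtrsim \frac{k}{\mathrm{polylog}(n)}\|X\|_\rho^2\quad\text{for } X\perp V_k .
\]
The heuristic is that a high-weight operator is "nontrivial" on about $k$ sites, and each such site contributes its local gap, in the spirit of a number-operator lower bound. Clustering is needed to show that orthogonality to $V_k$ in the KMS geometry still forces large weight in each local Dirichlet form. The polylog loss comes from coarse-graining into blocks of size $O(\log n)$, on which the local/global comparison holds.

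\textbf{Step 4: combine.} We have $\mathcal E(A,A)\ge \mathcal E(A_>,A_>)-\text{cross terms}$. Handling the cross terms needs either $\mathcal E$-orthogonality of the decomposition or a Cauchy–Schwarz with the smallness from Step 2. Choosing $k\asymp \log F/\log n$ then yields $\Phi(F)\gtrsim \log F/\mathrm{polylog}(n)$.

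\textbf{Main obstacle.} I expect Step 3 to be the hardest part: establishing a size-independent-per-site dissipation bound on the KMS-orthogonal complement of low-weight Paulis for a non-commuting Gibbs state. The KMS projection onto $V_k$ does not commute with the generator, and orthogonality in the $\rho_\beta$-twisted metric is not the same as Hilbert–Schmidt orthogonality. My first attempt would be to compare with the $\rho_\beta$-free weight operator via the quasi-locality of the modular maps, absorbing errors with the exponential clustering from \cite{BC2025} at block scale $\log n$.
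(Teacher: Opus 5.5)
Your architecture matches the paper's. That includes the nonlinear inequality for $F$, integrated in a double-exponential phase and then a gap phase, the KMS split $A=Q_kA+(I-Q_k)A$, the normalization bound $\norm{Q_kA}_{\rho_\beta}^2\le e^{O(k\log n)}$, and the choice $k\asymp\log F/\log n$.

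Your Cauchy--Schwarz option in Step 4 also works. We have $\mathcal E_{\mathcal L}(A_{\le},A_{\le})\le\norm{\mathcal L^\dagger}_{\rho_\beta\to\rho_\beta}\norm{A_{\le}}_{\rho_\beta}^2=O(n)\,e^{O(k\log n)}$, since KMS self-adjointness makes this operator norm the spectral radius, which is $O(n)$. This is negligible once the constant in $k$ is small. The paper never meets cross terms, because its weight--dissipation lemma bounds $\mathcal D_{\rho_\beta}(A)$ for the whole $A$ by $\frac{s+1}{\log^2 n}\norm{(I-Q_s)A}_{\rho_\beta}^2$.

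\textbf{Main gap: Step 3.} Step 3 carries essentially all of the technical content, and you only state it as an expectation. Two ideas are missing.

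First, you never need to understand the KMS-orthogonal complement of $V_k$. By best approximation it suffices to exhibit some $Y\in V_{\le s}$ with $\norm{X-Y}_{\rho_\beta}^2\lesssim\frac{\log^2 n}{s+1}\mathcal D_{\rho_\beta}(X)$. This makes your worries about $Q_k$ not commuting with the generator, and about KMS versus Hilbert--Schmidt orthogonality, irrelevant.

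Second, such a $Y$ is $\sum_{|S|\le\lfloor s/\ell\rfloor}\widehat X_S$ for a decomposition $X=\sum_S\widehat X_S$ with the following properties: $\widehat X_S$ is supported on $\bigcup_{i\in S}U_i$, the blocks have length $\ell=O(\log n)$, and the tail obeys $\norm{\sum_{|S|>k}\widehat X_S}_{\rho_\beta}^2\le\frac{C\ell}{k+1}[\mathcal D_{\rho_\beta}(X)+\norm X_{\rho_\beta}^2]$. Building this decomposition takes several pieces:
\begin{itemize}
\item It needs \emph{exactly} commuting idempotents. These are the conditional expectations $\mathcal T_i=\mathbb E^{(U_i)}_{B_i}$ taken with respect to local Gibbs states on buffered blocks $U_i=A_iB_iC_i$.
\item Locality of $\rho_\beta$ enters only to make the $\mathcal T_i$ $1/(100n)$-close to the true KMS projections $\mathbb E_{B_i}$. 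This keeps the almost-orthogonality factors $(1\pm\xi)^m$ bounded over $m\approx n/\log n$ blocks, and then the local gap of \cite{BC2025} yields $\sum_P|P|\norm{X_P}_{\rho_\beta}^2\lesssim\mathcal D_{\rho_\beta}(X)+\norm X_{\rho_\beta}^2$.
\item A second decomposition by $\mathcal G_j$ on $B_jD_jB_{j+1}$ removes the residual buffer support. It is valid only on the subspace where $X_P$ is already the identity on $B_j$ and $B_{j+1}$.
\item Finally one needs $\sum_P\mathcal D_{\rho_\beta}(X_P)\lesssim\ell[\mathcal D_{\rho_\beta}(X)+\norm X_{\rho_\beta}^2]$.
\end{itemize}
That factor $\ell$, together with the conversion $s\approx k\ell$ from blocks to Pauli weight, pins the loss in Step 3 at $\log^2 n$. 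This is what yields $\log^4 n$; your sketch leaves the exponent open.

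Your suggested route, comparing with the $\rho_\beta$-free weight operator via quasi-locality of modular maps plus clustering, does not supply any of this. A global comparison of the KMS and Hilbert--Schmidt geometries loses factors up to $e^{\Theta(n)}$. Clustering by itself also does not produce the commuting family on which the inclusion--exclusion decomposition rests.

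\textbf{Smaller gap: Step 2.} There is no modular twist to pull through: $\ip{X}{A}=\Tr(X^\dagger\rho_\beta^{1/2}A\rho_\beta^{1/2})=\Tr(X^\dagger(\eta-\rho_\beta))$ exactly, so clustering plays no role there. The real content is the lower bound on the KMS Gram matrix of $V_k$, i.e.\ $|y_P|\le(Cn)^{O(k)}\norm{Y}_{\rho_\beta}$ for $Y=\sum_{w(P)\le k}y_PP$. The trivial route through $\lambda_{\min}(\rho_\beta)$ loses up to $e^{\Theta(n)}$ and is useless. The paper obtains the bound from three ingredients:
\begin{itemize}
\item the $n$-independent estimates $\norm{UX}_{\rho_\beta},\norm{XU}_{\rho_\beta}\le M_\beta\norm{X}_{\rho_\beta}$ for single-site Paulis $U$, which come from locality of imaginary-time evolution \cite[Lemmas II.2, A.2]{BC2025};
\item Lagrange interpolation of degree $s$ in the depolarizing weight operator $\mathcal N$, whose KMS operator norm is at most $n(1+M_\beta^2)$, to isolate each weight;
\item Pauli averaging over $\operatorname{supp}P$ to extract each coefficient.
\end{itemize}
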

We will discuss the proof of Theorem~\ref{thm:main_result} in Section~\ref{sec:spectral-criterion}.

As we will \reviewedit{discuss} in Section~\ref{sec:spectral-criterion}, the proof is different from conventional approaches, such as \reviewedit{relative entropy, oscillator norms, and Wasserstein distances}, for proving mixing time bounds. Instead, our proof starts from the
Dirichlet-form comparison and size-independent spectral gap
established in~\cite{BC2025}. The main step is to obtain stronger
dissipation estimates for
$A=\rho_\beta^{-1/2}(\eta-\rho_\beta)\rho_\beta^{-1/2}$ when
$F=\|A\|_{\rho_\beta}^2$ is large. We decompose $A$ into its KMS orthogonal projection onto the
span of low-weight Pauli strings and the complementary component.
We prove that the KMS norm of the complementary component
is controlled by the Dirichlet form, with a coefficient that
decreases as the weight cutoff increases, up to logarithmic
losses in the system size. Meanwhile, state normalization and
an operator-norm estimate on the low-weight subspace bound the
contribution of the low-weight component. Consequently, when
$F$ is large, a substantial fraction of its norm must lie outside
this subspace and undergo stronger dissipation. Combining these estimates gives the enhanced decay described in
Section~\ref{sec:spectral-criterion}, reducing the worst-case
exponentially large initial $F$ to order one in
$O(\log^4(2n))$ time. The size-independent spectral gap then
gives trace-norm accuracy $\epsilon$ in an additional
$O(\log(1/\epsilon))$ time. We summarize the proof strategy in Figure~\ref{fig:proof-strategy}.
\begin{figure}[!htbp]
\centering
\includegraphics[height=0.36\textheight,width=1\textwidth]{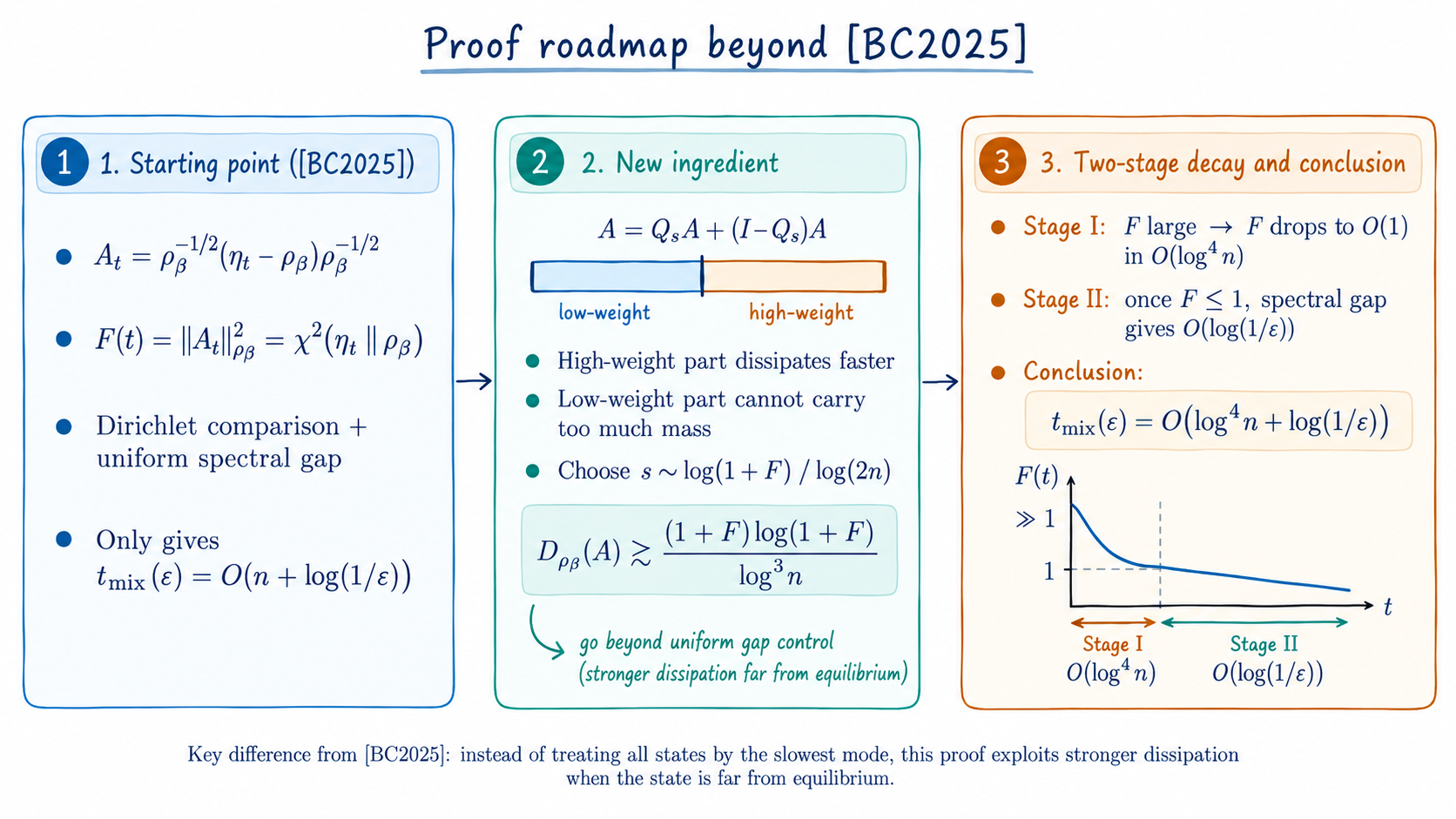}
\caption{Overview of the proof strategy for rapid mixing.}
\label{fig:proof-strategy}
\end{figure}

\begin{rem}[Extension to other Lindblad dynamics] In our analysis, the only step that relies on the specific form of the Lindblad generator is the Dirichlet-form comparison in Condition~\ref{cond:framework-comparison}. All the remaining analysis is only a property of thermal state $\rho_\beta$. Thus, for other Lindblad dynamics, as long as a similar comparison can be established, our framework can be applied to derive rapid mixing bounds. According to~\cite[Remark VI.1]{BC2025}, this might include the general class of KMS detailed-balanced Lindblad dynamics in~\eqref{eq:setup-general-gibbs-generator} when $f$ is properly chosen.
\end{rem}

\subsection{Sufficient conditions for rapid mixing}
\label{sec:spectral-criterion}
In this section, we will present \reviewedit{two known estimates} and one theorem that we will use to prove Theorem~\ref{thm:main_result}. \reviewedit{These estimates} were already established in~\cite{BC2025} but we will restate them here for completeness. We will discuss the proof of Theorem~\ref{thm:framework-enhanced-dissipation} in Section~\ref{sec:pf-enhanced-dissipation} and establish several technical lemmas that will be proved in Section~\ref{sec:proof-of-lemma}.

In proving \reviewedit{a lower bound on the spectral gap} in~\cite{BC2025}, one key observation is that the Dirichlet form of the Lindblad generator $\mathcal{L}$ can be compared with a simpler KMS \reviewedit{self-adjoint} superoperator $\mathcal{K}$, whose gap is much easier to analyze. The starting point of our analysis will also be this comparison, which is stated in Fact~\ref{cond:framework-comparison} below.
\begin{condition}[Dirichlet-form comparison]\label{cond:framework-comparison}
Consider $\mathcal L$ as in~\eqref{eq:setup-gibbs-generator} with~\eqref{eq:setup-pauli-couplings}. There exists a constant $c_{\mathrm{cmp}}>0$ that is independent of system size $n$ such that
\begin{equation}\label{eq:framework-comparison}
 \ip X{-\mathcal L^\dagger X}\geq
 c_{\mathrm{cmp}}\mathcal D_{\rho_\beta}(X),
 \qquad
 \mathcal D_{\rho_\beta}(X):=\sum_{i,u}\norm{[\sigma_{i,u},X]}_{\rho_\beta}^2
\end{equation}
for every Hermitian $X$\,. Furthermore, there exists a constant $c_{\mathrm{gap}}>0$ that is independent of system size $n$ such that
\begin{equation}\label{eq:framework-gap}
 \mathcal D_{\rho_\beta}(X)\geq c_{\mathrm{gap}}\norm X_{\rho_\beta}^2
\end{equation}
for every Hermitian $X$ with $\Tr(\rho_\beta X)=0$.
\end{condition}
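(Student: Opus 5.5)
This Fact is quoted from~\cite{BC2025}, so the plan is to reconstruct its two parts from the structure of the CKG generator rather than derive anything new. I would prove the two inequalities separately.

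\textbf{Comparison~\eqref{eq:framework-comparison}.} First write the Dirichlet form of the CKG generator in its KMS form. By detailed balance the coherent term $G$ is exactly what makes $\mathcal L^\dagger$ KMS-symmetric, and the form becomes a sum over couplings,
\[
\ip{X}{-\mathcal L^\dagger X}=\sum_{i,u}\int\gamma(\omega)\,\big\|[\widetilde J_{i,u}(\omega),X]\big\|^2\,d\omega ,
\]
where $\widetilde J_{i,u}(\omega)$ is a suitably $\rho_\beta^{1/4}$-conjugated jump operator. Each such operator is an operator Fourier transform of $\sigma_{i,u}$ with a Gaussian filter. Write $\hat f$ for the Fourier transform of $f_\beta$ and $P_\nu$ for the spectral projection of the Bohr frequency $\nu$. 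In the Bohr-frequency decomposition $\sigma_{i,u}=\sum_\nu\sigma_{i,u}(\nu)$, the jump operator is $\sum_\nu \hat f(\omega-\nu)\sigma_{i,u}(\nu)$. Integrating $\gamma(\omega)\,|\hat f(\omega-\nu)|^2$ over $\omega$ then produces a multiplier $g(\nu)>0$ acting on the Bohr components of the commutator.

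The key step is to lower bound this multiplier-weighted quantity by a constant times $\|[\sigma_{i,u},X]\|_{\rho_\beta}^2$. The obstacle is that $g(\nu)$ decays like a Gaussian in $\nu$, so no uniform lower bound holds on the whole spectrum. I would handle this by splitting each commutator into frequencies $|\nu|\le\Omega$ and $|\nu|>\Omega$.

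The low-frequency part gives the desired term with constant $\min_{|\nu|\le\Omega}g(\nu)$. For the high-frequency tail I would use Lieb--Robinson or locality bounds: the Bohr components of a local operator $\sigma_{i,u}$ at frequency $|\nu|$ decay like $e^{-c|\nu|}$, as in the Arad--Kuwahara--Landau energy-tail bounds. Choosing $\Omega$ as a constant depending only on $\beta$ and $J$ should make the tail absorbable. I expect this tail absorption to be the genuinely technical part, and I would follow the argument of~\cite{BC2025} for it.

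\textbf{Gap~\eqref{eq:framework-gap}.} Here $\mathcal D_{\rho_\beta}$ is, up to normalization, the Dirichlet form of the KMS-symmetric generator $\mathcal K(X)=\sum_{i,u}\Gamma^{-1}([\sigma_{i,u},\Gamma[\sigma_{i,u},X]])$, with $\Gamma$ the KMS weighting. This is a ``depolarizing-like'' dynamics whose kernel consists of the scalars, since only scalars commute with all single-site Paulis.

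I would prove a size-independent gap by the martingale or block-decomposition strategy. The steps are:
\begin{enumerate}
\item Use exponential clustering (decay of correlations) of 1D Gibbs states at any $\beta$ to obtain approximate factorization of conditional variances on overlapping blocks of constant length.
\item Observe that the local gap on each block is positive and $n$-independent by compactness.
\item Glue the blocks with a recursive, Martinelli-type argument.
\end{enumerate}
The main obstacle is the non-commutativity, which makes the ``conditional expectations'' only approximate. I would use the local-indistinguishability and clustering bounds for 1D Gibbs states that~\cite{BC2025} employs to control the resulting errors.
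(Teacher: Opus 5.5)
The paper gives no argument for this Fact beyond importing it from \cite[Theorems V.1, VI.1]{BC2025}. It summarizes the gap part as strong clustering of the conditional expectations $\mathbb E_R$, an inverse-polylogarithmic conditional gap from a recursive decomposition, and a bootstrap to a constant gap. So deferring to \cite{BC2025} is legitimate, but the pieces you reconstruct yourself would not go through as written.

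\textbf{The comparison \eqref{eq:framework-comparison}.} Your sum-of-squares formula for the CKG form is asserted rather than derived, and $\mathcal L$ is KMS- but not GNS-symmetric, so it needs justification. Even granting it, $\norm{\sum_\nu\hat f(\omega-\nu)[\sigma_{i,u}(\nu),X]}_{\rho_\beta}^2$ does not reduce to $\sum_\nu|\hat f(\omega-\nu)|^2\norm{[\sigma_{i,u}(\nu),X]}_{\rho_\beta}^2$. The operators $[\sigma_{i,u}(\nu),X]$ for distinct $\nu$ are not KMS-orthogonal for general $X$. Hence there is no diagonal multiplier $g(\nu)$, and $\min_{|\nu|\le\Omega}g$ is not a lower bound. (Given a genuine sum-of-squares bound, a workable route is to write $\sigma_{i,u}$ as an $\omega$-integral of its filtered versions and apply Cauchy--Schwarz on a window $|\omega|\le\Omega$.)

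More seriously, the tail cannot be absorbed with Arad--Kuwahara--Landau operator-norm bounds alone. They control $\norm{[\sigma_{i,u}^{>\Omega},X]}_{\rho_\beta}$ at best by $O(e^{-c\Omega})\norm X_{\rho_\beta}$. Summed over the $n(d^2-1)$ couplings, this gives $O(ne^{-2c\Omega})\norm X_{\rho_\beta}^2$. Absorbing that, even using \eqref{eq:framework-gap}, forces $\Omega\gtrsim\log n$, and the low-frequency constant $e^{-\Theta(\beta^2\Omega^2)}$ then depends on $n$.

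To keep $c_{\mathrm{cmp}}$ size-independent, you need the tail to be quasi-local around site $i$ with superexponential decay, which comes from one-dimensional Lieb--Robinson bounds combined with the Gaussian time filter. Its commutator with $X$ is then bounded by nearby single-site commutators $\norm{[\sigma_{x,v},X]}_{\rho_\beta}$. The weights must beat the $e^{O(r)}$ cost of expanding a radius-$r$ operator in Paulis and multiplying in the KMS norm. Summing over $i$ then gives $\epsilon(\Omega)\mathcal D_{\rho_\beta}(X)$.

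\textbf{The gap \eqref{eq:framework-gap}.} Compactness does not apply. $\mathbb E_R$ is the KMS projection for the full $n$-site state acting on global operators, so there is no fixed finite-dimensional family to optimize over. For bounded $|R|$ the block gap is nonetheless $n$-independent, by Pauli twirling on $R$. Since the twirl lands in $I_R\otimes\mathcal B(\mathcal H_{R^c})$, $\norm{(I-\mathbb E_R)X}_{\rho_\beta}$ is at most the distance from $X$ to its twirl. That distance telescopes into single-site commutators. Combined with the uniform KMS multiplication bound of \cite[Lemmas II.2, A.2]{BC2025}, this gives $\lambda_R\ge e^{-O(|R|)}$.

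The real missing idea is in your step 1. For non-commuting $H$, decay of correlations of $\rho_\beta$ does not yield approximate factorization of conditional variances. What is needed is strong clustering of the non-local conditional expectations themselves, e.g.\ smallness of $\mathbb E_A\mathbb E_B-\mathbb E_{A\cup B}$ in $\rho_\beta\to\rho_\beta$ norm for overlapping intervals. That is a separate theorem in \cite{BC2025}. The available errors also grow exponentially with block size; compare the factor $e^{C|B_i|}$ in Lemma~\ref{lem:oned-local}. Correspondingly, the recursion in \cite{BC2025} yields only an inverse-polylogarithmic conditional gap, and a separate bootstrap is needed to reach a constant. Your Martinelli-type gluing has no counterpart of that step.
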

This fact is proved in~\cite[Theorems V.1,VI.1]{BC2025}. Here, we can rewrite
\[
\mathcal D_{\rho_\beta}(X)=-\left\langle X,\mathcal{K}(X)\right\rangle_{\rho_\beta},\quad \mathcal{K}(X):=\sum_{i,u}\Bigl([\sigma_{i,u},X]\left(\rho_\beta^{1/2} \sigma_{i,u}^\dagger\rho_\beta^{-1/2}\right)-\left(\rho_\beta^{-1/2} \sigma_{i,u}^\dagger\rho_\beta^{1/2}\right)[\sigma_{i,u},X]\Bigr)\,,
\]
where $\mathcal{K}$ is a KMS self-adjoint superoperator. The comparison \eqref{eq:framework-comparison} reduces the desired spectral gap estimate in~\eqref{eq:setup-gap} to a lower bound on $\mathcal D_{\rho_\beta}(A)$. Different from the original Dirichlet form in~\reviewedit{\eqref{eqn:Dirichlet_form}}, the commutator form $\mathcal D_{\rho_\beta}(A)$ involves exactly local Pauli couplings, which is more convenient for the analysis. After establishing the \reviewedit{comparison},~\cite{BC2025} \reviewedit{proves} strong clustering for spectral conditional expectations, obtains an inverse-polylogarithmic conditional gap for $\mathcal{K}$ through a recursive decomposition, and then bootstraps it to a constant gap \reviewedit{for} $\mathcal{K}$. This gives a size-independent spectral gap for $\mathcal{L}$ and thus a linear mixing time \reviewedit{upper} bound in $n$.

In our work, we will heavily use the commutator form $\mathcal D_{\rho_\beta}(A)$ and conduct a domain decomposition in the Pauli basis to obtain a stronger dissipation estimate than~\eqref{eq:framework-gap} in~\cite{BC2025}. Instead, we will prove a logarithmic enhancement in the dissipation estimate, which will lead to a polylogarithmic mixing time bound in $n$. This is summarized in the following theorem:
\begin{theorem}[Enhanced dissipation estimate]\label{thm:framework-enhanced-dissipation} \reviewedit{Let $H$ be as in~\eqref{eqn:near_neighbor_interaction}, let $\beta>0$, and let $n\geq20$.} There exists a constant $c_{\mathrm{enh}}>0$ that is independent of system size $n$ such that
\begin{equation}\label{eq:framework-enhanced-dissipation}
 \mathcal D_{\rho_\beta}(A)\geq c_{\mathrm{enh}}\frac{\left(1+\norm A_{\rho_\beta}^2\right)\log \left(1+\norm A_{\rho_\beta}^2\right)}{\log^3(n)}
\end{equation}
for every Hermitian operator $A$ with $\Tr(\rho_\beta A)=0$ and $I+A\geq0$, equivalently $A=\rho_\beta^{-1/2}(\eta-\rho_\beta)\rho_\beta^{-1/2}$ for a density operator $\eta$, where $c_{\mathrm{enh}}$ may depend on $\beta,d,q$ but is independent of $n$.
\end{theorem}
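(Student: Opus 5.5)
The plan is to split $A$ into a low-weight part and a high-weight part. A state far from equilibrium cannot put most of its KMS norm on low-weight Pauli strings, and high-weight strings are strongly dissipated by the commutator form $\mathcal D_{\rho_\beta}$.

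\emph{Setup.} Write $F=\norm A_{\rho_\beta}^2$. If $F\leq F_0$ for a fixed constant $F_0$, the bound \eqref{eq:framework-enhanced-dissipation} follows directly from the gap \eqref{eq:framework-gap} of Fact~\ref{cond:framework-comparison}, since then $(1+F)\log(1+F)\lesssim F$ and $\log^3 n\geq 1$. So assume $F$ is large. For a weight cutoff $k\geq 1$, let $\mathcal S_k$ be the span of the Pauli strings of weight at most $k$. Let $\Pi_k$ be the KMS-orthogonal projection onto $\mathcal S_k$, and set $A_{\rm lo}=\Pi_k A$ and $A_{\rm hi}=A-A_{\rm lo}$. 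Both are Hermitian, because $\mathcal S_k$ is closed under adjoints and the KMS inner product respects adjoints.

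\emph{Step 1: the low-weight part is small.} Using $A=\rho_\beta^{-1/2}(\eta-\rho_\beta)\rho_\beta^{-1/2}$, every $B\in\mathcal S_k$ satisfies
\[
\ip{B}{A}=\Tr\bigl(B^\dagger(\eta-\rho_\beta)\bigr)\leq 2\norm{B}_\infty .
\]
Taking $B=A_{\rm lo}$ gives $\norm{A_{\rm lo}}_{\rho_\beta}^2=\ip{A_{\rm lo}}{A}\leq 2M_k\norm{A_{\rm lo}}_{\rho_\beta}$, where $M_k=\sup_{0\neq B\in\mathcal S_k}\norm B_\infty/\norm B_{\rho_\beta}$. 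I would bound $M_k\leq e^{C_1 k\log n}$ crudely:
\begin{itemize}
\item $\norm{\rho_\beta^{1/4}B\rho_\beta^{1/4}}_2$ is controlled below by the Hilbert--Schmidt norm of $B$ times $\lambda_{\min}$ of suitable reduced data on the support;
\item alternatively, expand $B$ in Pauli strings, use that $\dim\mathcal S_k\leq (nd^2)^k$, and compare the Gram matrix of low-weight strings in the KMS inner product with the identity, losing at most $e^{O(k)}$ per site via local perturbation of $\rho_\beta$.
\end{itemize}
Hence $\norm{A_{\rm lo}}_{\rho_\beta}^2\leq 4e^{2C_1k\log n}$. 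Choosing $k=\lfloor \log F/(4C_1\log n)\rfloor$, which is at least $1$ once $F$ exceeds $n^{4C_1}$, gives $\norm{A_{\rm lo}}_{\rho_\beta}^2\leq F/2$ and therefore $\norm{A_{\rm hi}}_{\rho_\beta}^2\geq F/2$. In the intermediate range $F_0<F\leq n^{4C_1}$ we have $\log F\lesssim\log n$, so the gap alone already gives the target bound.

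\emph{Step 2: enhanced dissipation on the complement.} This is the key estimate:
\[
\mathcal D_{\rho_\beta}(X)\geq c\,\frac{k}{\log^2 n}\,\norm{X-\Pi_k X}_{\rho_\beta}^2 .
\]
The heuristic comes from the Hilbert--Schmidt (infinite-temperature) case. There $\sum_u\norm{[\sigma_{i,u},P]}_2^2$ is proportional to the indicator that site $i$ lies in the support of the Pauli string $P$. The commutator form is then exactly proportional to the Pauli-weight operator, which is at least $k+1$ on $\mathcal S_k^\perp$. At positive $\beta$ the weights $\rho_\beta^{1/4}$ mix the strings, and my approach would be:
\begin{itemize}
\item Partition the chain into blocks of length $\ell\asymp\log n$.
\item For each block $R$, let $\mathbb E_R$ be the spectral (KMS) conditional expectation onto operators that are trivial on $R$, as in~\cite{BC2025}.
\item Use the block-local inverse-polylogarithmic conditional gap of~\cite{BC2025} to get
\[
\sum_{i\in R,u}\norm{[\sigma_{i,u},X]}^2_{\rho_\beta}\geq \frac{c}{\log n}\norm{(1-\mathbb E_R)X}^2_{\rho_\beta}.
\]
\item Sum over blocks, so that $\mathcal D_{\rho_\beta}(X)\gtrsim \log^{-1}(n)\sum_R\norm{(1-\mathbb E_R)X}^2_{\rho_\beta}$.
\item Show that this block-weight operator $\sum_R(1-\mathbb E_R)$ is bounded below by roughly $k/\ell$ on the KMS complement of $\mathcal S_k$, up to $\log n$ losses.
\end{itemize}
The last point uses exponential clustering of $\rho_\beta$ and the quasi-locality of $\rho_\beta^{\pm1/4}(\cdot)\rho_\beta^{\mp1/4}$ in 1D (Araki-type bounds), which make the $\mathbb E_R$ on well-separated blocks approximately commute. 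One may need to use two staggered block partitions.

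\emph{Conclusion and main obstacle.} Combining the steps gives
\[
\mathcal D_{\rho_\beta}(A)\geq \frac{ck}{\log^2 n}\cdot\frac F2\gtrsim\frac{F\log F}{\log^3 n},
\]
which yields \eqref{eq:framework-enhanced-dissipation} after using $F\gtrsim 1+F$ for $F$ large. I expect Step~2 to be the main difficulty: making the $\mathbb E_R$ approximately commute while keeping the losses polylogarithmic, and relating the block-weight operator to the exact KMS complement of $\mathcal S_k$, which is not itself defined by a tensor structure. If needed, I would fall back to defining $\mathcal S_k$ as the span of operators supported on at most $k$ blocks, since only a $\log n$ rescaling of $k$ is affected.
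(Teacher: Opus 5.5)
Your outer argument is the paper's own proof of this theorem, and its bookkeeping is fine. You project onto the low-weight span in the KMS inner product and bound that part by $\ip{\Pi_kA}{A}=\Tr\bigl((\Pi_kA)^\dagger(\eta-\rho_\beta)\bigr)\le2\norm{\Pi_kA}_\infty$. You then take $k\asymp\log F/\log n$ and use a $k/\log^2 n$ rate on the complement; your case split on $F$ is equivalent to the paper's choice of $s$ in terms of $D/F$. But all the content is in your two inputs, which are exactly Lemmas~\ref{lem:framework-low-weight-norm} and~\ref{lem:framework-weight-dissipation}, and your routes to them do not work as written.

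\textbf{Step 1.} Neither suggested mechanism yields $M_k\le e^{C_1k\log n}$.
\begin{itemize}
\item Elements of $\mathcal S_k$ need not have small support; for example $\sum_i\sigma_{i,u}\in\mathcal S_1$. So ``reduced data on the support'' gives nothing better than $\lambda_{\min}(\rho_\beta)^{-1/2}=e^{O(n)}$.
\item The KMS Gram matrix of low-weight strings is not a perturbation of the identity. The entries $\ip{I}{\sigma_{i,u}}=\Tr(\rho_\beta\sigma_{i,u})$ are typically of order one for every $i$. If they share a sign, then $Y=\sum_i\bigl(\sigma_{i,u}-\Tr(\rho_\beta\sigma_{i,u})I\bigr)$ has $\norm Y_{\rho_\beta}^2=O(n)$ by clustering, but its coefficient vector has squared length $\Theta(n^2)$. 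So the smallest Gram eigenvalue is already $O(1/n)$ at $k=1$.
\end{itemize}
The missing ingredient is the one $n$-independent input the paper uses: $\norm{UX}_{\rho_\beta},\norm{XU}_{\rho_\beta}\le M_\beta\norm X_{\rho_\beta}$ for single-site Paulis $U$, from imaginary-time locality \cite[Lemmas II.2, A.2]{BC2025}. With it the argument has two parts:
\begin{itemize}
\item Isolate the exact-weight-$w$ part as $Y_w=f_w(\mathcal N)Y$. Here $\mathcal N$ is the Pauli-weight operator, with $\norm{\mathcal N}_{\rho_\beta\to\rho_\beta}\le n(1+M_\beta^2)$, and $f_w$ is the degree-$s$ Lagrange polynomial.
\item Extract each $y_P$ by a Pauli twirl on $\operatorname{supp}P$, giving $|y_P|\le(Cn)^s\norm Y_{\rho_\beta}$.
\end{itemize}

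\textbf{Step 2.} You identify the difficulty but leave it open, and two parts of the plan are off.
\begin{itemize}
\item \emph{(i) What the projections must satisfy.} Approximate commutation of the $\mathbb E_R$ is not what is needed. You want $X=\sum_PX_P$ with $X_P$ \emph{exactly} trivial on every $B_i$, $i\notin P$, so the approximant is genuinely low-weight. That requires exactly commuting idempotents that are $O(1/n)$-close to KMS projections, so almost-orthogonality survives iteration over $m\asymp n/\log n$ blocks. The paper gets these as local-Gibbs projections $\mathcal T_i=\mathbb E^{(U_i)}_{B_i}$ on buffered intervals $U_i=A_iB_iC_i$ (Lemma~\ref{lem:oned-local}). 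It then runs a second decomposition with $\mathcal G_j$ on $\Lambda_j=B_jD_jB_{j+1}$. These maps commute, and are close to $\mathbb E_{R_j}$, only on the subspace $W_P$ where $B_j$ and $B_{j+1}$ are already identity factors.
\item \emph{(ii) Restricted versus spectral bound.} As you formulate it, a lower bound for $\sum_R(1-\mathbb E_R)$ restricted to the complement of $\mathcal S_k$ does not imply $\norm{(I-\Pi_k)X}_{\rho_\beta}^2\lesssim\frac{\ell}{k}\ip{X}{\sum_R(1-\mathbb E_R)X}$ for general $X$. The reason is that this operator does not commute with $\Pi_k$. The paper never touches the non-tensor complement. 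It builds $X_0=\sum_S\widehat X_S$ with $\operatorname{supp}\widehat X_S\subseteq\bigcup_{i\in S}U_i$ and a tail bound (Proposition~\ref{prop:oned-decomposition}). It then uses best approximation with $\sum_{|S|\le k}\widehat X_S\in V_{\le k\ell}$.
\end{itemize}
Finally, the decomposition itself costs a factor $\ell$, through $\sum_P\mathcal D_{\rho_\beta}(X_P)\le C\ell[\mathcal D_{\rho_\beta}(X)+\norm X_{\rho_\beta}^2]$ (Claim~\ref{clm:oned-derivative-bound}). So to land at $\log^3 n$ you must use the uniform constant conditional gap of Lemma~\ref{lem:oned-local-gap}, not a $1/\log n$ one, and you cannot afford the extra ``$\log n$ losses'' you allow.
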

  We note that the lower bound~\eqref{eq:framework-enhanced-dissipation} is not always stronger than the original one in~\cite{BC2025}. Although we have extra $\log(1+\norm A_{\rho_\beta}^2)$ factor in the lower bound, the denominator $\log^3(n)$ increases in $n$. In our proof, we will mainly use this enhanced dissipation estimate in the regime of large $\norm A_{\rho_\beta}^2$, which corresponds to the initial state $\eta_0$ that is far away from the thermal state $\rho_\beta$. In this regime, the logarithmic factor $\log(1+\norm A_{\rho_\beta}^2)$ will dominate the denominator $\log^3(n)$, which gives a stronger dissipation estimate than the original one in~\cite{BC2025}. This will be crucial for proving rapid mixing in $O(\log^4(n))$ time.

We now show that Theorem \ref{thm:framework-enhanced-dissipation} suffices for proving Theorem \ref{thm:main_result}:
\begin{proof}[Proof of Theorem~\ref{thm:main_result}]
We will use both the enhanced dissipation estimate in Theorem~\ref{thm:framework-enhanced-dissipation} and the size-independent spectral gap in Condition~\ref{cond:framework-comparison} to prove rapid mixing. Let $\eta_t=e^{t\mathcal L}(\eta_0)$ be the state evolution under the Lindblad dynamics with initial state $\eta_0$. Define $A=\rho_\beta^{-1/2}(\eta_0-\rho_\beta)\rho_\beta^{-1/2}$ and $F(t)=\norm{A_t}_{\rho_\beta}^2$ with $A_t=e^{t\mathcal L^\dagger}(A)$. Then, we have
\[
\left\|\exp(t\mathcal L)(\eta_0-\rho_\beta)\right\|_1^2\leq F(t)=\norm{A_t}_{\rho_\beta}^2.
\]
Thus, it suffices to prove that $F(t)\leq\epsilon^2$ after $O(\log^4(n)+\log(1/\epsilon))$ time.

According to Theorem~\ref{thm:framework-enhanced-dissipation}, we have
\[
 \frac{d}{dt}\log(1+F(t))=-\frac{2\mathcal E_{\mathcal L}(A_t,A_t)}{1+F(t)}
 \leq-\frac{2c_{\mathrm{cmp}}c_{\rm enh}}{\log^3(n)}\log(1+F(t)).
\]
This implies that
\[
F(t)\leq \exp\left(e^{-2c_{\rm cmp}c_{\rm enh}t/\log^3(n)}\log(1+F(0))\right)-1.
\]
Because $F(0)\leq\lambda_{\min}(\rho_\beta)^{-1}-1=\exp(\mathcal{O}( n))$, we have
\[
F(t)\leq
\exp\!\left(\mathcal{O}(n)\,e^{-2c_{\rm cmp}c_{\rm enh}t/\log^3(n)}\right)-1.
\]
Thus, to achieve $\epsilon$ accuracy, it suffices to choose $t=\mathcal{O}(\log^4(n)+\log^3(n)\log(1/\epsilon))$.

Next, to get rid of $\log^3(n)$ factor in the second term, we can use the size-independent spectral gap in Condition~\ref{cond:framework-comparison} in the case when $F(t)$ is small. Specifically, we can choose $t_0=\mathcal{O}(\log^4(n))$ such that $F(t_0)\leq 1$. Then, for $t\geq t_0$, we have
\[
\frac{d}{dt}F(t)=-2\mathcal E_{\mathcal L}(A_t,A_t)\leq -2c_{\rm cmp}c_{\rm gap}F(t),
\]
which implies that $F(t)\leq F(t_0)\exp(-2c_{\rm cmp}c_{\rm gap}(t-t_0))\leq \exp(-2c_{\rm cmp}c_{\rm gap}(t-t_0))$. Thus, to achieve $\epsilon$ accuracy, it suffices to choose $t-t_0=\mathcal{O}(\log(1/\epsilon))$. Combining the two time intervals, we have $t=\mathcal{O}(\log^4(n)+\log(1/\epsilon))$ to achieve $\epsilon$ accuracy, which completes the proof.
\end{proof}

\subsection{Proof of Theorem~\ref{thm:framework-enhanced-dissipation}}\label{sec:pf-enhanced-dissipation}
In this section, we prove Theorem~\ref{thm:framework-enhanced-dissipation}. Let $F=\|A\|^2_{\rho_\beta}$. Because we already have the size-independent spectral gap in Condition~\ref{cond:framework-comparison}, it suffices to prove a stronger dissipation estimate for large $F$, where the target is to show that
\begin{equation}\label{eq:framework-target}
 \mathcal D_{\rho_\beta}(A)\gtrsim\frac{F\log F}{\log^3(n)},
 \qquad F=\norm A_{\rho_\beta}^2=\chi^2(\eta\Vert\rho_\beta),
\end{equation}
where $A=\rho_\beta^{-1/2}(\eta-\rho_\beta)\rho_\beta^{-1/2}$ for a density operator
$\eta$.

\paragraph*{Low-weight and high-weight contributions.}
To prove \eqref{eq:framework-target}, we decompose $A$ into a
low-weight part and a high-weight part in the Pauli basis.
For an integer $s\geq0$, define
\[
 V_{\le s}:=\operatorname{span}\{P:\ P\text{ is a Pauli string with }w(P)\leq s\},
\]
where $w(P)$ counts the sites on which $P$ has a nonidentity factor, defined as
\[
w(P)=\sum_{i=1}^n\mathbf{1}_{\{P_i\neq I_d\}},\qquad P=\bigotimes_{i=1}^n P_i.
\]
Let $Q_s$ be the KMS orthogonal projection onto $V_{\le s}$.
Orthogonality gives
\begin{equation}\label{eq:framework-weight-split}
 A=Q_sA+(I-Q_s)A,
 \qquad
 F=\norm{Q_sA}_{\rho_\beta}^2+\norm{(I-Q_s)A}_{\rho_\beta}^2.
\end{equation}
The intuition is that high-weight operators might produce large $F$ but decay quickly, whereas
low-weight operators form a sparse part of the operator basis and thus cannot contribute too much to $F$. We use these properties in complementary ways.
\begin{itemize}
\item \emph{High-weight part.} Recall that $\mathcal D_{\rho_\beta}$ in~\eqref{eq:framework-comparison} is a sum of squared KMS norms of commutators. When a Pauli string $P$ has large weight $w(P)$, there are many sites $i$ with a nonzero commutator $[\sigma_{i,u},P]$ for some $u$, giving many nonzero contributions to $\mathcal D_{\rho_\beta}(P)$. More specifically, we anticipate that $\mathcal{D}_{\rho_\beta}(P)$ is proportional to $w(P)$. At infinite temperature, this relation is exact: Pauli strings are eigenmodes
of $\mathcal K$, with decay rates proportional to their weight.
For the quasi-local dynamics at finite temperature, we seek an
analogous relation between weight and dissipation, allowing for
a loss that scales \reviewedit{polylogarithmically} in $n$. Specifically, we will show the following lemma:
\begin{lemma}[Weight-dissipation relation]\label{lem:framework-weight-dissipation}
Under the conditions of Theorem~\ref{thm:main_result} and $n>1$, there exists a constant $c_{\rm weight}>0$ that is independent of system size $n$ such that
\begin{equation}\label{eq:framework-weight-energy}
 \mathcal D_{\rho_\beta}(A)
 \geq c_{\rm weight}\frac{s+1}{\log^2(n)}\norm{(I-Q_s)A}_{\rho_\beta}^2
 =c_{\rm weight}\frac{s+1}{\log^2(n)}\bigl(F-\norm{Q_sA}_{\rho_\beta}^2\bigr)\,.
\end{equation}
\end{lemma}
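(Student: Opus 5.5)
Our plan is to localize the size-independent gap \eqref{eq:framework-gap} to blocks of logarithmic length and then count how many blocks a high-weight operator must ``touch''. Fix a block length $\ell\asymp\log n$ and partition $[n]$ into consecutive intervals $B_1,\dots,B_m$ of length $\ell$, together with a few shifted partitions so that every site lies well inside some block.

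For each block $B$, let $\mathbb E_B$ denote a KMS-self-adjoint (spectral/local) conditional expectation onto operators acting trivially on $B$, up to quasi-local dressing. Its defining property should be a local gap of the form
\[
 \sum_{i\in B,u}\norm{[\sigma_{i,u},X]}_{\rho_\beta}^2\gtrsim \norm{X-\mathbb E_BX}_{\rho_\beta}^2 .
\]
We would obtain this from the conditional-gap and strong-clustering results of~\cite{BC2025}, which give such a bound with an at most polylogarithmic loss, using exponential clustering to localize $\rho_\beta$ on $B$ with error $e^{-c\ell}\le n^{-10}$. Summing over blocks (and averaging over the shifted partitions) then gives
\[
 \mathcal D_{\rho_\beta}(A)\gtrsim \frac{1}{\log n}\sum_{j}\norm{A-\mathbb E_{B_j}A}_{\rho_\beta}^2 .
\]

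The second step is a counting argument. Heuristically, a Pauli string of weight $>s$ is nontrivial on at least $\min(m,(s+1)/\ell)$ blocks, so the quantity $\sum_j\norm{(1-\mathbb E_{B_j})X}^2$ should be at least $\frac{s+1}{\ell}\norm{X}^2$ for $X\perp V_{\le s}$. To make this rigorous, we would write $\sum_j(1-\mathbb E_{B_j})$ as a ``number of blocks touched'' operator $\mathcal N$. This operator is diagonal in the Pauli basis at $\beta=0$. At finite temperature we need $\mathcal N\gtrsim \frac{s+1}{\ell}$ on $\operatorname{ran}(I-Q_s)$, up to a constant. Since $\mathbb E_{B_j}$ for disjoint blocks approximately commute, and since the exact projections $\mathcal N_k$ onto operators ``supported on exactly $k$ blocks'' are defined through products of the $\mathbb E_{B_j}$ and $1-\mathbb E_{B_j}$, we would show that the range of $\prod_{j\in S}\mathbb E_{B_j}$ (with $|S|\ge m-k$) is contained in, or is approximately KMS-orthogonal to, the complement of $V_{\le k\ell}$. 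Taking $k=(s+1)/\ell$ gives the claim for $(I-Q_s)A$. Finally, we pass from $A$ to $(I-Q_s)A$ by noting that $\sum_j\norm{(1-\mathbb E_{B_j})A}^2\ge$ the corresponding quantity for the component of $A$ in $\operatorname{ran}(I-\Pi_k)$, where $\Pi_k$ is the spectral projection of $\mathcal N$ below $k$. We then compare $\operatorname{ran}\Pi_k$ with $V_{\le s}$, which requires $\operatorname{ran}\Pi_k\subseteq V_{\le s'}$ with $s'\approx k\ell$; this is exact when the $\mathbb E_B$ are partial traces. The two factors of $\ell\asymp\log n$, one from the local gap and one from counting blocks rather than sites, produce the $\log^2 n$ denominator.

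The main obstacle is this compatibility at finite temperature. The KMS conditional expectations $\mathbb E_B$ do not preserve Pauli weight, and $Q_s$ is a KMS projection, not the Hilbert--Schmidt one. Our first attempt would be to use the partial trace with respect to a product reference (infinite-temperature) conditional expectation, which exactly preserves weight and makes $\mathcal N$ diagonal in Paulis. We would then bound the KMS norm against the Hilbert--Schmidt norm only locally via $\rho_\beta^{1/4}(\cdot)\rho_\beta^{1/4}$ quasi-locality. The $\operatorname{polylog}$ slack in the lemma is what we expect to absorb these errors, with constants depending on $\beta,d,q$.
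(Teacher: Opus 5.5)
Your outline (local gap, count blocks, then best approximation onto $V_{\le s}$) has the same shape as the paper's argument. There is a genuine gap, however: the counting step carries the whole content of the lemma, and it is not supplied. Neither of the two ways you suggest for supplying it works.

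\textbf{Option 1: KMS conditional expectations.} Take $\mathbb E_B$ to be the KMS projection onto $I_B\otimes\mathcal B(\mathcal H_{B^c})$. This choice does carry an $n$-independent local gap; by \cite{BC2025} that gap is uniform, so no $1/\log n$ is lost there. But nothing places the low spectral subspace of $\mathcal N=\sum_j(I-\mathbb E_{B_j})$ inside $V_{\le s'}$:
\begin{itemize}
\item $\mathbb E_B$ makes $B$ an identity factor, but it rewrites the operator on $B^c$ with quasi-local tails.
\item For adjacent blocks, $\mathbb E_{B_j}$ and $\mathbb E_{B_{j+1}}$ need not commute, and there is no buffer to make the commutator small. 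Applying one can destroy the identity factor imposed by the other.
\end{itemize}
Products of $\mathbb E_{B_j}$ and $I-\mathbb E_{B_j}$ are therefore not projections onto operators ``supported on exactly these blocks.'' Membership in $V_{\le s}$ is an exact condition. So you must build an exactly low-weight approximant, with errors small enough (of order $1/n$ per block) to survive accumulation over $m\asymp n/\log n$ blocks. You give no mechanism for this.

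\textbf{Option 2: infinite-temperature partial trace.} The map $\frac{I_B}{d^{|B|}}\otimes\Tr_B$ makes weight exact but breaks the other half of the argument. It is not KMS-self-adjoint. Its KMS operator norm, and the constant in its local gap, can be $e^{\Theta(|B|)}=\mathrm{poly}(n)$ for $|B|\asymp\log n$. An example is a product Gibbs state with $X$ the projector onto the high-energy configuration of $B$. With $O(1)$-size blocks instead, its $O(1)$ deviation from a KMS projection compounds to $e^{O(m)}$ in any almost-orthogonality argument. Polylogarithmic slack absorbs neither loss.

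\textbf{What is missing.} You need a family of maps that are simultaneously:
\begin{itemize}
\item exactly local and commuting, to get exact support;
\item within $\delta=1/(100n)$ of genuine KMS projections, so that the factors $(1\pm\xi)^m$ in Lemma~\ref{lem:projection-iteration} stay bounded;
\item compatible with the local gap of Lemma~\ref{lem:oned-local-gap}.
\end{itemize}
The paper builds this in two buffered stages.

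First, it uses $\mathcal T_i=\mathbb E^{(U_i)}_{B_i}$, the projection for the local Gibbs state on $U_i=A_iB_iC_i$, with buffers $A_i,C_i$ of size $\Theta(\log n)$. This gives $X=\sum_PX_P$ with $\sum_P|P|\norm{X_P}_{\rho_\beta}^2\le C[\mathcal D_{\rho_\beta}(X)+\norm X_{\rho_\beta}^2]$. However, $X_P$ still acts on every buffer $D_j=C_jA_{j+1}$, so even $X_\varnothing$ is generally not low weight.

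Second, it applies $\mathcal G_j$, which averages over $\Lambda_j=B_jD_jB_{j+1}$ with the local Gibbs state. This is done only on $W_P$, where $B_j$ and $B_{j+1}$ are already identity factors and now serve as buffers. The result is components $Z_{P,T}$ supported exactly on $\bigcup_{i\in\Gamma(P,T)}U_i$.

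Your accounting of the two logarithms also differs from the paper's. The second factor of $\ell$ does not come from the local gap. It comes from $\sum_P\mathcal D_{\rho_\beta}(X_P)\le C\ell[\mathcal D_{\rho_\beta}(X)+\norm X_{\rho_\beta}^2]$, which is proved by commuting on-site derivations through $\mathcal R_P$ (Claim~\ref{clm:oned-derivative-bound}). Your shifted partitions do not replace this second stage.
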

According to Lemma~\ref{lem:framework-weight-dissipation}, if $A$ has a large high-weight component, meaning $F-\norm{Q_sA}_{\rho_\beta}^2\approx F$, then \reviewedit{the ratio satisfies} $\mathcal D_{\rho_\beta}(A)/\|A\|^2_{\rho_\beta}\gtrsim \frac{s}{\log^2(n)}$, which can be much larger than the size-independent spectral gap in the case when $s\gg \log^2(n)$. This gives a stronger dissipation of the high-weight part in \eqref{eq:framework-weight-split}.

\item \emph{Low-weight part.}
For the low-weight part, we \reviewedit{do not expect} a strong dissipation estimate because the low-weight Pauli strings are sparse and thus cannot contribute too much to $F$. Instead, we will show that the small number of low-weight Pauli strings implies that the low-weight part cannot contribute too much to $F$, meaning that $\norm{Q_sA}_{\rho_\beta}^2$ can be well controlled when $s$ increases.

First, direct counting gives
$\dim V_{\le s}\leq\exp[O(s\log(2n))]$.
This sparsity motivates an operator-norm bound summarized in the following lemma:
\begin{lemma}[Low-weight operator-norm bound]\label{lem:framework-low-weight-norm}
Under the conditions of Theorem~\ref{thm:main_result}, there exists a constant $c_{\rm norm}>0$ that is independent of system size $n$ such that
\begin{equation}\label{eqn:framework-low-weight-norm}
 \norm Y_\infty\leq e^{c_{\rm norm}\log(2n)(s+1)}\norm Y_{\rho_\beta},\quad Y\in V_{\le s}\,.
\end{equation}
\end{lemma}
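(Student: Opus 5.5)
We bound $\norm Y_\infty$ by the Hilbert--Schmidt norm and then compare the Hilbert--Schmidt and KMS norms on $V_{\le s}$ using the sparsity of $V_{\le s}$. A crude global comparison through $\lambda_{\min}(\rho_\beta)$ costs $e^{O(n)}$, which is far too lossy. The key is that a low-weight Pauli string is supported on at most $s$ sites. The estimate we need on such strings should therefore only involve local data of $\rho_\beta$, with a loss of $e^{O(s)}$ per string. The remaining $\log(2n)$ factor in the exponent will come from summing over the $\exp[O(s\log(2n))]$ strings.

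\textbf{Step 1 (expansion and counting).} Write $Y=\sum_{w(P)\le s} y_P P$. Then
\[
\norm Y_\infty\le \sum_P |y_P| \le \bigl(\dim V_{\le s}\bigr)^{1/2}\Bigl(\sum_P|y_P|^2\Bigr)^{1/2}.
\]
We also have $\sum_P |y_P|^2 = d^{-n}\norm Y_2^2$ and $\dim V_{\le s}\le \sum_{k\le s}\binom nk (d^2-1)^k\le (2n d^2)^{s}$. It therefore suffices to show
\[
\sum_P|y_P|^2\le e^{C(s+1)\log(2n)}\norm Y_{\rho_\beta}^2.
\]

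\textbf{Step 2 (dual functionals).} Each coefficient is $y_P=d^{-n}\Tr(PY)$. We rewrite it as a KMS pairing,
\[
y_P=d^{-n}\langle \rho_\beta^{-1/2}P\rho_\beta^{-1/2},Y\rangle_{\rho_\beta},
\]
and then apply Cauchy--Schwarz. This would need $d^{-n}\norm{\rho_\beta^{-1/2}P\rho_\beta^{-1/2}}_{\rho_\beta}$, which again involves the global $\rho_\beta^{-1}$ and is exponentially large.

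Instead, we use the orthogonality structure and find, for each $P$, a dual element $\tilde P$ with
\[
\langle \tilde P, P'\rangle_{\rho_\beta}=\delta_{PP'}\quad\text{for all } P'\in V_{\le s},
\]
and $\norm{\tilde P}_{\rho_\beta}\le e^{O(s)}$. Equivalently, we need a lower bound on the smallest eigenvalue of the KMS Gram matrix $G_{PP'}=\langle P,P'\rangle_{\rho_\beta}$ restricted to $V_{\le s}$, of the form
\[
G\succeq e^{-C(s+1)\log(2n)}\,\mathrm{Id}.
\]
Then
\[
\sum|y_P|^2\le \lambda_{\min}(G)^{-1}\norm Y_{\rho_\beta}^2,
\]
and the lemma follows.

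\textbf{Step 3 (Gram matrix lower bound, the main obstacle).} We would proceed by a localization argument. Group the sites into the union $S$ of the supports of the strings involved; the problem is that different strings have different supports. We therefore plan a recursive or partition-of-unity argument. First, we use the exponential clustering and local structure of one-dimensional Gibbs states from \cite{BC2025}, including quasi-local approximations of $\rho_\beta^{1/4}$ by local operators and local indistinguishability. With these, we compare $\norm{\rho_\beta^{1/4}Y\rho_\beta^{1/4}}_2$ with the analogous norm for the reduced state on a neighborhood of each support. For a single support of size at most $s$, the reduced marginal has minimal eigenvalue at least $e^{-O(s)}$, since the marginal of a bounded-interaction Gibbs state is comparable to $e^{-\beta H_{\text{loc}}}$ up to $e^{O(\beta)}$ boundary factors. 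This gives the local comparison with loss $e^{O(s)}$.

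Gluing the supports is the hard part. I would handle it by an induction on $s$. Alternatively, one can bound the off-diagonal cross terms between strings with distant supports by exponential decay of correlations, with cutoff distance $O(\log n)$ for each of the $s$ sites of a string. This yields a loss of $e^{O(s\log n)}$ rather than $e^{O(s)}$, which is exactly affordable in the claimed bound.
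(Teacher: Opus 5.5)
\textbf{There is a genuine gap in Step 3.} Your reduction in Steps 1--2 is sound: a bound $G\succeq e^{-C(s+1)\log(2n)}\,\mathrm{Id}$ on $V_{\le s}$ would suffice, and it is in fact implied by the lemma. But Step 3 is the entire content of the lemma, and it is only a plan. Worse, its guiding heuristic is false.

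\emph{The per-string loss cannot be $e^{O(s)}$.} Take $H=h\sum_i Z_i$, which is allowed in the nearest-neighbour class. Then $\rho_\beta$ is a product state with magnetization $m=\Tr(\rho_\beta Z_i)\neq0$. Let $Y=\sum_i(Z_i-mI)\in V_{\le1}$. Since $Y$ commutes with $\rho_\beta$, $\norm{Y}_{\rho_\beta}^2=\Tr(\rho_\beta Y^2)=n(1-m^2)$. The identity coefficient, however, is $y_I=-nm$. Hence $|y_I|/\norm Y_{\rho_\beta}\sim\sqrt n$, and $\lambda_{\min}(G)\leq(1-m^2)/(1+nm^2)\lesssim1/n$ already at $s=1$.

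This has three consequences for your plan.
\begin{itemize}
\item No dual element $\tilde I$ with $\norm{\tilde I}_{\rho_\beta}\le e^{O(s)}$ exists. Your ``equivalently'' in Step 2 conflates this false statement with the weaker, true Gram bound.
\item The $n^{\Theta(s)}$ loss is intrinsic to extracting a single coefficient. It is not merely an artefact of counting strings, so a localization argument using only data near each support cannot produce the right bound.
\item Exponential decay of correlations does not make off-diagonal Gram entries small. For distant supports, $\langle P,P'\rangle_{\rho_\beta}$ approximately factorizes into one-point quantities, which are $O(1)$; here $\langle Z_i,Z_j\rangle_{\rho_\beta}=m^2$ for all $i\neq j$. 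So Gershgorin-type or cutoff-distance arguments fail. The non-orthogonality of a string with its lower-weight relatives accumulates over the whole chain, and reduced marginals on individual supports cannot see it.
\end{itemize}

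\textbf{The missing ingredient} is the one-site multiplication bound: for every single-site Pauli $U$,
\[
\norm{UX}_{\rho_\beta},\ \norm{XU}_{\rho_\beta}\le M_\beta\norm X_{\rho_\beta},
\]
with $M_\beta$ independent of $n$. This follows from KMS H\"older plus locality of imaginary-time evolution, \cite[Lemmas II.2 and A.2]{BC2025}. With it, the paper needs no clustering and no Gram-matrix analysis.
\begin{itemize}
\item It isolates the weight-$w$ part as $Y_w=f_w(\mathcal N)Y$. Here $\mathcal N=\sum_i\bigl(\mathrm{id}-\mathbb E_{\omega_i}[\omega_i(\cdot)\omega_i^\dagger]\bigr)$ satisfies $\norm{\mathcal N}_{\rho_\beta\to\rho_\beta}\le n(1+M_\beta^2)$, and $f_w$ is the degree-$s$ Lagrange polynomial whose coefficients have absolute sum at most $2^s$. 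This gives $\norm{Y_w}_{\rho_\beta}\le[2n(1+M_\beta^2)]^s\norm Y_{\rho_\beta}$, and it is exactly where the $n^s$ loss your counterexample forces appears.
\item It then averages $P^\dagger Y_w$ over independent random Paulis on $\operatorname{supp}P$, which returns $y_PI$ at a cost of $M_\beta^{3w}$.
\end{itemize}
Together these give $|y_P|\le(Cn)^s\norm Y_{\rho_\beta}$ for each string. Summing over the at most $[1+(d^2-1)n]^s$ strings completes the proof.
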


Such a bound directly controls the
low-weight contribution through state normalization:
\begin{equation}\label{eq:framework-low-weight-normalization}
 \norm{Q_sA}_{\rho_\beta}^2
 =\ip{Q_sA}A
 =\Tr\bigl((Q_sA)^\dagger(\eta-\rho_\beta)\bigr)
 \leq2\norm{Q_sA}_\infty.
\end{equation}
This therefore gives
\begin{equation}\label{eq:framework-low-weight-bound}
 \norm{Q_sA}_{\rho_\beta}^2\leq4e^{2c_{\rm norm}\log(2n)(s+1)}.
\end{equation}
Although this bound grows exponentially in $s$, \reviewedit{its logarithm grows linearly in $(s+1)\log(2n)$}. \reviewedit{Specifically, when $F$ is sufficiently large, choosing $s=\Theta(\log(F)/\log(2n))$ with a sufficiently small implicit constant makes the bound at most $F/2$}, while the high-weight estimate \eqref{eq:framework-weight-energy} gives the term $\frac{F\log(F)}{\log^3(n)}$.
\end{itemize}

We will put the proof of Lemma~\ref{lem:framework-weight-dissipation} and Lemma~\ref{lem:framework-low-weight-norm} in Section~\ref{sec:proof-of-lemma} and discuss the proof strategy in Section~\ref{sec:proof-strategy}.

Now we combine Lemma~\ref{lem:framework-weight-dissipation} and Lemma~\ref{lem:framework-low-weight-norm} to prove Theorem~\ref{thm:framework-enhanced-dissipation}.

\begin{proof}[Proof of Theorem~\ref{thm:framework-enhanced-dissipation}]
Write $F=\norm A_{\rho_\beta}^2$, and
$D=\mathcal D_{\rho_\beta}(A)$. The case $F=0$ is immediate, so assume
$F>0$. Combining~\eqref{eq:framework-weight-split},
\eqref{eq:framework-weight-energy}, and
\eqref{eq:framework-low-weight-bound} gives
\[
 F\leq4e^{2c_{\rm norm}\log(2n)(s+1)}
       +\frac{\log^2(n)}{c_{\rm weight}(s+1)}D
\]
for every integer $s\geq0$. Choose the cutoff so that the second
term is at most $F/2$:
\[
 s+1=\left\lceil\frac{2\log^2(n)D}{c_{\rm weight}F}\right\rceil
 \leq\frac{2\log^2(n)D}{c_{\rm weight}F}+1
 \leq\left(\frac{2\log^2(n)}{c_{\rm weight}}
           +\frac1{c_{\rm gap}}\right)\frac DF.
\]
The last inequality uses $D\geq c_{\rm gap}F$ from
\eqref{eq:framework-gap}. Substitution and rearrangement give
\[
 F\leq8\exp\!\left[
 2c_{\rm norm}\log(2n)
 \left(\frac{2\log^2(n)}{c_{\rm weight}}
       +\frac1{c_{\rm gap}}\right)\frac DF\right].
\]
Since the exponent is nonnegative, it follows that
\[
 \log(1+F)\leq\log9+
 2c_{\rm norm}\log(2n)
 \left(\frac{2\log^2(n)}{c_{\rm weight}}
       +\frac1{c_{\rm gap}}\right)\frac DF.
\]
Using $\log(1+F)\leq F$ and $F\leq D/c_{\rm gap}$, we obtain
\[
 \begin{aligned}
 (1+F)\log(1+F)
 &\leq F\bigl[1+\log(1+F)\bigr]\\
 &\leq\left[
 \frac{1+\log9}{c_{\rm gap}}
 +\frac{4c_{\rm norm}}{c_{\rm weight}}\log(2n)\log^2(n)
 +\frac{2c_{\rm norm}}{c_{\rm gap}}\log(2n)
 \right]D.
 \end{aligned}
\]
For $n\geq2$, we have $\log(2n)\leq2\log n$ and
$\log n\geq\log2$. Thus
\[
 \mathcal D_{\rho_\beta}(A)
 \geq c_{\rm enh}\frac{(1+F)\log(1+F)}{\log^3(n)},
 \qquad
 c_{\rm enh}:=\left(
 \frac{8c_{\rm norm}}{c_{\rm weight}}
 +\frac{4c_{\rm norm}}{c_{\rm gap}(\log2)^2}
 +\frac{1+\log9}{c_{\rm gap}(\log2)^3}
 \right)^{-1}.
\]
This coefficient is positive and independent of $n$.
\end{proof}

\section{Intuition and proof ideas for the technical lemmas}\label{sec:proof-strategy}

The two lemmas address different difficulties created by \reviewedit{Pauli weight} in the KMS inner product. Lemma~\ref{lem:framework-weight-dissipation}
shows high-weight operators dissipate quickly, whereas
Lemma~\ref{lem:framework-low-weight-norm} controls the operator norm of
an arbitrary linear combination of low-weight operators. The estimates we seek are
\[
 \norm{X-Q_sX}_{\rho_\beta}^2
 \leq\frac{C\log^2(2n)}{s+1}\mathcal D_{\rho_\beta}(X),
 \qquad
 \norm Y_\infty\leq(Cn)^{cs}\norm Y_{\rho_\beta}
 \quad(Y\in V_{\leq s}).
\]
In this section, we explain the constructions behind these estimates; the detailed
proofs appear in Section~\ref{sec:proof-of-lemma}.

\subsection{Proof of Lemma~\ref{lem:framework-weight-dissipation}: From local projections to a weight-dissipation bound}
\label{sec:oned-decomposition-sketch}

\paragraph*{An ideal commuting decomposition.} We first consider an idealized scenario where the system can be decomposed into commuting parts.  Assume that there exists a partition of the chain into disjoint intervals $U_1,\ldots,U_m$ and there are commuting KMS orthogonal
projections $\mathcal P_i$ such that
\begin{itemize}
\item (Condition 1): $\mathcal P_i$ is the KMS orthogonal projection onto the subspace of operators that act as the identity on $U_i$, namely $\ran\mathcal P_i=I_{U_i}\otimes\mathcal B(\mathcal H_{U_i^c})$.
\item (Condition 2): $\mathcal P_i$ commutes with the local projections for disjoint intervals, $[\mathcal P_i,\mathcal P_j]=0$ for $U_i\cap U_j=\emptyset$.
\item (Condition 3): $\mathcal P_i$ satisfies a local-gap estimate, meaning that for some constant $C$ independent of $n$,
\begin{equation}\label{eqn:local-gap}
 \norm{(I-\mathcal P_i)X}_{\rho_\beta}^2
 \leq C\mathcal D_{\rho_\beta,U_i}(X),
\end{equation}
where
\begin{equation}\label{eqn:local-dissipation}
\mathcal D_{\rho_\beta,R}(X):=
\sum_{x\in R,u}\norm{[\sigma_{x,u},X]}_{\rho_\beta}^2\,.
\end{equation}
\end{itemize}
These projections give us a natural decomposition of $X$ into components that are supported on different subsets of intervals:
\begin{equation}\label{eqn:decomposition}
 X=\sum_{S\subseteq[m]}X_S,\qquad
 X_S=\prod_{i\in S}(I-\mathcal P_i)
      \prod_{i\notin S}\mathcal P_iX,
 \qquad
 \operatorname{supp}(X_S)\subseteq\bigcup_{i\in S}U_i.
\end{equation}
Commutativity preserves all the imposed identity factors, and
orthogonality makes distinct components orthogonal. Thus
\begin{equation}\label{eqn:decomposition-norm}
\begin{aligned}
 \sum_S|S|\norm{X_S}_{\rho_\beta}^2=
 \sum_{i}\sum_{S \ni i}\norm{X_S}_{\rho_\beta}^2
 =\sum_i\norm{(I-\mathcal P_i)X}_{\rho_\beta}^2
 \leq C\mathcal D_{\rho_\beta}(X),
\end{aligned}
\end{equation}
where the equality uses $(I-\mathcal{P}_i)X_S= 1_{i \in S} X_S $ and the orthogonality. This implies that
\[
 \norm{\sum_{|S|>k}X_S}_{\rho_\beta}^2
 \leq\frac{C}{k+1}\mathcal D_{\rho_\beta}(X).
\]
Now, let's assume the length of each interval is bounded by $\ell$. Then, any component $X_S$ with $|S|\leq k$ has Pauli weight at most $k\ell$. Because $Q_{k\ell}$ is an \reviewedit{KMS orthogonal projection onto} the space of operators with Pauli weight at most $k\ell$, we have
\[
\left\|(I-Q_{k\ell})X\right\|_{\rho_\beta}^2\leq  \norm{\sum_{|S|>k}X_S}_{\rho_\beta}^2
 \leq\frac{C\ell}{(k+1)\ell}\mathcal D_{\rho_\beta}(X)\,.
\]
This gives a weight-dissipation bound with a polylogarithmic loss in $n$ if $\ell=\mathrm{polylog}n$.

Unfortunately, the idealized scenario above does not hold for the correlated Gibbs state. The third condition largely restricts the choice of $\reviewedit{\mathcal P_i}$. In~\cite{BC2025}, the authors construct a \reviewedit{KMS orthogonal projection} $\mathbb E_U$, called \reviewedit{a spectral conditional expectation}, that is KMS orthogonal and satisfies~\eqref{eqn:local-gap}. However, one can not directly use $\mathbb E_U$ because it can act nontrivially outside $U$, and projections for disjoint intervals need not commute. In particular, applying a second projection can destroy the identity factors imposed by the first. These difficulties can occur even for commuting interacting Hamiltonians.

To overcome this difficulty, we will introduce two steps of \reviewedit{domain} decompositions that involve a buffered region.

\paragraph{Step 1: Relaxing condition 1.}

In the \reviewedit{first} step of the separation, we will relax condition 1, guarantee condition 2 exactly, and preserve condition 3 approximately. Specifically, given an interval $U_i$, we will divide it into three parts $U_i=A_iB_iC_i$, where $B_i$ is the main part of the interval and $A_i,C_i$ are buffers, see Figure~\ref{fig:oned-interval-geometry}.
\begin{figure}[!htbp]
\centering
\begin{tikzpicture}[
 every node/.style={font=\small},
 side/.style={draw=orange!65!black,fill=orange!12,line width=.6pt},
 middle/.style={draw=blue!65!black,fill=blue!9,line width=.6pt}]
 \foreach \x/\lab/\sty in {
  0/B_1/middle,1.6/C_1/side,3.2/A_2/side,4.8/B_2/middle,
  6.4/C_2/side,8/A_3/side,9.6/B_3/middle}{
  \draw[\sty] (\x,0) rectangle ++(1.6,.7);
  \node at (\x+.8,.35) {$\lab$};
 }
 \foreach \l/\r/\lab in {0/3.2/U_1,3.2/8/U_2,8/11.2/U_3}{
  \draw[blue!70!black] (\l,.9)--(\l,1.12)--(\r,1.12)--(\r,.9);
  \node[above] at ({(\l+\r)/2},1.12) {$\lab$};
 }
 \foreach \l/\r/\lab in {1.6/4.8/{C_1A_2},6.4/9.6/{C_2A_3}}{
  \draw[orange!70!black] (\l,-.12)--(\l,-.36)--(\r,-.36)--(\r,-.12);
  \node[below] at ({(\l+\r)/2},-.36) {$\lab$};
 }
 \node[above,font=\footnotesize] at (1.6,1.62)
  {$\mathcal T_1:\ B_1\mapsto I$};
 \node[above,font=\footnotesize] at (5.6,1.62)
  {$\mathcal T_2:\ B_2\mapsto I$};
 \node[above,font=\footnotesize] at (9.6,1.62)
  {$\mathcal T_3:\ B_3\mapsto I$};
 \node[anchor=north west,font=\footnotesize] at (0,-.85) {site $1$};
 \node[anchor=north east,font=\footnotesize] at (11.2,-.85) {site $n$};
\end{tikzpicture}
\caption{The partition for $m=3$, with $A_1=C_3=\emptyset$.
Each $\mathcal T_i$ acts inside $U_i=A_iB_iC_i$ and makes
$B_i$ an identity factor. The intervals $C_iA_{i+1}$ between
consecutive $B_i$ are handled by the second decomposition.}
\label{fig:oned-interval-geometry}
\end{figure}

The relaxation in Condition 1 is that we will only require the projection to impose an identity factor on $B_i$, while allowing nontrivial action on the buffers $A_i,C_i$.  Specifically, there are two superoperators we care about:
\begin{itemize}
\item Exact conditional expectation $\mathbb E_{B_i}$:
\[
\mathbb E_{B_i}(X) = \operatorname{Proj}^{\rho_\beta}_{I_{B_i}\otimes\mathcal B(\mathcal H_{B_i^c})}(X)=\text{argmin}_{Y\in I_{B_i}\otimes\mathcal B(\mathcal H_{B_i^c})}\norm{X-Y}_{\rho_\beta}.
\]
This is KMS orthogonal and satisfies the local-gap estimate (Condition 3)       according to~\cite[Lemma III.9, Propositions V.1-V.2]{BC2025}. However, it can act nontrivially outside $B_i$, and projections for disjoint intervals need not commute, meaning Conditions 1 and 2 are not satisfied.

\item Localized commuting projections $\mathcal T_i$: Define
\begin{equation}\label{eq:sketch-local-projection}
 \mathcal T_i:=\mathbb E_{B_i}^{(U_i)}
 :=\left(\operatorname{Proj}^{\rho_{\beta,U_i}^{\mathrm{loc}}}_{
    I_{B_i}\otimes\mathcal B(\mathcal H_{U_i\setminus B_i})}\right)
    \otimes\mathrm{id}_{U_i^c}.
\end{equation}
For a simple tensor, this means
\[
 \mathcal T_i(X_{U_i}\otimes X_{U_i^c})
 =\left(\underset{
 Y\in I_{B_i}\otimes\mathcal B(\mathcal H_{U_i\setminus B_i})}
 {\operatorname{argmin}}
 \norm{X_{U_i}-Y}_{\rho_{\beta,U_i}^{\mathrm{loc}}}\right)
 \otimes X_{U_i^c}.
\]
Here
\[
 \rho_{\beta,U}^{\mathrm{loc}}
 =\frac{e^{-\beta H_U}}{\Tr e^{-\beta H_U}},\qquad
 H_U=\sum_{\{i,i+1\}\subseteq U}h_{i,i+1},
\]
is the Gibbs state on $\mathcal H_U$ of the Hamiltonian restricted
to $U$; it is generally different from the reduced state
$\Tr_{U^c}\rho_\beta$. Because the projection is defined with respect to the local Gibbs state, it acts as the identity outside $U_i$. In particular, $\mathcal T_i$ and $\mathcal T_j$ commute for disjoint intervals $U_i\cap U_j=\emptyset$, so Condition 2 is rigorously satisfied. However, because this projection is defined with respect to the local Gibbs state, it is not KMS orthogonal for the full Gibbs state $\rho_\beta$. This means that Condition 3 is not satisfied exactly.
\end{itemize}

\reviewedit{Both superoperators map any operator to one that is the identity on $B_i$.} In the first step, we won't try to extend the identity factor to the buffers $A_i,C_i$. Instead, we will use the buffers to control the error between $\mathbb E_{B_i}$ and $\mathcal T_i$. Thanks to the buffers $A_i,C_i$ and the locality of the Gibbs state, we can show that $\mathcal T_i$ is close to $\mathbb E_{B_i}$ in the KMS norm. Specifically, for $|A_i|,|B_i|,|C_i|=\Theta(r)=\Theta(\log(2n))$, we have
\begin{equation}\label{eqn:buffer-error}
 \norm{\mathcal T_i-\mathbb E_{B_i}}_{\rho_\beta\to\rho_\beta}
 \leq C\exp\{C|B_i|-cr\log(r+2)\}=\mathcal{O}(n^{-1})
\end{equation}
see Lemma~\ref{lem:oned-local}. This inspires us to choose the length of the buffers and the intervals to be logarithmic in $n$, meaning $\ell=\reviewedit{\Theta(\log(2n))}$.

Now, we can first replace $\mathcal P_i$ by $\mathcal T_i$ in the product expansion \reviewedit{to define} $X_P=\mathcal R_PX$, where
\[
 \mathcal R_P:=\prod_{i\in P}(I-\mathcal T_i)
               \prod_{i\notin P}\mathcal T_i.
\]
This formula is very similar to the idealized decomposition in~\eqref{eqn:decomposition}. Because $\mathcal{T}_i$ are commuting, and approximately satisfy the KMS orthogonality and \reviewedit{local-gap estimate} thanks to~\eqref{eqn:buffer-error}, using \reviewedit{a similar calculation} as in~\eqref{eqn:decomposition-norm}, we can show that
\[
 \begin{aligned}
 \sum_P|P|\norm{X_P}_{\rho_\beta}^2\leq C\sum_i\norm{(I-\mathcal T_i)X}_{\rho_\beta}^2\leq C\bigl[\mathcal D_{\rho_\beta}(X)+\norm X_{\rho_\beta}^2\bigr]
 \end{aligned}
\]
This is stated in Proposition~\ref{prop:oned-first-decomposition}. The additional norm term records the error in replacing the full-state
KMS projection by a local map. The logarithmic buffers make this
error small despite summing over all intervals. This completes the first step of the decomposition.

In summary, in the first step, we successfully decompose $X$ into components $X_P$ that approximately satisfy~\eqref{eqn:decomposition-norm}. However, $X_P$ is not guaranteed to be supported on the union of the intervals $U_i$ with $i\in P$. Instead, it may \reviewedit{act nontrivially} on the buffers $A_i,C_i$ of the intervals $U_i$ with $i\notin P$. This gives a \reviewedit{relaxation} of Condition 1. {The second step will further decompose $X_P$ on these intervening regions, yielding components with controlled support.}

\paragraph{Step 2: Removing the intervening support.}

The first decomposition leaves a specific problem:
\[
 \operatorname{supp}(X_P)
 \subseteq\bigcup_{i\in P}B_i\cup \bigcup_{j=1}^{m-1}D_j,\quad D_j:=C_j\cup A_{j+1}
\]
\reviewedit{Although $P$ may be small, $X_P$ may depend on every interval $D_j$, even when $j,j+1\notin P$.}

Fortunately, if $j,j+1\notin P$, then $X_P$ is already the identity on the main parts $B_j$ and $B_{j+1}$, meaning that
\begin{equation}\label{eqn:XP_structure}
X_P= I_{B_j \cup B_{j+1}} \otimes Y, \quad Y\in\mathcal B\!\left( \mathcal H_{[n]\setminus(B_j\cup B_{j+1})} \right).
\end{equation}
This special structure allows us to remove the dependence on $D_j$ without affecting the identity factors on $B_j$ and $B_{j+1}$, ensure commutativity, and preserve the local-gap estimate simultaneously. In \reviewedit{contrast} to the first step, we will now see $B_j$ and $B_{j+1}$ as buffers for the interval $D_j$ to control the error in removing the dependence on $D_j$ and preserving the local-gap estimate. Specifically, we set
\begin{equation}\label{eqn:G_def}
 \begin{aligned}
 J_P&:=\{\reviewedit{j\in[m-1]}:j,j+1\notin P\},\qquad
 \Lambda_j:=B_jD_jB_{j+1},\\
 \mathcal G_j(X)&:=\mathbb{E}^{(\Lambda_j)}_{\Lambda_j}(X)=I_{\Lambda_j}\otimes
       \Tr_{\Lambda_j}(\rho_{\beta,\Lambda_j}^{\rm loc}X).
 \end{aligned}
\end{equation}
This decomposition can be seen in Figure~\ref{fig:oned-second-decomposition}.

\begin{figure}[!htbp]
\centering
\begin{tikzpicture}[
 every node/.style={font=\small},
 side/.style={draw=orange!65!black,fill=orange!12,line width=.6pt},
 middle/.style={draw=blue!65!black,fill=blue!9,line width=.6pt}]
 \foreach \x/\lab in {0/B_j,4.8/B_{j+1},9.6/B_{j+2}}{
  \draw[middle] (\x,0) rectangle ++(1.6,.85);
  \node at (\x+.8,.59) {$\lab$};
  \node[font=\footnotesize] at (\x+.8,.23) {$I$};
 }
 \foreach \x/\lab in {1.6/C_j,3.2/A_{j+1},6.4/C_{j+1},8/A_{j+2}}{
  \draw[side] (\x,0) rectangle ++(1.6,.85);
  \node at (\x+.8,.425) {$\lab$};
 }
 \draw[blue!70!black] (0,1.03)--(0,1.25)--(6.4,1.25)--(6.4,1.03);
 \node[above] at (3.2,1.25) {$\Lambda_j$};
 \draw[blue!70!black] (4.8,-1.28)--(4.8,-1.5)--(11.2,-1.5)--(11.2,-1.28);
 \node[below] at (8,-1.5) {$\Lambda_{j+1}$};
 \foreach \l/\r/\lab in {1.6/4.8/D_j,6.4/9.6/D_{j+1}}{
  \draw[orange!70!black] (\l,-.12)--(\l,-.36)--(\r,-.36)--(\r,-.12);
  \node[below] at ({(\l+\r)/2},-.36) {$\lab$};
 }
 \node[font=\footnotesize] at (3.2,-.96)
  {$\mathcal G_j:\ D_j\mapsto I$};
 \node[font=\footnotesize] at (8,-.96)
  {$\mathcal G_{j+1}:\ D_{j+1}\mapsto I$};
\end{tikzpicture}
\caption{The second decomposition for $j,j+1\in J_P$.
The blue intervals are already identity factors of $X_P$.
Each $\mathcal G_j$ averages over $\Lambda_j$ with
$\rho_{\beta,\Lambda_j}^{\mathrm{loc}}$, making $D_j$ an identity
factor as well; the factor $I-\mathcal G_j$ allows dependence on $D_j$.
The two averaging regions overlap only on the identity factor $B_{j+1}$.}
\label{fig:oned-second-decomposition}
\end{figure}

If we naively apply $\mathcal G_j$ to general $X$, \reviewedit{it does not give the required estimates} because both conditions 2 and 3 are violated with $O(1)$ error. The key observation is that we only need to apply $\mathcal G_j$ to $X_P$. \reviewedit{Let $W_P$ denote the subspace of operators that are the identity on every $B_h$ with $h\notin P$.} Specifically, we can show:
\[
[\mathcal{G}_i,\mathcal{G}_j](X)=0,\quad i\neq j,\quad \left\|(\mathcal{G}_j-\mathbb{E}_{R_j})X\right\|_{\rho_\beta}=\mathcal{O}(\left\|X\right\|_{\rho_\beta}/n),\quad \text{\reviewedit{$X\in W_P$ and $i,j\in J_P$}}\,,
\]
where $R_j=(B_j)_r\cup D_j\cup(B_{j+1})_\ell$ is the union of the right half of $B_j$, the interval $D_j$, and the left half of $B_{j+1}$. Here, because we see $B_j$ and $B_{j+1}$ as buffers, we need a lower bound on the size of $B_j$ and $B_{j+1}$, which explains the choice of $|B_i|=\Theta(\log(2n))$.

After defining $\mathcal{G}$, we can naturally decompose $X_P$ into components that are supported on different subsets of intervals $D_j$ with $j\in J_P$, similar to~\eqref{eqn:decomposition}:
\[
 Z_{P,T}:=\prod_{j\in T}(I-\mathcal G_j)
          \prod_{j\in J_P\setminus T}\mathcal G_jX_P,
 \qquad
 \sum_{T\subseteq J_P}|T|\norm{Z_{P,T}}_{\rho_\beta}^2
 \leq C\bigl[\mathcal D_{\rho_\beta}(X_P)+ \norm{X_P}_{\rho_\beta}^2\bigr]
\]
The bound is stated in Proposition~\ref{prop:oned-second-decomposition}. Each $Z_{P,T}$ is supported on at most $3|P|+2|T|$ of the original
intervals $U_i$. \reviewedit{This gives the final $\widehat X_S$ in Proposition~\ref{prop:oned-decomposition}} by grouping the $Z_{P,T}$ by their support intervals. {In the next step, we combine the two decompositions, using the estimates from both steps to establish the desired bound for the final decomposition.}

\paragraph{Step 3: Controlling the sum of the dissipation.}

According to the second step, with $\ell=\reviewedit{\Theta(\log(2n))}$, we have a decomposition that guarantees a similar estimation as in~\eqref{eqn:decomposition-norm}. However, the right-hand side of the estimation contains $\reviewedit{\sum_P\bigl[\mathcal D_{\rho_\beta}(X_P)+\left\|X_P\right\|_{\rho_\beta}^2\bigr]}$, which is not directly upper bounded by $\mathcal D_{\rho_\beta}(X)+\norm X_{\rho_\beta}^2$.

Thanks to the approximate orthogonality of the decomposition, which involves an error of $\mathcal{O}(1/n)$, we can show that
\[
 \sum_P\mathcal D_{\rho_\beta}(X_P)
 \leq C\ell\bigl[\mathcal D_{\rho_\beta}(X)+\frac{1}{n}\norm X_{\rho_\beta}^2\bigr]\leq C\log(n)\bigl[\mathcal D_{\rho_\beta}(X)+\norm X_{\rho_\beta}^2\bigr]\,.
\]
This gives an additional factor of $\log(n)$ in the final estimation. Grouping the $Z_{P,T}$ by their
support intervals, and using approximate orthogonality to sum any
selected family, yields Proposition~\ref{prop:oned-decomposition}:
\[
 X=\sum_S\widehat X_S,\
 \operatorname{supp}(\widehat X_S)\subseteq\bigcup_{i\in S}U_i,
\
 \norm{\sum_{|S|>k}\widehat X_S}_{\rho_\beta}^2
 \leq\frac{C\ell}{k+1}
       \bigl[\mathcal D_{\rho_\beta}(X)+\norm X_{\rho_\beta}^2\bigr]\leq \frac{C\ell^2}{(k+1)\ell}\bigl[\mathcal D_{\rho_\beta}(X)+\norm X_{\rho_\beta}^2\bigr]\,,
\]
where $(k+1)\ell$ is \reviewedit{an upper bound on} the number of sites in the union of $k+1$ intervals. Finally, the uniform gap absorbs $\norm X_{\rho_\beta}^2$
into $\mathcal D_{\rho_\beta}(X)$ and gives the desired weight-dissipation bound in Lemma~\ref{lem:framework-weight-dissipation}.

\subsection{Proof of Lemma~\ref{lem:framework-low-weight-norm}: Controlling the coefficients of low-weight operators}

An operator $Y\in V_{\leq s}$ can act on the entire chain, even
though each Pauli string in its expansion is supported on at most
$s$ sites. To prove Lemma~\ref{lem:framework-low-weight-norm},
we therefore bound its Pauli coefficients individually in terms
of its KMS norm. Writing $Y$ in the Pauli basis gives
\[
 Y=\sum_{w(P)\leq s}y_PP,
 \qquad
 \norm Y_\infty\leq\sum_{w(P)\leq s}|y_P|.
\]
There are at most $(Cn)^s$ such strings, so it suffices to bound
each coefficient by $(Cn)^{cs}\norm Y_{\rho_\beta}$. Since Pauli
strings are generally not orthogonal in the KMS inner product,
taking the KMS inner product with $P$ does not directly extract
$y_P$.

Instead, we isolate $y_P$ in two steps. First, Lagrange
interpolation in the Pauli weight operator $\mathcal N$, defined
below, selects the component $Y_w$ of weight exactly $w$. Second,
for a string $P$ of weight $w$, averaging $P^\dagger Y_w$ over
Pauli conjugations on $\operatorname{supp}P$ leaves only $y_PI$.
To control both steps in the KMS norm, we combine the H\"older
inequality in~\cite[Lemma II.2]{BC2025} with the locality of
complex-time evolution in one dimension
from~\cite[Lemma A.2]{BC2025}. Together, these results bound the
KMS norm of multiplication by a single-site Pauli operator
uniformly in the system size.

These two steps have a natural interpretation in quantum Fourier
analysis. Pauli weight plays the role of Fourier degree, and
$e^{-t\mathcal N}$ is the product depolarizing semigroup, which
acts diagonally on the Pauli basis. The corresponding noise
superoperator is defined and identified with a product
depolarizing channel
in~\cite[Definition 44 and Proposition 45]{MO2010}.
Likewise, multiplication by a Pauli string followed by a
normalized partial trace extracts an operator-valued Fourier
coefficient, as described
in~\cite[Definition 27 and Lemma 28]{MO2010}. The corresponding
conditional expectation, with the identity restored on the
traced sites, is precisely the local Pauli average used in our
second step~\cite[Proof of Proposition 6.9]{RWZ2024}.

\paragraph*{Step 1: selecting one weight by a polynomial.}
Write $Y=\sum_{w=0}^sY_w$, where $Y_w$ contains the Pauli strings
of weight exactly $w$. To extract $Y_w$, we first construct a map that multiplies each Pauli string by its weight and then apply a polynomial of that map that vanishes on all other weights. Since the largest weight of a Pauli string in $Y$ is $s$, the polynomial has degree at most $s$, which ensures the norm amplification can be controlled by a constant to the power $s$. The key then is to find the system size dependence of that constant.

In Section~\ref{sec:oned-norm}, the map and the polynomial are defined as:
\[
 \mathcal N(X):=\sum_{i=1}^n
       \bigl(X-\mathbb E_{\omega_i}[\omega_iX\omega_i^\dagger]\bigr),\quad  f_w(x):=\prod_{\substack{0\leq j\leq s\\j\ne w}}
                 \frac{x-j}{w-j},
\]
To control the norm amplification in $\mathcal{N}$, we apply imaginary-time estimates used in~\cite[Lemmas II.2 and A.2]{BC2025} to show that there exists a constant $M_\beta\geq1$ such that $\norm{\mathbb E_{\omega_i}[\omega_iX\omega_i^\dagger]}_{\rho_\beta}\leq \reviewedit{M_\beta^2}\norm X_{\rho_\beta}$ for every $X$.

Combining this and the bound of the absolute coefficients of $f_w$ gives
\[
 \norm{Y_w}_{\rho_\beta}\leq[2n(1+M_\beta^2)]^s\norm Y_{\rho_\beta}\,.
\]
Here $n^s$ comes from the fact that this polynomial has degree $s$.

\paragraph*{Step 2: extracting one coefficient by Pauli averaging.}
Inspired by the well-known partial trace identity:
\[
\frac1{d^2}\sum_{u=0}^{d^2-1}\sigma_{j,u}A\sigma_{j,u}^\dagger
=\frac{I_j}{d}\otimes\operatorname{Tr}_j(A),\quad \forall A\in\mathcal B(\mathcal H).
\]
We extract $y_P$ from $Y_w$ using the identity (Set $A=P^\dagger Y_w$):
\[
 \mathbb E_\omega[\omega P^\dagger Y_w\omega^\dagger]=y_PI.
\]
where $\omega$ \reviewedit{takes} an independent uniformly random single-site
Pauli at each site in $\operatorname{supp}P$\reviewedit{.}

A string $R$ in $Y_w$ survives this average only if it agrees with
$P$ on all $w$ sites of $\operatorname{supp}P$. Since $R$ also has
weight $w$, this forces $R=P$. This explains why selecting the
weight first is useful: strings of larger weight could otherwise
survive outside $\operatorname{supp}P$.~Applying~\cite[Lemmas II.2 and A.2]{BC2025} again, we can control $\left\|\omega P^\dagger Y_w\omega^\dagger\right\|_{\rho_\beta}\leq M_\beta^{3w}\norm{Y_w}_{\rho_\beta}$, where $M_\beta$ is the same constant as in Step 1. This gives the desired bound on $y_P$ in terms of $\norm{Y_w}_{\rho_\beta}$. Summing over all $P$ of weight at most $s$ then gives the final bound on $\norm{Y}_\infty$ in terms of $\norm{Y}_{\rho_\beta}$.

\section{Rapid mixing of one-dimensional local spin systems}\label{sec:proof-of-lemma}

In this section, we prove Lemma~\ref{lem:framework-weight-dissipation} and Lemma~\ref{lem:framework-low-weight-norm}.

\subsection{Proof of Lemma~\ref{lem:framework-weight-dissipation}: Decomposition by intervals}
\label{sec:oned-approximation}
The key step to prove Lemma~\ref{lem:framework-weight-dissipation} is to decompose an arbitrary operator into terms supported on unions of intervals, with a bound on the contribution from many intervals.

\begin{proposition}[Decomposition by intervals]\label{prop:oned-decomposition}
There is a partition $[n]=\bigsqcup_{i=1}^m U_i$ into consecutive
intervals, with $\ell=\max_i|U_i|\leq C\log(2n)$, such that every
operator $X$ admits a decomposition
\begin{equation}\label{eq:oned-decomposition-target}
 X=\sum_{S\subseteq[m]}\widehat X_S,\qquad
 \operatorname{supp}\left(\widehat X_S\right)\subseteq\bigcup_{i\in S}U_i,
\end{equation}
satisfying, for every integer $k\geq0$,
\begin{equation}\label{eq:oned-tail-target}
 \norm{\sum_{|S|>k}\widehat X_S}_{\rho_\beta}^2
 \leq\frac{C\ell}{k+1}[\mathcal D_{\rho_\beta}(X)+\norm X_{\rho_\beta}^2].
\end{equation}
The constant $C$ depends only on $\beta,q$.
\end{proposition}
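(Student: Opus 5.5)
The proof follows the three-step outline of Section~\ref{sec:proof-strategy}. First fix the geometry. Let $r=\lceil c_0\log(2n)\rceil$ with $c_0$ a large constant. Partition $[n]$ into consecutive intervals $U_i=A_iB_iC_i$ with $|A_i|,|B_i|,|C_i|\in[r,2r]$ (and $A_1=C_m=\emptyset$), so that $\ell\le 6r\le C\log(2n)$. Define $\mathcal T_i=\mathbb E^{(U_i)}_{B_i}$ as in~\eqref{eq:sketch-local-projection}.

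The first ingredient is the buffer estimate~\eqref{eqn:buffer-error}, $\|\mathcal T_i-\mathbb E_{B_i}\|_{\rho_\beta\to\rho_\beta}\le Cn^{-2}$ for $c_0$ large (Lemma~\ref{lem:oned-local}). I would prove it from exponential clustering and quasi-locality of $\rho_\beta^{1/4}(\cdot)\rho_\beta^{-1/4}$ from~\cite{BC2025}. Next I would use the local gap $\|(I-\mathbb E_{B_i})X\|^2_{\rho_\beta}\le C\mathcal D_{\rho_\beta,B_i}(X)$ from~\cite{BC2025}.

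\textbf{First decomposition.} Set $X_P=\mathcal R_PX$. The $\mathcal T_i$ commute because they act on disjoint tensor factors, so $\sum_PX_P=X$. Because $\mathcal T_i$ is within $O(n^{-2})$ of a KMS orthogonal projection, the family $\{X_P\}$ is approximately orthogonal. To turn this into sums, I would not expand over $2^m$ terms. Instead I would apply the exact-projection identity $\sum_{P\ni i}\|X_P\|^2=\|(I-\mathcal P_i)X\|^2$ with each $\mathcal T_j$ replaced by $\mathbb E_{B_j}$ one at a time, with a telescoping error of at most $m\cdot O(n^{-2})\|X\|^2$. This gives
\[
\sum_P|P|\,\|X_P\|^2_{\rho_\beta}\le C\sum_i\|(I-\mathcal T_i)X\|^2_{\rho_\beta}\le C[\mathcal D_{\rho_\beta}(X)+\|X\|^2_{\rho_\beta}]
\]
(Proposition~\ref{prop:oned-first-decomposition}). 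The same approximate orthogonality also gives $\|\sum_{P\in\mathcal F}X_P\|^2\le 2\sum_{P\in\mathcal F}\|X_P\|^2+O(n^{-1})\|X\|^2$ for any family $\mathcal F$.

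\textbf{Second decomposition.} Let $W_P$ be the operators that are the identity on every $B_h$ with $h\notin P$. For $X\in W_P$ and $j\in J_P$, the maps $\mathcal G_j$ act on disjoint effective regions $D_j$, because the overlap $B_{j+1}$ is already an identity factor. Hence they commute on $W_P$ and map $W_P$ into itself. I would show $\|(\mathcal G_j-\mathbb E_{R_j})X\|\le Cn^{-2}\|X\|$ on $W_P$ by the same clustering argument, and then use the local gap for $\mathbb E_{R_j}$. Repeating the first-step argument gives Proposition~\ref{prop:oned-second-decomposition}:
\[
\sum_T|T|\,\|Z_{P,T}\|^2\le C[\mathcal D_{\rho_\beta}(X_P)+\|X_P\|^2].
\]
Each $Z_{P,T}$ is supported on the intervals indexed by $P$, their neighbours $P\pm1$, and the intervals adjacent to $T$. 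Define $\widehat X_S$ as the sum of the $Z_{P,T}$ whose support set is exactly $S$; this set has size at most $3|P|+2|T|$.

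\textbf{Step 3 (main obstacle).} We need
\[
\sum_P\mathcal D_{\rho_\beta}(X_P)\le C\ell[\mathcal D_{\rho_\beta}(X)+\|X\|^2].
\]
Write $\mathcal D_{\rho_\beta}=\sum_x\mathcal D_{\rho_\beta,\{x\}}$. For a site $x$, the commutator $[\sigma_{x,u},\cdot]$ commutes with every $\mathcal T_i$ except the one with $x\in U_i$. So $[\sigma_{x,u},X_P]=\mathcal R'_P[\sigma_{x,u},Y]$, where $\mathcal R'_P$ is the product of the remaining projections and $Y$ is either $\mathcal T_iX$ or $(I-\mathcal T_i)X$. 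Approximate orthogonality of $\{\mathcal R'_P\}$ over the other indices then bounds $\sum_P\|[\sigma_{x,u},X_P]\|^2$ by $C\|[\sigma_{x,u},\mathcal T_iX]\|^2+C\|[\sigma_{x,u},X]\|^2+O(n^{-1})\|X\|^2$.

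The hard part is bounding $\sum_{x\in U_i}\|[\sigma_{x,u},\mathcal T_iX]\|^2$ by $C\ell\,\mathcal D_{\rho_\beta,U_i}(X)$ plus an error. I would write the local projection $\mathcal T_i$ as a composition of single-site-type local maps. Then, by Cauchy--Schwarz over at most $\ell$ commutator contributions, each term is bounded by $\ell$ times a local Dirichlet form, with $\rho_\beta$-norm equivalences supplied by~\cite[Lemmas II.2, A.2]{BC2025}. This Cauchy--Schwarz step is where the factor $\ell$ enters.

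\textbf{Conclusion.} Given $k$, the condition $|S|>k$ forces $|P|>k/5$ or $|T|>k/5$. By approximate orthogonality,
\[
\Big\|\sum_{|S|>k}\widehat X_S\Big\|^2\le C\sum_P\sum_{T}\mathbf 1_{3|P|+2|T|>k}\|Z_{P,T}\|^2+O(n^{-1})\|X\|^2.
\]
For the terms with $|T|>k/5$, use the $|T|$-weighted bound together with Step~3. For the terms with $|P|>k/5$, use $\sum_T\|Z_{P,T}\|^2\lesssim\|X_P\|^2$ and the $|P|$-weighted bound. Both are $\le\frac{C\ell}{k+1}[\mathcal D_{\rho_\beta}(X)+\|X\|^2]$. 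The residual $O(n^{-1})\|X\|^2$ is absorbed because $k+1\le m+1\le n$; the case $k\ge m$ is trivial since then the sum is empty.
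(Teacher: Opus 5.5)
Your architecture matches the paper's: commuting buffered projections $\mathcal T_i$, the second decomposition by $\mathcal G_j$ on $W_P$, grouping by $\Gamma(P,T)$ with $|\Gamma(P,T)|\le 3|P|+2|T|$, and the tail split into $|P|>k/5$ or $|T|>k/5$. However, the step you flag as the main obstacle is not supplied by a workable argument.

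You propose to bound $\sum_{x\in U_i}\norm{[\sigma_{x,u},\mathcal T_iX]}_{\rho_\beta}^2$ by writing $\mathcal T_i$ as a composition of single-site maps and applying Cauchy--Schwarz. No such factorization is available. $\mathcal T_i$ is the KMS-orthogonal projection, for the correlated state $\rho^{\mathrm{loc}}_{\beta,U_i}$, onto $I_{B_i}\otimes\mathcal B(\mathcal H_{U_i\setminus B_i})$. It is obtained by inverting the non-product map $Z\mapsto\Tr_{B_i}\bigl((\rho^{\mathrm{loc}}_{\beta,U_i})^{1/2}(I_{B_i}\otimes Z)(\rho^{\mathrm{loc}}_{\beta,U_i})^{1/2}\bigr)$. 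Single-site KMS projections for this state do not commute, and finite products of them are not the projection onto the intersection of their ranges.

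The missing, and much simpler, observation is that no internal structure of $\mathcal T_i$ is needed. Write $[\sigma_{x,u},\mathcal T_iX]=[\sigma_{x,u},X]-[\sigma_{x,u},(I-\mathcal T_i)X]$ and use the uniform bound $\norm{[\sigma_{x,u},Z]}_{\rho_\beta}\le 2M_\beta\norm Z_{\rho_\beta}$, the single-site multiplication bound of Section~\ref{sec:oned-norm}. This gives
\[
\norm{[\sigma_{x,u},\mathcal T_iX]}_{\rho_\beta}^2\le 2\norm{[\sigma_{x,u},X]}_{\rho_\beta}^2+C\norm{(I-\mathcal T_i)X}_{\rho_\beta}^2 .
\]
Your first step already gives $\norm{(I-\mathcal T_i)X}_{\rho_\beta}^2\le\frac{2}{\lambda_{\mathrm{loc}}}\mathcal D_{\rho_\beta,B_i}(X)+2\delta^2\norm X_{\rho_\beta}^2$. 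Summing over the at most $(4^q-1)\ell$ pairs $(x,u)$ with $x\in U_i$, and then over $i$, yields $\sum_P\mathcal D_{\rho_\beta}(X_P)\le C\ell[\mathcal D_{\rho_\beta}(X)+n\delta^2\norm X_{\rho_\beta}^2]$. So the factor $\ell$ comes from charging the \emph{same} local error $\norm{(I-\mathcal T_i)X}_{\rho_\beta}^2$ once per site of $U_i$, not from a Cauchy--Schwarz over commutator contributions. This is Claim~\ref{clm:oned-derivative-bound}, which the paper obtains from $(I-\mathcal T_i)\partial\mathcal T_i=0$, i.e.\ $[\partial,\mathcal T_i]X=-\mathcal T_i\partial(I-\mathcal T_i)X$.

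Two smaller points. First, you describe approximate orthogonality as telescoping from an exact Pythagorean identity with each $\mathcal T_j$ replaced by $\mathbb E_{B_j}$. The $\mathbb E_{B_j}$ do not commute, so no such exact identity exists for them. What works is to split off one commuting idempotent at a time, using $\bigl|\norm{\mathcal T_hZ}_{\rho_\beta}^2+\norm{(I-\mathcal T_h)Z}_{\rho_\beta}^2-\norm Z_{\rho_\beta}^2\bigr|\le(2\delta+2\delta^2)\norm Z_{\rho_\beta}^2$ (from comparison with the orthogonal $\mathbb E_{B_h}$), and to iterate as in Lemma~\ref{lem:projection-iteration}. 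This gives the two-sided multiplicative bound for arbitrary families $Y_P\in\ran\mathcal R_P$. You need both directions: the lower one for the weighted counts and in Step 3, the upper one in the conclusion.

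Second, in the conclusion you apply orthogonality over $P$ to the grouped sums $\sum_T Z_{P,T}$. This requires $Z_{P,T}\in\ran\mathcal R_P$, i.e.\ that each $\mathcal G_j$ with $j\in J_P$ preserves $\ran\mathcal R_P$. It does: for $i\in P$, $U_i\cap\Lambda_j=\emptyset$, so $\mathcal T_i$ commutes with $\mathcal G_j$; for $i\notin P$, $\mathcal T_i$ fixes $W_P\supseteq\mathcal G_j(W_P)$. This should be checked explicitly, as in Lemma~\ref{lem:oned-global-sum}.
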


This proposition implies that
\[
\frac{\mathcal D_{\rho_\beta}(X)+\norm{X}_{\rho_\beta}^2}
{\left\|\hat{X}_S\right\|^2_{\rho_\beta}}
\]
grows linearly with $S$ up to a $\log(n)$ factor. In other words, components with larger support exhibit a larger Dirichlet form and therefore decay at a faster rate. Lemma~\ref{lem:framework-weight-dissipation} then follows directly from Proposition~\ref{prop:oned-decomposition}.

\begin{proof}[Proof of Lemma~\ref{lem:framework-weight-dissipation}]
To bound the distance to $V_{\le s}$, it suffices to construct an
approximating operator in that subspace. First set
$X_0=X-\Tr(\rho_\beta X)I$. Since $I\in V_{\le s}$,
\[
 (I-Q_s)X=(I-Q_s)X_0,\qquad
 \mathcal D_{\rho_\beta}(X_0)=\mathcal D_{\rho_\beta}(X),\qquad
 \norm{X_0}_{\rho_\beta}^2\leq\lambda_{\mathrm{loc}}^{-1}\mathcal D_{\rho_\beta}(X),
\]
where the last bound is the uniform gap in
Lemma~\ref{lem:oned-local-gap} with $R=[n]$.
Apply Proposition~\ref{prop:oned-decomposition} to $X_0$, write its
decomposition as $X_0=\sum_S \widehat X_{0,S}$, and take
\[
 k=\lfloor s/\ell\rfloor,\qquad
 Y=\sum_{|S|\leq k}\widehat X_{0,S}.
\]
Each term of $Y$ is supported on at most $k\ell\leq s$ sites,
so $Y\in V_{\le s}$. Best approximation by $Q_s$ and
\eqref{eq:oned-tail-target} give
\[
 \begin{aligned}
 \norm{X-Q_sX}_{\rho_\beta}^2
 &\leq\norm{X_0-Y}_{\rho_\beta}^2\\
 &\leq\frac{C\ell}{k+1}
       [\mathcal D_{\rho_\beta}(X_0)+\norm{X_0}_{\rho_\beta}^2]\\
 &\leq\frac{C\ell^2(1+\lambda_{\mathrm{loc}}^{-1})}{s+1}
       \mathcal D_{\rho_\beta}(X),
 \end{aligned}
\]
since $\ell(k+1)\geq s+1$. Now $\ell=O(\log(2n))$ concludes the proof.
\end{proof}

In the following part of this section, we prove Proposition~\ref{prop:oned-decomposition}.

\paragraph*{Step 1: Constructing the first decomposition with buffered intervals}

To construct the components in Proposition~\ref{prop:oned-decomposition},
we follow the two-step decomposition strategy described in
Section~\ref{sec:oned-decomposition-sketch}.
Partition $[n]$ into consecutive intervals as in
Figure~\ref{fig:oned-interval-geometry}:
\[
 [n]=\bigsqcup_{i=1}^m U_i,\qquad
 U_i=A_i\sqcup B_i\sqcup C_i,\qquad A_1=C_m=\emptyset,
\]
and write $\ell=\max_i|U_i|$.
We now carry out the first decomposition, on the intervals $B_i$, using
\[
 \mathcal T_i:=\mathbb E_{B_i}^{(U_i)},
\]
with $\mathbb E_{B_i}^{(U_i)}$ defined in~\eqref{eq:sketch-local-projection}.
Recall that:
\begin{itemize}
 \item $\mathcal T_i$ acts only on $U_i$ and projects onto operators
 that are the identity on $B_i$:
 \[
  \ran\mathcal T_i=I_{B_i}\otimes\mathcal B(\mathcal H_{B_i^c}).
 \]
 \item These projections are idempotent and commute, since the
 intervals $U_i$ are disjoint:
 \[
  \mathcal T_i^2=\mathcal T_i,\qquad
  \mathcal T_i\mathcal T_j=\mathcal T_j\mathcal T_i
  \quad(i,j\in[m]).
 \]
\end{itemize}
These maps give the decomposition
\[
 \mathcal R_P:=\prod_{i\in P}(I-\mathcal T_i)
               \prod_{i\notin P}\mathcal T_i,
 \qquad X_P:=\mathcal R_PX
 \quad(P\subseteq[m]).
\]
Commutativity and idempotence imply
\[
 \mathcal R_Q\mathcal R_P=\mathbf1_{\{P=Q\}}\mathcal R_P
 \quad(P,Q\subseteq[m]),
\]
and
\begin{equation}\label{eq:oned-first-decomposition}
 X=\sum_{P\subseteq[m]}X_P,\qquad
 \mathcal T_iX_P=\mathbf1_{\{i\notin P\}}X_P.
\end{equation}
Thus $X_P$ is the identity on every $B_i$ with $i\notin P$, giving
\[
 \operatorname{supp}(X_P)
 \subseteq\bigcup_{i\in P}B_i
       \;\cup\!\bigcup_{j=1}^{m-1}(C_j\cup A_{j+1}).
\]

\reviewedit{We next prove the weighted count and tail estimates for the first decomposition.} This is summarized in the following proposition.

\begin{proposition}[Properties of the first decomposition]
\label{prop:oned-first-decomposition}
The interval lengths can be chosen with $\ell\leq C\log(2n)$ so that,
for every operator $X$, the decomposition
\eqref{eq:oned-first-decomposition} satisfies
\begin{align}
 \sum_P|P|\norm{X_P}_{\rho_\beta}^2
 &\leq C[\mathcal D_{\rho_\beta}(X)+\norm X_{\rho_\beta}^2],
 \label{eq:oned-first-count}\\
 \norm{\sum_{|P|>k}X_P}_{\rho_\beta}^2
 &\leq\frac{C}{k+1}[\mathcal D_{\rho_\beta}(X)+\norm X_{\rho_\beta}^2]
 \qquad(k\geq0).
 \label{eq:oned-first-tail}
\end{align}
Here $k$ is a nonnegative integer and $C$ depends only on $\beta,q$.
The partition is independent of $X$ and $k$.
\end{proposition}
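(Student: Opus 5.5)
The plan is to compare the commuting, non-orthogonal projections $\mathcal T_i$ with the exact KMS-orthogonal conditional expectations $\mathbb E_{B_i}$. The comparison uses two inputs: the buffer estimate \eqref{eqn:buffer-error}, i.e.\ Lemma~\ref{lem:oned-local}, with error $\delta:=\max_i\norm{\mathcal T_i-\mathbb E_{B_i}}_{\rho_\beta\to\rho_\beta}\leq C n^{-2}$, and the local-gap estimate for $\mathbb E_{B_i}$ from \cite{BC2025}. First fix the partition. Take $|A_i|,|B_i|,|C_i|=\Theta(\log(2n))$, with constants large enough that $C\exp\{C|B_i|-cr\log(r+2)\}\leq n^{-2}$; this is possible since the buffer length $r$ enters superlinearly. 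Then $\ell\leq C\log(2n)$ and $m\leq n$. The local gap gives
\[
\norm{(I-\mathbb E_{B_i})X}_{\rho_\beta}^2\leq C\mathcal D_{\rho_\beta,B_i}(X).
\]
Combining the two, $\norm{(I-\mathcal T_i)X}^2_{\rho_\beta}\leq 2C\mathcal D_{\rho_\beta,B_i}(X)+2\delta^2\norm X^2_{\rho_\beta}$. Since the $B_i$ are disjoint and $m\delta^2\leq 1$, summing over $i$ yields
\[
\sum_i\norm{(I-\mathcal T_i)X}_{\rho_\beta}^2\leq C\bigl[\mathcal D_{\rho_\beta}(X)+\norm X_{\rho_\beta}^2\bigr].
\]

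Next I relate $\sum_P|P|\norm{X_P}^2$ to this sum. Since $(I-\mathcal T_i)X=\sum_{P\ni i}X_P$, we have $\sum_P|P|\norm{X_P}^2=\sum_i\sum_{P\ni i}\norm{X_P}^2$. It therefore suffices to prove an approximate Pythagorean identity: for any family $\mathcal F$,
\[
\sum_{P\in\mathcal F}\norm{X_P}^2_{\rho_\beta}\leq 2\norm{\textstyle\sum_{P\in\mathcal F}X_P}^2_{\rho_\beta}+(\text{error}).
\]
For orthogonality, take $P\neq Q$ and $j\in P\triangle Q$, say $j\in Q\setminus P$. Then $X_P=\mathcal T_jX_P$ and $X_Q=(I-\mathcal T_j)X_Q$. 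Hence
\[
\ip{X_P}{X_Q}=\ip{\mathcal T_jX_P}{(I-\mathcal T_j)X_Q}=\ip{(\mathcal T_j-\mathbb E_{B_j})X_P}{(I-\mathcal T_j)X_Q}+\ip{X_P}{(\mathbb E_{B_j}-\mathcal T_j)X_Q}\cdot(\ldots),
\]
where I use $\mathbb E_{B_j}(I-\mathbb E_{B_j})=0$ and KMS self-adjointness of $\mathbb E_{B_j}$. This gives $|\ip{X_P}{X_Q}|\lesssim\delta\norm{X_P}\norm{X_Q}$ after accounting for $\norm{I-\mathcal T_j}\leq 2+\delta$. Note that the error terms must be kept relative to $\norm{X_P}$, not $\norm{X}$.

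The \textbf{main obstacle} is the number of pairs, $4^m$: a crude Gram bound with $\delta\sim n^{-2}$ fails. To avoid it, I would not sum over all pairs but peel one index at a time. Let $\mathcal E_i:=\mathbb E_{B_i}$. The sets $\{P\ni i\}$ and $\{P\not\ni i\}$ are separated by $\mathcal T_i$, which is within $\delta$ of an orthogonal projection. So for any $Y=\sum_{P\in\mathcal F}X_P$,
\[
\norm{Y}^2\geq(1-C\delta)\bigl(\norm{\mathcal T_iY}^2+\norm{(I-\mathcal T_i)Y}^2\bigr).
\]
Iterating over $i=1,\dots,m$ splits $Y$ into its components $X_P$, with total factor $(1-C\delta)^{m}\geq 1/2$, because $m\delta\leq n^{-1}$. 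This gives
\[
\sum_{P\in\mathcal F}\norm{X_P}^2\leq 2\norm{\textstyle\sum_{\mathcal F}X_P}^2,
\]
and in particular $\sum_{P\ni i}\norm{X_P}^2\leq 2\norm{(I-\mathcal T_i)X}^2$, which proves \eqref{eq:oned-first-count}. The same mechanism also yields the reverse inequality $\norm{\sum_{\mathcal F}X_P}^2\leq 2\sum_{\mathcal F}\norm{X_P}^2$. Applying it to $\mathcal F=\{|P|>k\}$ gives
\[
\norm{\textstyle\sum_{|P|>k}X_P}^2\leq\frac{2}{k+1}\sum_P|P|\norm{X_P}^2,
\]
which is \eqref{eq:oned-first-tail}. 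What remains to check carefully is the single-step inequality for a non-orthogonal idempotent close to an orthogonal projection: $\norm{\mathcal T_iY}^2+\norm{(I-\mathcal T_i)Y}^2\leq(1+C\delta)\norm Y^2$. This follows since the cross term is $2\,\mathrm{Re}\ip{\mathcal T_iY}{(I-\mathcal T_i)Y}=O(\delta)\norm Y^2$, by the same substitution $\mathcal T_i\to\mathcal E_i$ as above.
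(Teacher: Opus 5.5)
Your proposal is correct and follows the paper's proof essentially step for step. It uses the same buffered partition with $\delta$ polynomially small in $n$, the same local-gap reduction $\|(I-\mathcal T_i)X\|_{\rho_\beta}^2\le 2\lambda_{\mathrm{loc}}^{-1}\mathcal D_{\rho_\beta,B_i}(X)+2\delta^2\|X\|_{\rho_\beta}^2$ summed over the disjoint $B_i$, and the same one-index-at-a-time peeling of the near-Pythagorean split (the paper's Lemma~\ref{lem:projection-iteration}), which gives two-sided almost orthogonality with factors $(1\pm O(\delta))^m$ and hence both the weighted count and the tail bound. The pairwise inner-product display you abandon has a garbled second term, but it plays no role in your final argument.
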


We note that~\reviewedit{\eqref{eq:oned-first-count}} is very similar to that in Proposition~\ref{prop:oned-decomposition}~\eqref{eq:oned-tail-target}. \reviewedit{The key caveat} is that $X_P$ is not guaranteed to be \reviewedit{supported only} on the intervals $U_i$ with $\reviewedit{i\in P}$. This problem will be addressed in the second step of the analysis.

At the heart of the proof are the commutativity and idempotence
of the projections $\mathcal T_i$,
\[
 \mathcal T_i\mathcal T_j=\mathcal T_j\mathcal T_i,
 \qquad \mathcal T_i^2=\mathcal T_i,
\]
and their closeness to the KMS orthogonal projections $\mathbb E_{B_i}$.
We choose the interval lengths so that
\begin{equation}\label{eq:oned-first-projection-comparison}
 \max_i\norm{\mathcal T_i-\mathbb E_{B_i}}_{\rho_\beta\to\rho_\beta}\leq\delta,
 \qquad n\delta\leq1/100.
\end{equation}
The required comparison follows from the locality estimate in~\cite[Lemma III.3]{BC2025}.
\begin{lemma}[Locality of the projections]\label{lem:oned-local}
\cite[Lemma III.3]{BC2025}.
Let $r\geq1$ and suppose $|A_i|\geq r$ for $i>1$
and $|C_i|\geq r$ for $i<m$.
The projections $\mathcal T_i=\mathbb E_{B_i}^{(U_i)}$ satisfy
\[
 \norm{\mathcal T_i-\mathbb E_{B_i}}_{\rho_\beta\to\rho_\beta}
 \leq C\exp\{C|B_i|-cr\log(r+2)\}
 \qquad(i\in[m]),
\]
where $C,c>0$ depend only on $\beta,q$.
\end{lemma}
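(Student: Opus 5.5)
The plan is to compare the two projections through the \emph{local} Gibbs state on $U_i$ against the reduced structure of $\rho_\beta$ near $B_i$, and to use the buffers $A_i,C_i$ of length at least $r$ to localize. Both $\mathcal T_i$ and $\mathbb E_{B_i}$ have the same range $I_{B_i}\otimes\mathcal B(\mathcal H_{B_i^c})$. First I would write $\mathbb E_{B_i}$ explicitly. The KMS orthogonal projection onto $I_{B_i}\otimes\mathcal B(\mathcal H_{B_i^c})$ is $\mathbb E_{B_i}=\Gamma^{-1}\circ(\text{partial-trace type map})\circ\Gamma$ with $\Gamma(X)=\rho_\beta^{1/4}X\rho_\beta^{1/4}$, or equivalently the Hilbert--Schmidt orthogonal projection after the conjugation $X\mapsto\rho_\beta^{1/4}X\rho_\beta^{1/4}$, restricted back to the range. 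The analogous formula holds for $\mathcal T_i$ with $\rho_\beta$ replaced by $\rho^{\mathrm{loc}}_{\beta,U_i}\otimes I_{U_i^c}$.

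Second, I would use the one-dimensional quasi-locality tools of \cite{BC2025}, namely the locality of complex-time evolution (\cite[Lemma A.2]{BC2025}) and the H\"older inequality (\cite[Lemma II.2]{BC2025}). Together they factor
\[
 \rho_\beta^{z}=\bigl(\rho^{\mathrm{loc}}_{\beta,U_i}\bigr)^{z}\otimes(\cdot)_{U_i^c}\cdot E_z ,
\]
where the correction $E_z$ is generated by the boundary terms $h_{a-1,a}$, $h_{b,b+1}$ of $U_i$. Up to an error of size $\exp\{-cr\log(r+2)\}$, this correction is localized within distance $r$ of $\partial U_i$, and hence supported away from $B_i$. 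This superexponential (Lieb--Robinson in imaginary time, via Taylor/Araki expansion) tail is where the $r\log r$ comes from. Because $E_z$ is supported away from $B_i$, it commutes with the averaging over $B_i$ up to this error. This shows that the two projections agree on the ideal factorized state, and differ by the truncation error.

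Third, I would convert operator-level closeness into the $\rho_\beta\to\rho_\beta$ norm. To do this I would express $\mathcal T_i-\mathbb E_{B_i}$ as a sum of terms involving differences of the conjugating weights. Each difference is bounded by the truncation error times the norms of the inverse weights restricted to $B_i$. These inverse weights, such as $\|(\rho^{\mathrm{loc}}_{B_i})^{-1/4}\|$ and the change-of-measure factors between partial traces over $B_i$, cost at most $e^{C|B_i|}$, since $\lambda_{\min}$ of a Gibbs state on $|B_i|$ sites is $e^{-O(|B_i|)}$. Collecting these bounds gives $C\exp\{C|B_i|-cr\log(r+2)\}$. The edge cases $i=1,m$ use only one buffer and are handled the same way.

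The main obstacle is the second step. There I must show that the mismatch between $\rho_\beta$ and the local Gibbs state does not touch the algebra on $B_i$, uniformly in $n$, with superexponential rather than merely exponential decay in $r$. The superexponential rate is needed so that it can beat the $e^{C|B_i|}$ loss when $|B_i|\asymp r$. I would first try the standard approach: expand $e^{-\beta H}=e^{-\beta(H_{U_i}+H_{U_i^c})}$ via a Dyson/Araki series in the boundary couplings. A term of order $k$ then spreads a distance $k$, weighted by $\beta^k/k!$, and this yields the $\exp(-cr\log r)$ tail. Since this is exactly \cite[Lemma III.3]{BC2025}, in the paper I would simply cite it.
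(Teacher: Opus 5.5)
Your proposal ends exactly where the paper does: the paper gives no argument of its own for this lemma and simply imports it from \cite[Lemma III.3]{BC2025}, so the citation you close with is the proof, and your heuristic (Araki-type factorization of $\rho_\beta^{z}$ across $\partial U_i$ with $\exp\{-cr\log(r+2)\}$ tails, partial traces over $B_i$ commuting with corrections supported off $B_i$, and an $e^{C|B_i|}$ loss from inverse weights on $B_i$) is a fair account of where the two factors come from. If you want the sketch to stand on its own, note that with $\sigma:=\rho^{\mathrm{loc}}_{\beta,U_i}\otimes\rho^{\mathrm{loc}}_{\beta,U_i^c}$ the weights $\rho_\beta^{1/2}$ and $\sigma^{1/2}$ differ by an $O(1)$ factor, and only that factor's reach into $B_i$ is small; so instead of comparing weights in your Step 3, use that $\mathcal T_i$ and $\mathbb E_{B_i}$ are idempotents with the same range, which gives $\mathcal T_i-\mathbb E_{B_i}=\mathcal T_i(I-\mathbb E_{B_i})$, and then show that $\ker\mathcal T_i$ is approximately KMS-orthogonal to that range by writing $\rho_\beta^{1/2}=E\,\sigma^{1/2}$ with $E$ close to an operator acting as the identity on $B_i$.
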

Take $r=\Theta(\log(2n))$ and $\delta=1/(100n)$.
Choose each nonempty $A_i,C_i$ to have $r+1$ sites and
$8(r+1)\leq|B_i|<18(r+1)$.
For a sufficiently large implicit constant in $r$, depending only
on $\beta,q$, Lemma~\ref{lem:oned-local} gives
\begin{equation}\label{eq:oned-first-locality-applied}
 \max_i\norm{\mathcal T_i-\mathbb E_{B_i}}_{\rho_\beta\to\rho_\beta}
 \leq\delta=\frac1{100n},
 \qquad \ell\leq C\log(2n).
\end{equation}

These properties enter the proof in two steps:
\begin{itemize}
 \item \emph{Approximate orthogonality.}
 Each $\mathbb E_{B_i}$ is a KMS orthogonal projection, and
 \eqref{eq:oned-first-projection-comparison} makes $\mathcal T_i$
 close to it. Together with commutativity and idempotence of the
 $\mathcal T_i$, this implies $\mathcal{R}_P$ behaves like an almost orthogonal decomposition in the KMS norm. This is summarized in the following lemma:
\begin{lemma}[Approximate orthogonality of the decomposition]\label{lem:oned-approx-orth}
For any
 $\mathcal F\subseteq2^{[m]}$ and family
 $Y_P\in\ran\mathcal R_P$, $P\in\mathcal F$,
 \begin{equation}\label{eq:oned-decomposition-norms}
  C^{-1}\sum_{P\in\mathcal F}\norm{Y_P}_{\rho_\beta}^2
  \leq\norm{\sum_{P\in\mathcal F}Y_P}_{\rho_\beta}^2
  \leq C\sum_{P\in\mathcal F}\norm{Y_P}_{\rho_\beta}^2,
 \end{equation}
 where $C$ is a universal constant independent of $m$ and $\mathcal F$.
\end{lemma}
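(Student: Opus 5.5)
We compare the commuting, idempotent maps $\mathcal T_i$ with the KMS orthogonal projections $\mathbb E_{B_i}$, and treat $\mathcal R_P$ as a perturbation of an exactly orthogonal family. The natural first attempt is to define $\widetilde{\mathcal R}_P:=\prod_{i\in P}(I-\mathbb E_{B_i})\prod_{i\notin P}\mathbb E_{B_i}$. This fails because the $\mathbb E_{B_i}$ need not commute, and a product of $m$ factors would accumulate an $O(m\delta)$ error, which is acceptable only for a single $P$. The summation over $P\in\mathcal F$ is the real issue. We therefore work through a Gram-matrix argument and exploit that $Y_P\in\ran\mathcal R_P$ means $\mathcal T_iY_P=\mathbf 1_{\{i\notin P\}}Y_P$ for every $i$.

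\textbf{Step 1 (Gram matrix).} Expand
\[
\norm{\sum_{P}Y_P}_{\rho_\beta}^2=\sum_P\norm{Y_P}_{\rho_\beta}^2+\sum_{P\ne Q}\ip{Y_P}{Y_Q}.
\]
We aim to bound $|\ip{Y_P}{Y_Q}|\le (C\delta)^{|P\triangle Q|}\norm{Y_P}_{\rho_\beta}\norm{Y_Q}_{\rho_\beta}$. Granting this, Schur's test gives off-diagonal operator norm at most
\[
\sup_P\sum_{Q\ne P}(C\delta)^{|P\triangle Q|}=(1+C\delta)^m-1\le e^{Cm\delta}-1\le e^{C/100}-1,
\]
using $m\le n$ and $n\delta\le 1/100$. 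For a sufficiently small constant this is below $1/2$, which yields both sides of \eqref{eq:oned-decomposition-norms} with $C=2$. If the constant in front of $\delta$ cannot be made small, we choose $\delta=1/(Kn)$ with a larger $K$; this changes $r$ only by an additive constant in Lemma~\ref{lem:oned-local}.

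\textbf{Step 2 (single-index inner products).} Take $i\in P\setminus Q$. Then $Y_Q=\mathcal T_iY_Q$ and $(I-\mathcal T_i)Y_P=Y_P$. Since $\mathbb E_{B_i}$ is KMS self-adjoint, $\ip{(I-\mathbb E_{B_i})Y_P}{\mathbb E_{B_i}Y_Q}=0$, so
\[
\ip{Y_P}{Y_Q}=\ip{(I-\mathcal T_i)Y_P}{\mathcal T_iY_Q}
=\ip{(\mathbb E_{B_i}-\mathcal T_i)Y_P}{\mathcal T_iY_Q}+\ip{(I-\mathbb E_{B_i})Y_P}{(\mathcal T_i-\mathbb E_{B_i})Y_Q}.
\]
Together with $\norm{\mathcal T_i}_{\rho_\beta\to\rho_\beta}\le 1+\delta$, this gives the bound $O(\delta)\norm{Y_P}\norm{Y_Q}$ for a single differing index.

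\textbf{Step 3 (main obstacle: iterating over all differing indices).} To obtain the factor $\delta^{|P\triangle Q|}$, the error terms must be peeled repeatedly. The plan is to write $Y_P=\mathcal R_PY_P$ and $Y_Q=\mathcal R_QY_Q$, and move the whole operator $\mathcal R_Q^{\dagger_{\rho}}\mathcal R_P$ onto one side, where $\dagger_\rho$ denotes the KMS adjoint. We then expand each factor $\mathcal T_i=\mathbb E_{B_i}+\Delta_i$ with $\norm{\Delta_i}\le\delta$. For every index $i\in P\triangle Q$, the product contains both $\mathcal T_i^{\dagger_\rho}$ (or $I-\mathcal T_i^{\dagger_\rho}$) and the complementary factor. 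Choosing the $\mathbb E$ part in both would give $\mathbb E_{B_i}(I-\mathbb E_{B_i})=0$, so every surviving term in the expansion must carry at least one $\Delta_i$ for that $i$.

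The difficulty is that the $\mathbb E$ factors for different $i$ do not commute. Because of this, the factor $\mathbb E_{B_i}$ and the factor $(I-\mathbb E_{B_i})$ need not be adjacent in the product, so we cannot directly conclude that their product vanishes. To handle this, we reorder using the commuting $\mathcal T$'s first: all $\mathcal T_j$ with $j\notin P\triangle Q$ commute and are idempotent, so they collapse. The remaining product is $\prod_{i\in P\triangle Q}\mathcal T_i^{\dagger_\rho}(I-\mathcal T_i)$ in a chosen order, with each pair placed adjacent.

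We then expand each adjacent pair separately. Writing $\mathcal T_i^{\dagger_\rho}(I-\mathcal T_i)$ with $\mathcal T_i=\mathbb E_{B_i}+\Delta_i$, every term in the expansion contains a $\Delta_i$, so the pair has norm at most $3\delta$. Multiplying over $i\in P\triangle Q$ gives $(C\delta)^{|P\triangle Q|}$, with the norms of the collapsed common factors bounded by $(1+\delta)^m\le e^{1/100}$.

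The delicate point is justifying the adjacency reordering. The $\mathcal T^{\dagger_\rho}$ (adjoints with respect to $\rho_\beta$) do not commute with the $\mathcal T_j$. If that reordering fails, we fall back on an induction on $|P\triangle Q|$ via Step 2, gaining one factor of $\delta$ per index, with the intermediate operators absorbed using the $(1+\delta)^m$ bound.
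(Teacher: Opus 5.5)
\textbf{There is a genuine gap.} Your Gram-matrix route stands or falls with the pairwise bound
$|\ip{Y_P}{Y_Q}|\le (C\delta)^{|P\triangle Q|}\norm{Y_P}_{\rho_\beta}\norm{Y_Q}_{\rho_\beta}$, and that bound is never established. A bound of order $\delta$ per pair is useless, because there are $\binom{m}{k}$ sets $Q$ at distance $k$ from $P$. Step~2 yields only one factor of $\delta$.

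Your Step~3 reordering fails, as you suspected. The KMS adjoint is $\mathcal T_i^{\dagger_\rho}=\Gamma^{-1}\mathcal T_i^{*}\Gamma$ with $\Gamma(X)=\rho_\beta^{1/2}X\rho_\beta^{1/2}$. Since $\rho_\beta$ is not a product over the $U_j$, $\mathcal T_i^{\dagger_\rho}$ is not localized on $U_i$ and does not commute with $\mathcal T_j$. Also, for common indices $j\notin P\triangle Q$, the factors $\mathcal T_j^{\dagger_\rho}$ and $\mathcal T_j$ do not collapse.

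The fallback induction fails as well. After peeling one index $i$, the new vectors $(\mathcal T_i-\mathbb E_{B_i})Y_P$ and $(I-\mathbb E_{B_i})Y_P$ are no longer in $\ran\mathcal R_P$, because $\mathbb E_{B_i}$ does not commute with $\mathcal T_j$. So nothing is left from which to extract a second factor of $\delta$.

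In fact, the pairwise bound cannot follow from the properties you use: commuting idempotents, each within $\delta$ of an orthogonal projection. On $\mathbb R^2$, let $E$ be the orthogonal projection onto $e_1$ and $T(x,y)=(x-\delta y,0)$. Then $T^2=T$ and $\norm{T-E}=\delta$. With $\mathcal T_1=\mathcal T_2=T$ you get $\mathcal R_{\{1,2\}}=I-T$ and $\mathcal R_{\emptyset}=T$. Their ranges are spanned by $(\delta,1)$ and $e_1$, whose unit vectors have inner product about $\delta$, not $\delta^2$, even though $|P\triangle Q|=2$. Any exponential decay would have to come from additional locality structure of $\rho_\beta$, which your argument does not supply.

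\textbf{The missing idea is to avoid pairwise estimates and split hierarchically.} This is what the paper does via Lemma~\ref{lem:projection-iteration}. Your Step~2 computation, applied to $\mathcal T_iX$ and $(I-\mathcal T_i)X$ for a single arbitrary $X$, already gives
$\bigl|\norm{\mathcal T_iX}_{\rho_\beta}^2+\norm{(I-\mathcal T_i)X}_{\rho_\beta}^2-\norm{X}_{\rho_\beta}^2\bigr|\le(2\delta+2\delta^2)\norm{X}_{\rho_\beta}^2$.

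Now group the $Y_P$ according to $P\cap[h]$. Because the $\mathcal T_i$ commute and are idempotent, $\mathcal T_h$ and $I-\mathcal T_h$ map the group labelled $Q\subseteq[h-1]$ exactly onto the groups labelled $Q$ and $Q\cup\{h\}$. Applying the single-split estimate to each group therefore changes $\sum_{Q}\norm{\sum_{P\cap[h]=Q}Y_P}_{\rho_\beta}^2$ by a factor in $[1-\xi,1+\xi]$, where $\xi=2\delta+2\delta^2$.

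After $m$ such steps you pass from $\norm{\sum_PY_P}_{\rho_\beta}^2$ to $\sum_P\norm{Y_P}_{\rho_\beta}^2$ with total distortion $(1\pm\xi)^m$. This is bounded because $m\delta\le1/100$. The argument needs only $\delta\lesssim1/m$ and no decay in $|P\triangle Q|$.
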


 \item \emph{Bounding the local errors from the local spectral gap.}
 Starting from the weighted count, approximate orthogonality
 \eqref{eq:oned-decomposition-norms} gives
 \[
  \sum_P|P|\norm{X_P}_{\rho_\beta}^2
  =\sum_i\sum_{P\ni i}\norm{X_P}_{\rho_\beta}^2
  \leq C\sum_i\norm{\sum_{P\ni i}X_P}_{\rho_\beta}^2.
 \]
 Using $(I-\mathcal T_i)X=\sum_{P\ni i}X_P$ from
 \eqref{eq:oned-first-decomposition}, we obtain
 \begin{equation}\label{eq:oned-first-count-reduction}
  \sum_P|P|\norm{X_P}_{\rho_\beta}^2
  \leq C\sum_i\norm{(I-\mathcal T_i)X}_{\rho_\beta}^2.
 \end{equation}
 Closeness to $\mathbb E_{B_i}$ in
 \eqref{eq:oned-first-projection-comparison} reduces these local
 errors to $\norm{(I-\mathbb E_{B_i})X}_{\rho_\beta}^2$, with an error
 of order $\delta^2\norm X_{\rho_\beta}^2$ for each $i$.
 We therefore need the local spectral-gap estimate
 \[
  \norm{(I-\mathbb E_{B_i})X}_{\rho_\beta}^2
  \leq\lambda_{\mathrm{loc}}^{-1}\mathcal D_{\rho_\beta,B_i}(X),
 \]
 where $\mathcal D_{\rho_\beta,B_i}(X)$ is the local dissipation defined in \eqref{eqn:local-dissipation}.
 This bound, with $\lambda_{\mathrm{loc}}>0$ independent of $n$
 and valid for every global operator $X$, was proved
 in~\cite[Lemma III.9 and Propositions V.1--V.2]{BC2025}.
 We restate it in the following: For an interval $R\subseteq[n]$, this gap is defined by
\[
 \lambda_R:=
 \inf_{\substack{X\in\mathcal B(\mathcal H_{[n]})\\X\ne\mathbb E_RX}}
 \frac{\mathcal D_{\rho_\beta,R}(X)}{\norm{X-\mathbb E_RX}_{\rho_\beta}^2},\quad \mathcal D_{\rho_\beta,R}(X):=
\sum_{x\in R,u}\norm{[\sigma_{x,u},X]}_{\rho_\beta}^2\,.
\]
Here $X$ can have arbitrary support, and the norm uses the Gibbs
state $\rho_\beta$ of the full chain; only the sites in the commutator
sum $\mathcal D_{\rho_\beta,R}$ are restricted to $R$.
\begin{lemma}[Uniform conditional spectral gap]\label{lem:oned-local-gap}
\cite[Lemma III.9 and Propositions V.1--V.2]{BC2025}.
There is a constant $\lambda_{\mathrm{loc}}>0$, depending only
on $\beta,q$, such that $\lambda_R\geq\lambda_{\mathrm{loc}}$
for every interval $R\subseteq[n]$. Equivalently, every global
operator $X\in\mathcal B(\mathcal H_{[n]})$ satisfies
\begin{equation}\label{eq:oned-local-gap}
 \norm{(I-\mathbb E_R)X}_{\rho_\beta}^2
 \leq\lambda_{\mathrm{loc}}^{-1}\mathcal D_{\rho_\beta,R}(X).
\end{equation}
\end{lemma}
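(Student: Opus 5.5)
Since this lemma is imported from \cite[Lemma III.9 and Propositions V.1--V.2]{BC2025}, the plan is to reconstruct the argument there. It is a multiscale bootstrap: start from a uniform gap on short intervals and propagate it to intervals of every length. The constant obtained must not depend on how long $R$ is or on where it sits in the chain.

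\emph{Base case: short intervals.} First I would treat intervals $R$ with $|R|\leq L_0$, where $L_0$ is a fixed constant. The kernel of $\mathcal D_{\rho_\beta,R}$ consists of the operators commuting with every $\sigma_{x,u}$, $x\in R$, and this is exactly $I_R\otimes\mathcal B(\mathcal H_{R^c})=\ran\mathbb E_R$. So $\lambda_R>0$, and the only issue is uniformity in $n$ and in $X$. I would compare the KMS norm of $\rho_\beta$ with a product-type reference in which $R$ is decoupled. The tool is the H\"older inequality \cite[Lemma II.2]{BC2025} together with the imaginary-time locality \cite[Lemma A.2]{BC2025}. The key fact is that $\rho_\beta^{\pm1/4}$ conjugation acting on operators supported near $R$ distorts KMS norms only by a factor $e^{O(\beta|R|)}$, uniformly in $n$. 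At infinite temperature, $-\mathcal K$ restricted to $R$ is the local depolarizer and has gap of order one. Transferring this through the comparison costs a constant depending only on $\beta,q,L_0$.

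\emph{Recursive step.} For a long interval $R$, I would write $R=R_1\cup R_2$ with overlap $O=R_1\cap R_2$ of length $\sim|R|^{1/2}$, or a fixed fraction of $|R|$. The aim is the approximate factorization
\[
\norm{(I-\mathbb E_R)X}_{\rho_\beta}^2\leq(1+\varepsilon(|O|))\bigl(\norm{(I-\mathbb E_{R_1})X}_{\rho_\beta}^2+\norm{(I-\mathbb E_{R_2})X}_{\rho_\beta}^2\bigr).
\]
I would derive it from a strong clustering bound $\norm{\mathbb E_{R_1}\mathbb E_{R_2}-\mathbb E_R}_{\rho_\beta\to\rho_\beta}\leq\varepsilon(|O|)$, with $\varepsilon$ decaying (stretched-)exponentially. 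The right-hand side double counts $\mathcal D_{\rho_\beta,O}$. To handle this, I would average over many admissible positions of the overlap, which gives $\lambda_R\geq(1-\varepsilon_k)(1+C/k)^{-1}\min_{|R'|\leq 2|R|/3}\lambda_{R'}$ at scale $k$. With geometrically growing scales, the product $\prod_k(1-\varepsilon_k)(1+C/k)^{-1}$ decays only polylogarithmically. This gives an inverse-polylogarithmic gap first. The second stage bootstraps it to a constant: the polylog gap is used on blocks of size $\mathrm{polylog}$ inside a coarse-grained two-block argument, whose errors are now summable. This is the route described for $\mathcal K$ in \cite{BC2025}.

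\emph{Main obstacle.} The crux is the strong clustering estimate for the spectral conditional expectations $\mathbb E_{R_1},\mathbb E_{R_2}$ with respect to the full noncommuting Gibbs state. It has to hold for global operators $X$ of arbitrary support. My first attempt would be to use the locality of $\mathbb E_{B}$ from Lemma~\ref{lem:oned-local}: replace $\mathbb E_{R_i}$ by local projections $\mathbb E^{(U)}_{R_i}$ defined via buffered local Gibbs states, at a cost $e^{C|R_i|-cr\log r}$. I would then reduce the product's deviation from $\mathbb E_R$ to decay of correlations across $O$ in the one-dimensional Gibbs state, in the form of Araki-type exponential clustering and its quantum-Markov strengthening. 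Controlling the $e^{C|R_i|}$ prefactor against the buffer decay is what forces the two-stage (polylog, then constant) structure above.
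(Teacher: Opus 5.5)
Your outline follows the route the paper relies on. The paper gives no independent proof of this lemma and imports it from \cite{BC2025}, summarizing that argument in Section~\ref{sec:spectral-criterion} as strong clustering for the spectral conditional expectations, then a recursive decomposition giving an inverse-polylogarithmic conditional gap, then a bootstrap to a constant gap; this is exactly your base case plus two-stage multiscale scheme, and the two steps you leave as sketches (strong clustering and the polylog-to-constant bootstrap) are likewise taken from \cite{BC2025} without being reproduced.
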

\end{itemize}

We will first use the above two lemmas to prove Proposition~\ref{prop:oned-first-decomposition} and then prove Lemma~\ref{lem:oned-approx-orth}.

\begin{proof}[Proof of Proposition~\ref{prop:oned-first-decomposition}]
To prove the weighted count \eqref{eq:oned-first-count},
apply approximate orthogonality \eqref{eq:oned-decomposition-norms}:
\[
 \sum_P|P|\norm{X_P}_{\rho_\beta}^2
 =\sum_i\sum_{P\ni i}\norm{X_P}_{\rho_\beta}^2
 \leq C\sum_i\norm{\sum_{P\ni i}X_P}_{\rho_\beta}^2.
\]
Equation~\eqref{eq:oned-first-decomposition} gives
$(I-\mathcal T_i)X=\sum_{P\ni i}X_P$, proving
\eqref{eq:oned-first-count-reduction}.
It remains to bound these local errors.
Write
\[
 (I-\mathcal T_i)X=(I-\mathbb E_{B_i})X
                    +(\mathbb E_{B_i}-\mathcal T_i)X.
\]
The comparison \eqref{eq:oned-first-projection-comparison}
and the local spectral gap in Lemma~\ref{lem:oned-local-gap} give
\begin{equation}\label{eq:oned-first-error-bound}
 \begin{aligned}
 \norm{(I-\mathcal T_i)X}_{\rho_\beta}^2
 &\leq2\norm{(I-\mathbb E_{B_i})X}_{\rho_\beta}^2
       +2\delta^2\norm X_{\rho_\beta}^2\\
 &\leq\frac{2}{\lambda_{\mathrm{loc}}}\mathcal D_{\rho_\beta,B_i}(X)
       +2\delta^2\norm X_{\rho_\beta}^2.
 \end{aligned}
\end{equation}

To sum \eqref{eq:oned-first-error-bound}, note that the intervals
$B_i$ are disjoint, so each site occurs at most once in their
commutator sums and
$\sum_i\mathcal D_{\rho_\beta,B_i}(X)\leq\mathcal D_{\rho_\beta}(X)$.
Consequently,
\begin{equation}\label{eq:oned-core-count}
 \sum_i\norm{(I-\mathcal T_i)X}_{\rho_\beta}^2
 \leq\frac{2}{\lambda_{\mathrm{loc}}}\mathcal D_{\rho_\beta}(X)
       +2m\delta^2\norm X_{\rho_\beta}^2
 \leq C[\mathcal D_{\rho_\beta}(X)+n\delta^2\norm X_{\rho_\beta}^2].
\end{equation}
Since $n\delta^2\leq1$, combining \eqref{eq:oned-core-count}
with \eqref{eq:oned-first-count-reduction} proves
\eqref{eq:oned-first-count}.
For the tail bound, apply \eqref{eq:oned-decomposition-norms}
to the terms with $|P|>k$ and then use the weighted count:
\[
 \norm{\sum_{|P|>k}X_P}_{\rho_\beta}^2
 \leq C\sum_{|P|>k}\norm{X_P}_{\rho_\beta}^2
 \leq\frac{C}{k+1}\sum_P|P|\norm{X_P}_{\rho_\beta}^2
 \leq\frac{C'}{k+1}[\mathcal D_{\rho_\beta}(X)+\norm X_{\rho_\beta}^2].
\]
This proves \eqref{eq:oned-first-tail}.
\end{proof}

\begin{proof}[Proof of Lemma~\ref{lem:oned-approx-orth}]
To prove \eqref{eq:oned-decomposition-norms}, we apply
\eqref{eq:oned-first-locality-applied} and use the KMS orthogonal decomposition
\[
 \norm X_{\rho_\beta}^2
 =\norm{\mathbb E_{B_i}X}_{\rho_\beta}^2
  +\norm{(I-\mathbb E_{B_i})X}_{\rho_\beta}^2.
\]
Write $\Delta_i=\mathcal T_i-\mathbb E_{B_i}$, with
$\norm{\Delta_i}_{\rho_\beta\to\rho_\beta}\leq\delta$. Then
\[
 \begin{aligned}
 &\norm{\mathcal T_iX}_{\rho_\beta}^2
  +\norm{(I-\mathcal T_i)X}_{\rho_\beta}^2-\norm X_{\rho_\beta}^2\\
 &\qquad=2\operatorname{Re}\ip{(2\mathbb E_{B_i}-I)X}{\Delta_iX}
          +2\norm{\Delta_iX}_{\rho_\beta}^2.
 \end{aligned}
\]
The same orthogonality gives
$\norm{(2\mathbb E_{B_i}-I)X}_{\rho_\beta}=\norm X_{\rho_\beta}$.
Cauchy--Schwarz therefore yields
\begin{equation}\label{eq:oned-first-single-split}
 \begin{aligned}
 \left|\norm{\mathcal T_iX}_{\rho_\beta}^2
       +\norm{(I-\mathcal T_i)X}_{\rho_\beta}^2-\norm X_{\rho_\beta}^2\right|
 &\leq2\norm X_{\rho_\beta}\norm{\Delta_iX}_{\rho_\beta}
       +2\norm{\Delta_iX}_{\rho_\beta}^2\\
 &\leq(2\delta+2\delta^2)\norm X_{\rho_\beta}^2.
 \end{aligned}
\end{equation}
Set $\xi=2\delta+2\delta^2$. For any $\mathcal F\subseteq2^{[m]}$
and any family $Y_P\in\ran\mathcal R_P$, $P\in\mathcal F$,
iterating $m$ times using Lemma~\ref{lem:projection-iteration}
in Appendix~\ref{app:almost-orthogonality} with \eqref{eq:oned-first-single-split} gives
\begin{equation}\label{eq:oned-first-iteration}
 (1-\xi)^m\norm{\sum_{P\in\mathcal F}Y_P}_{\rho_\beta}^2
 \leq\sum_{P\in\mathcal F}\norm{Y_P}_{\rho_\beta}^2
 \leq(1+\xi)^m\norm{\sum_{P\in\mathcal F}Y_P}_{\rho_\beta}^2.
\end{equation}
The factors $(1\pm\xi)^m$ are bounded above and below by positive
universal constants because $m\delta\leq1/100$.
Thus \eqref{eq:oned-first-iteration} gives
\eqref{eq:oned-decomposition-norms}.
\end{proof}

\paragraph*{Step 2: Decomposing each $X_P$ on the remaining intervals.}

Proposition~\ref{prop:oned-first-decomposition} controls the contribution
from many intervals $B_i$, but the first decomposition only guarantees
\[
 \operatorname{supp}(X_P)
 \subseteq\bigcup_{i\in P}B_i
       \;\cup\!\bigcup_{j=1}^{m-1}(C_j\cup A_{j+1}).
\]
Even when $|P|$ is small, $X_P$ may still act on all the intervals
between consecutive $B_i$. To control this remaining support, we
follow the second decomposition described in
Section~\ref{sec:oned-decomposition-sketch}.

Write $D_j:=C_j\cup A_{j+1}$ for these intervening intervals.
The selected $B_i$ and their adjacent $D_j$ lie in $O(|P|)$
intervals $U_i$, so we only need to further decompose the dependence
on the remaining $D_j$, indexed by
\[
 J_P:=\{j\in[m-1]:j,j+1\notin P\}.
\]
For $j\in J_P$, $X_P$ is already the identity on $B_j$ and $B_{j+1}$.
As in~\eqref{eqn:G_def}, we use this property to remove its dependence
on $D_j$ by averaging over
\[
 \Lambda_j:=B_j\cup D_j\cup B_{j+1}
 \quad(j\in[m-1]),
\]
see Figure~\ref{fig:oned-second-decomposition}.
Using the local projection defined in~\eqref{eq:sketch-local-projection}
with both regions equal to $\Lambda_j$, set
\[
 \mathcal G_j(X)
 :=\mathbb E_{\Lambda_j}^{(\Lambda_j)}(X)
 =I_{\Lambda_j}\otimes
   \Tr_{\Lambda_j}(\rho_{\beta,\Lambda_j}^{\mathrm{loc}}X)
 \quad(j\in[m-1]).
\]
We apply these maps on the subspace with the identity factors
imposed by the first decomposition:
\[
 W_P:=\left\{I_{\cup_{i\notin P}B_i}\otimes X:
 X\in\mathcal B\!\left(\mathcal H_{[n]\setminus\cup_{i\notin P}B_i}\right)
 \right\}.
\]
In particular, $X_P\in W_P$. Recall that:
\begin{itemize}
 \item For $j\in J_P$, $\mathcal G_j$ preserves these identity
 factors and makes $D_j$ an identity factor as well:
 \[
  \mathcal G_j(W_P)\subseteq W_P,\qquad
  \operatorname{supp}(\mathcal G_jX)\cap D_j=\emptyset
  \quad(X\in W_P).
 \]
 \item On $W_P$, these maps are commuting projections:
 \[
  \mathcal G_j^2X=\mathcal G_jX,\qquad
  \mathcal G_j\mathcal G_lX=\mathcal G_l\mathcal G_jX
  \quad(X\in W_P,\ j,l\in J_P).
 \]
\end{itemize}
As in the first decomposition, define, for $T\subseteq J_P$,
\[
 \mathcal J_{P,T}:=
 \prod_{j\in T}(I-\mathcal G_j)
 \prod_{j\in J_P\setminus T}\mathcal G_j
 \quad\text{on }W_P.
\]
Commutativity and idempotence give
\[
 \mathcal J_{P,Q}\mathcal J_{P,T}
 =\mathbf1_{\{Q=T\}}\mathcal J_{P,T}
 \quad(Q,T\subseteq J_P),
\]
and the decomposition
\begin{equation}\label{eq:oned-two-decompositions}
 Z_{P,T}:=\mathcal J_{P,T}X_P,\qquad
 X_P=\sum_{T\subseteq J_P}Z_{P,T}.
\end{equation}

Each component satisfies
\begin{equation}\label{eq:oned-second-support}
 \operatorname{supp}(Z_{P,T})
 \subseteq\bigcup_{i\in P}B_i
  \;\cup\!\bigcup_{j\in T\cup([m-1]\setminus J_P)}
                        D_j.
\end{equation}
Thus $Z_{P,T}$ is supported on $O(|P|+|T|)$ intervals $U_i$.

Now, we prove a similar upper bound result for the second decomposition.

\begin{proposition}[Properties of the second decomposition]
\label{prop:oned-second-decomposition}
The interval lengths can be chosen with $\ell\leq C\log(2n)$ so that,
for every operator $X$ and every $P\subseteq[m]$,
\begin{equation}\label{eq:oned-second-count}
 \sum_{T\subseteq J_P}|T|\norm{Z_{P,T}}_{\rho_\beta}^2
 \leq C[\mathcal D_{\rho_\beta}(X_P)+\norm{X_P}_{\rho_\beta}^2],
\end{equation}
and, for every integer $k\geq0$,
\begin{equation}\label{eq:oned-second-tail-target}
 \norm{\sum_{\substack{T\subseteq J_P\\|T|>k}}Z_{P,T}}_{\rho_\beta}^2
 \leq\frac{C}{k+1}
       [\mathcal D_{\rho_\beta}(X_P)+\norm{X_P}_{\rho_\beta}^2].
\end{equation}
The same partition works for all $P$, and $C$ depends only on $\beta,q$.
\end{proposition}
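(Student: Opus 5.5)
We follow the template of Proposition~\ref{prop:oned-first-decomposition}, with the $\mathcal G_j$ on $W_P$ playing the role of the $\mathcal T_i$. The work is to establish two analogues of the first step: approximate orthogonality of the $\mathcal J_{P,T}$, and a local-gap estimate for each $(I-\mathcal G_j)$ restricted to $W_P$.

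\emph{Step A: comparison with exact conditional expectations.} For $j\in J_P$ we let $R_j=(B_j)_r\cup D_j\cup(B_{j+1})_\ell$, the union of the right half of $B_j$, the interval $D_j$, and the left half of $B_{j+1}$. We aim to prove
\[
 \norm{(\mathcal G_j-\mathbb E_{R_j})X}_{\rho_\beta}\leq\delta\norm X_{\rho_\beta},\qquad X\in W_P,\ \delta=\tfrac{1}{100n}.
\]
Since $X\in W_P$ is already the identity on $B_j$ and $B_{j+1}$, the map $\mathbb E_{R_j}$ acting on such $X$ coincides with a map that makes all of $\Lambda_j$ an identity factor. The comparison with the local-state average $\mathcal G_j$ then follows from Lemma~\ref{lem:oned-local}, applied with the half-blocks $(B_j)_\ell$ and $(B_{j+1})_r$ as buffers of length $\geq 4(r+1)$. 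This is why we took $|B_i|\geq 8(r+1)$. We also need to check that $\mathbb E_{R_j}$ preserves $W_P$ up to the same error, using the same locality argument.

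This step is the main obstacle. Lemma~\ref{lem:oned-local} compares a local projection whose range is ``identity on $R_j$'' with the global one, whereas $\mathcal G_j$ imposes the identity on all of $\Lambda_j$. One must show that, restricted to inputs that are the identity on $B_j\cup B_{j+1}$, the two ranges agree to within $O(\delta)$. We would first try to prove this by writing $\mathcal G_j=\mathcal G_j\circ\mathbb E^{(\Lambda_j)}_{R_j}$ on $W_P$ and invoking the clustering estimates behind~\cite[Lemma III.3]{BC2025}.

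\emph{Step B: approximate orthogonality and the count.} On $W_P$ the $\mathcal G_j$ are commuting idempotents that are $\delta$-close to the KMS-orthogonal projections $\mathbb E_{R_j}$. We can therefore repeat the proof of Lemma~\ref{lem:oned-approx-orth} verbatim: apply the single-split estimate~\eqref{eq:oned-first-single-split} for each $j$, then iterate via Lemma~\ref{lem:projection-iteration} at most $m$ times. Since $m\delta\leq1/100$, this gives, for any family $Y_T\in\ran\mathcal J_{P,T}$,
\[
 \norm{\sum_T Y_T}^2_{\rho_\beta}\asymp\sum_T\norm{Y_T}^2_{\rho_\beta}
\]
with universal constants. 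Using $(I-\mathcal G_j)X_P=\sum_{T\ni j}Z_{P,T}$, we obtain
\[
 \sum_T|T|\norm{Z_{P,T}}^2_{\rho_\beta}\leq C\sum_{j\in J_P}\norm{(I-\mathcal G_j)X_P}^2_{\rho_\beta}.
\]

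\emph{Step C: local gap and tail.} We split
\[
 (I-\mathcal G_j)X_P=(I-\mathbb E_{R_j})X_P+(\mathbb E_{R_j}-\mathcal G_j)X_P.
\]
Lemma~\ref{lem:oned-local-gap} with $R=R_j$ gives
\[
 \norm{(I-\mathcal G_j)X_P}^2_{\rho_\beta}\leq 2\lambda_{\rm loc}^{-1}\mathcal D_{\rho_\beta,R_j}(X_P)+2\delta^2\norm{X_P}^2_{\rho_\beta}.
\]
The sets $R_j$ are pairwise disjoint, because they use complementary halves of each $B_{j+1}$. Hence $\sum_j\mathcal D_{\rho_\beta,R_j}\leq\mathcal D_{\rho_\beta}$, and $m\delta^2\leq1$, which yields~\eqref{eq:oned-second-count}. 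The tail bound~\eqref{eq:oned-second-tail-target} then follows exactly as in the first decomposition: apply approximate orthogonality to the terms with $|T|>k$, then use $\mathbf 1_{|T|>k}\leq|T|/(k+1)$. The partition does not depend on $P$, since Steps A--C use only $|A_i|,|C_i|=r+1$ and $8(r+1)\leq|B_i|<18(r+1)$ with $r=\Theta(\log 2n)$.
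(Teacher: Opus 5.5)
\textbf{Gap in Step A.} Your Steps B and C are the paper's argument: the single-split estimate on $W_P$, Lemma~\ref{lem:projection-iteration} applied to $\mathcal G_j|_{W_P}$, the identity $(I-\mathcal G_j)X_P=\sum_{T\ni j}Z_{P,T}$, the local gap on the disjoint $R_j$, and the tail bound. But they rest entirely on the comparison in Step A, and Step A is not established.

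Your opening claim there is false. $\mathbb E_{R_j}$ is a global KMS projection. For $X$ that is the identity on $B_j\cup B_{j+1}$, the output $\mathbb E_{R_j}X$ is in general not the identity on the outer halves $(B_j)_\ell,(B_{j+1})_r$, so it does not ``coincide with a map that makes all of $\Lambda_j$ an identity factor.''

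Applied with $R_j$ inside $\Lambda_j$, Lemma~\ref{lem:oned-local} can at best give $\mathbb E_{R_j}\approx\mathbb E^{(\Lambda_j)}_{R_j}$. Your tower identity $\mathcal G_j=\mathcal G_j\circ\mathbb E^{(\Lambda_j)}_{R_j}$ is correct. After it, what remains is to show that $(I-\mathcal G_j)\mathbb E^{(\Lambda_j)}_{R_j}X$ is small for such $X$. Concretely: conditioning an operator that lives on $D_j$ and outside $\Lambda_j$ onto the outer halves, which sit at distance at least $a$ from $D_j$, must produce approximately the identity there.

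That is a decay-of-correlations statement for the Gibbs state across the inner half-blocks. Lemma~\ref{lem:oned-local} does not contain it. That lemma compares two projections onto the same type of range (identity on a fixed region) and says nothing about whether conditioning destroys identity factors at distance $a$. There is also the issue that the local projections are orthogonal only for $\rho^{\mathrm{loc}}_{\beta,\Lambda_j}$, while the estimate is needed in the global KMS norm. ``Invoking the clustering estimates behind Lemma III.3'' does not supply either point.

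\textbf{How the paper closes this step.} It uses a separate input, Lemma~\ref{lem:oned-buffer-removal}, taken from \cite[Lemma IV.1, Corollary I.2, Theorem B.2, Remark IV.1]{BC2025}. For every $X$ that is the identity on $B_j\cup B_{j+1}$, it gives $\norm{(\mathbb E_{R_j}-\mathcal G_j)X}_{\rho_\beta}\leq\bigl(Ce^{-ca}+Ce^{C\ell-ca\log(a+2)}\bigr)\norm X_{\rho_\beta}$. The additional term $Ce^{-ca}$, absent from Lemma~\ref{lem:oned-local}, is exactly the clustering contribution your argument is missing. With $a\geq4(r+1)$ and $r=\Theta(\log(2n))$, this yields $\delta=1/(100n)$ uniformly in $P$. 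To complete your proof, you must either cite this comparison or prove an exponential clustering estimate of this form yourself.

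\textbf{A minor point.} Your extra concern that $\mathbb E_{R_j}$ must preserve $W_P$ up to error $\delta$ is unnecessary. The single-split computation uses only $\norm{(2\mathbb E_{R_j}-I)X}_{\rho_\beta}=\norm X_{\rho_\beta}$, which holds on the whole space, together with $\delta$-closeness for $X\in W_P$. The iteration lemma is applied on $W_P$, which $\mathcal G_j$ preserves.
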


The proof is similar to that of
Proposition~\ref{prop:oned-first-decomposition}.
Here we need to compare $\mathcal G_jX$ and $\mathbb E_{R_j}X$
for $X\in W_P$, with $\mathbb E_{R_j}$ the KMS orthogonal
projection associated with an interval $R_j$.
Divide each $B_j$ into left and right halves $(B_j)_\ell,(B_j)_r$,
and define
\[
 R_j:=(B_j)_r\cup D_j\cup(B_{j+1})_\ell,
 \qquad \mathbb E_{R_j}=\mathbb E_{R_j}^{([n])}.
\]
The other halves of $B_j,B_{j+1}$ separate $R_j$ from the
exterior of $\Lambda_j$. The following estimate from~\cite{BC2025}
gives the required comparison, similar to Lemma~\ref{lem:oned-local}.

\begin{lemma}[Comparison with the KMS projection]
\label{lem:oned-buffer-removal}
\cite[Lemma IV.1, Corollary I.2, Theorem B.2, and Remark IV.1]{BC2025}.
Set $a=\frac12\min\{|B_j|,|B_{j+1}|\}$.
For every $X$ that is the identity on $B_j\cup B_{j+1}$,
with arbitrary dependence outside $\Lambda_j$,
\[
 \norm{(\mathbb E_{R_j}-\mathcal G_j)X}_{\rho_\beta}
 \leq\left(Ce^{-ca}
       +Ce^{C\ell-ca\log(a+2)}\right)\norm X_{\rho_\beta}.
\]
Here $C,c>0$ depend only on $\beta,q$.
\end{lemma}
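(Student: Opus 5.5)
The approach is to insert an intermediate local projection and use the triangle inequality. Write $L:=(B_j)_\ell$ and $R':=(B_{j+1})_r$ for the outer halves of the two main blocks, so that $\Lambda_j=L\sqcup R_j\sqcup R'$ and both $L$ and $R'$ have at least $a$ sites. Let
\[
 \widetilde{\mathbb E}_{R_j}:=\mathbb E_{R_j}^{(\Lambda_j)}
\]
be the local KMS projection, taken with respect to $\rho^{\mathrm{loc}}_{\beta,\Lambda_j}$ and tensored with $\mathrm{id}_{\Lambda_j^c}$, onto operators that are the identity on $R_j$. We then split
\[
 (\mathbb E_{R_j}-\mathcal G_j)X=(\mathbb E_{R_j}-\widetilde{\mathbb E}_{R_j})X+(\widetilde{\mathbb E}_{R_j}-\mathcal G_j)X
\]
and bound the two pieces separately. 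We expect the first piece to produce the term $Ce^{C\ell-ca\log(a+2)}$ and the second to produce the term $Ce^{-ca}$.

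\emph{First piece.} This is exactly the situation of Lemma~\ref{lem:oned-local}, with $R_j$ in place of $B_i$ and $\Lambda_j$ in place of $U_i$. The buffers $L$ and $R'$ each have size at least $a$, and $|R_j|\leq\ell$. The locality estimate of~\cite[Lemma III.3]{BC2025} then gives
\[
 \norm{\mathbb E_{R_j}-\widetilde{\mathbb E}_{R_j}}_{\rho_\beta\to\rho_\beta}\leq Ce^{C\ell-ca\log(a+2)}
\]
for all $X$, not only for those in $W_P$. This step needs no special structure of $X$.

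\emph{Second piece.} Here we use that $X$ is the identity on $B_j\cup B_{j+1}$. Since $\widetilde{\mathbb E}_{R_j}$ acts only on $\Lambda_j$, we regard $\Lambda_j^c$ as an ancilla and write
\[
 X=\sum_\alpha I_{L}\otimes I_{(B_j)_r}\otimes Y_\alpha\otimes I_{(B_{j+1})_\ell}\otimes I_{R'}\otimes W_\alpha,
 \qquad Y_\alpha\in\mathcal B(\mathcal H_{D_j}),\ W_\alpha\in\mathcal B(\mathcal H_{\Lambda_j^c}).
\]
The range of $\widetilde{\mathbb E}_{R_j}$ is $\mathcal B(\mathcal H_{L\cup R'})\otimes I_{R_j}$. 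Hence $\widetilde{\mathbb E}_{R_j}X$ is determined by the local KMS inner products $\langle Z,X\rangle$ with $Z$ supported on $L\cup R'$, paired with $\Lambda_j^c$. Each such $Z$ is separated from $D_j$ by distance at least $a$ inside $\Lambda_j$.

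Exponential decay of KMS correlations for one-dimensional Gibbs states at distance $a$~\cite[Corollary I.2, Theorem B.2]{BC2025} then makes these inner products factorize up to $Ce^{-ca}$:
\[
 \langle Z,X\rangle\approx\langle Z,I\rangle\,\langle I,X\rangle.
\]
The product term is exactly the inner product that defines $\mathcal G_jX=I_{\Lambda_j}\otimes\Tr_{\Lambda_j}(\rho^{\mathrm{loc}}_{\beta,\Lambda_j}X)$. Equivalently, the operator $\widetilde{\mathbb E}_{R_j}-\mathcal G_j$, restricted to the subspace of operators supported on $D_j$ (tensored with the ancilla), has norm bounded by the maximal correlation coefficient between the algebras of $D_j$ and $L\cup R'$ in the KMS geometry. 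That maximal correlation coefficient is $O(e^{-ca})$.

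This is the step I expect to be the main obstacle. The clustering has to hold at the level of maximal correlation, uniformly over the ancilla $\Lambda_j^c$, which is the strong clustering of spectral conditional expectations in~\cite[Lemma IV.1, Remark IV.1]{BC2025}. It must also be transferred from the local KMS norm on $\Lambda_j$ to the global norm $\norm{\cdot}_{\rho_\beta}$.

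For the transfer, I would first try to avoid any norm conversion by working directly with the global state. The idea is to note that both $\mathcal G_jX$ and $\widetilde{\mathbb E}_{R_j}X$ are compared against $\mathbb E_{R_j}X$. One can then apply the clustering statement of~\cite[Lemma IV.1]{BC2025} in the global KMS space, with $\mathbb E_{R_j}$ in place of $\widetilde{\mathbb E}_{R_j}$. Any residual discrepancy is a local-versus-global replacement, which is again absorbed into the $e^{C\ell-ca\log(a+2)}$ term via Lemma~\ref{lem:oned-local}. Summing the two contributions yields the stated bound.
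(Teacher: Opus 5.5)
Your proposal is correct and follows the structure of the argument behind the cited result. You replace $\mathbb E_{R_j}$ by the local projection $\mathbb E^{(\Lambda_j)}_{R_j}$ using the locality estimate of Lemma~\ref{lem:oned-local}, with core $R_j$ and buffers $(B_j)_\ell,(B_{j+1})_r$ of size about $a$; this gives the $Ce^{C\ell-ca\log(a+2)}$ term. (Note that $|R_j|$ can reach about $2\ell$ rather than $\ell$, which only changes $C$.) You then bound $\mathbb E^{(\Lambda_j)}_{R_j}-\mathcal G_j$ on operators supported on $D_j\cup\Lambda_j^c$ by the maximal-correlation (strong-clustering) estimate between $D_j$ and $(B_j)_\ell\cup(B_{j+1})_r$, which gives the $Ce^{-ca}$ term.

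The one step that needs changing is the transfer to $\norm{\cdot}_{\rho_\beta}$. The ``global'' variant in your last paragraph does not close. After global clustering you are left comparing a global projection with $\mathcal G_j$, and this comparison has no buffer inside $\Lambda_j$, so Lemma~\ref{lem:oned-local} cannot absorb it.

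Instead, use the $n$-independent equivalence of $\norm{\cdot}_{\rho_\beta}$ with the KMS norm of $\rho^{\mathrm{loc}}_{\beta,\Lambda_j}\otimes\rho^{\mathrm{loc}}_{\beta,\Lambda_j^c}$. This follows from Araki's bound on one-dimensional imaginary-time expansionals. In that product geometry $\Lambda_j^c$ is a genuine tensor factor, so your maximal-correlation bound holds uniformly over the ancilla automatically.
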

Recall the partition chosen after Lemma~\ref{lem:oned-local},
with $r=\Theta(\log(2n))$ and
$\delta=1/(100n)$. Since $|B_i|\geq8(r+1)$,
we have $a\geq4(r+1)$, and Lemma~\ref{lem:oned-buffer-removal} gives
\[
 Ce^{-ca}+Ce^{C\ell-ca\log(a+2)}\leq\frac1{100n}
\]
for a sufficiently large implicit constant in $r$, depending
only on $\beta,q$. This preserves the first comparison and gives
\begin{equation}\label{eq:oned-second-projection-comparison}
 \norm{(\mathbb E_{R_j}-\mathcal G_j)X}_{\rho_\beta}
 \leq\delta\norm X_{\rho_\beta}
 \quad(X\in W_P,\ j\in J_P),
 \qquad\delta=\frac1{100n}.
\end{equation}
The choice is uniform in $P$.

The comparison \eqref{eq:oned-second-projection-comparison}
gives the almost orthogonality stated below; together with the local
spectral gap in Lemma~\ref{lem:oned-local-gap}, it yields
Proposition~\ref{prop:oned-second-decomposition}.

\begin{lemma}[Almost orthogonality of the second decomposition]
\label{lem:oned-second-almost-orthogonality}
Under \eqref{eq:oned-second-projection-comparison}, for every
$P\subseteq[m]$, every $\mathcal F\subseteq2^{J_P}$, and every family
$Y_T\in\ran\mathcal J_{P,T}$, $T\in\mathcal F$,
\begin{equation}\label{eq:oned-second-sum}
 \frac12\sum_{T\in\mathcal F}\norm{Y_T}_{\rho_\beta}^2
 \leq\norm{\sum_{T\in\mathcal F}Y_T}_{\rho_\beta}^2
 \leq2\sum_{T\in\mathcal F}\norm{Y_T}_{\rho_\beta}^2.
\end{equation}
The constants are independent of $P$ and $\mathcal F$.
\end{lemma}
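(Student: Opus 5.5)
The plan mirrors the proof of Lemma~\ref{lem:oned-approx-orth}. The first step is a single-projection near-Pythagorean identity for each $\mathcal G_j$ on $W_P$, and the second is iteration over $j\in J_P$ via Lemma~\ref{lem:projection-iteration}.

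Fix $P$ and $j\in J_P$, and write $\Delta_j:=\mathcal G_j-\mathbb E_{R_j}$ restricted to $W_P$. By \eqref{eq:oned-second-projection-comparison}, $\norm{\Delta_jX}_{\rho_\beta}\leq\delta\norm X_{\rho_\beta}$ for $X\in W_P$. Since $\mathbb E_{R_j}$ is a KMS orthogonal projection, we have $\norm X_{\rho_\beta}^2=\norm{\mathbb E_{R_j}X}_{\rho_\beta}^2+\norm{(I-\mathbb E_{R_j})X}_{\rho_\beta}^2$, and the reflection $2\mathbb E_{R_j}-I$ is a KMS isometry. Expanding as before gives
\[
 \norm{\mathcal G_jX}_{\rho_\beta}^2+\norm{(I-\mathcal G_j)X}_{\rho_\beta}^2-\norm X_{\rho_\beta}^2
 =2\operatorname{Re}\ip{(2\mathbb E_{R_j}-I)X}{\Delta_jX}+2\norm{\Delta_jX}_{\rho_\beta}^2 .
\]
By Cauchy--Schwarz, the absolute value of this quantity is at most $\xi\norm X_{\rho_\beta}^2$ with $\xi=2\delta+2\delta^2$, for every $X\in W_P$. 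The point to check here is that the comparison is only available on $W_P$. This is why the iteration must stay inside $W_P$: by the listed properties, $\mathcal G_j(W_P)\subseteq W_P$, and hence $(I-\mathcal G_j)(W_P)\subseteq W_P$ as well.

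Next I would apply Lemma~\ref{lem:projection-iteration} with the commuting idempotents $\{\mathcal G_j\}_{j\in J_P}$ acting on the invariant subspace $W_P$, in place of $\{\mathcal T_i\}$ on the full space. That lemma presumably takes as input exactly a family of commuting idempotents satisfying the one-step near-Pythagorean bound on an invariant subspace containing their ranges. Concretely, I would order $J_P=\{j_1<\dots<j_p\}$ and split $\sum_{T\in\mathcal F}Y_T$ according to whether $j_1\in T$. The part with $j_1\in T$ equals $(I-\mathcal G_{j_1})\sum Y_T$ and the part with $j_1\notin T$ equals $\mathcal G_{j_1}\sum Y_T$; both use $\mathcal J_{P,Q}\mathcal J_{P,T}=\mathbf1_{\{Q=T\}}\mathcal J_{P,T}$ and the fact that $Y_T\in\ran\mathcal J_{P,T}$. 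The one-step bound then compares the squared norm with the sum of the two pieces up to a factor $1\pm\xi$. Recursing on each piece through $j_2,\dots,j_p$ separates every $Y_T$, so after $p\leq m$ steps we obtain
\[
 (1-\xi)^m\norm{\sum_{T\in\mathcal F}Y_T}_{\rho_\beta}^2\leq\sum_{T\in\mathcal F}\norm{Y_T}_{\rho_\beta}^2\leq(1+\xi)^m\norm{\sum_{T\in\mathcal F}Y_T}_{\rho_\beta}^2 .
\]
Each intermediate piece lies in $W_P$ because every factor preserves $W_P$, so the one-step bound applies at every stage.

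It remains to check the constants. Since $m\leq n$ and $\delta=1/(100n)$, we have $m\xi\leq 2m\delta+2m\delta^2\leq 0.0201$. Then $(1+\xi)^m\leq e^{0.0201}<2$ and $(1-\xi)^m\geq 1-m\xi>1/2$, which yields \eqref{eq:oned-second-sum} with constants $1/2$ and $2$, independent of $P$ and $\mathcal F$.

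The main obstacle is verifying that the recursion in Lemma~\ref{lem:projection-iteration} only ever evaluates $\Delta_j$ on vectors of $W_P$, since $\mathcal G_j$ is not close to $\mathbb E_{R_j}$ off $W_P$. If that lemma is stated on the whole space, I would restate it on an invariant subspace; the proof goes through verbatim.
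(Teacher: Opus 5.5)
Your proposal is correct and is essentially the paper's proof: you obtain the one-step bound $\bigl|\norm{\mathcal G_jX}_{\rho_\beta}^2+\norm{(I-\mathcal G_j)X}_{\rho_\beta}^2-\norm X_{\rho_\beta}^2\bigr|\leq(2\delta+2\delta^2)\norm X_{\rho_\beta}^2$ on $W_P$ via $\Delta_j=\mathcal G_j-\mathbb E_{R_j}$ and the KMS isometry $2\mathbb E_{R_j}-I$, then apply Lemma~\ref{lem:projection-iteration} to the restrictions $\mathcal G_j|_{W_P}$, $j\in J_P$, with $W_P$ as the Hilbert space. The only difference is cosmetic: you bound the number of iteration steps by $m$ rather than $|J_P|$, which changes nothing since both are at most $n$.
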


Although the statement of the above lemma is similar to Lemma~\ref{lem:oned-approx-orth}, we still provide the proof for completeness.

\begin{proof}[Proof of Lemma~\ref{lem:oned-second-almost-orthogonality}]
Fix $P\subseteq[m]$. We apply Lemma~\ref{lem:projection-iteration}
to the commuting idempotents $\mathcal G_j|_{W_P}$, $j\in J_P$,
on $W_P$ equipped with the KMS inner product. It remains to verify
\eqref{eq:projection-single-split} for each map on this space.

Write $\Delta_j=\mathcal G_j-\mathbb E_{R_j}$.
For $X\in W_P$, \eqref{eq:oned-second-projection-comparison} gives
$\norm{\Delta_jX}_{\rho_\beta}\leq\delta\norm X_{\rho_\beta}$.
The KMS orthogonal decomposition
$X=\mathbb E_{R_j}X+(I-\mathbb E_{R_j})X$ gives
$\norm{(2\mathbb E_{R_j}-I)X}_{\rho_\beta}=\norm X_{\rho_\beta}$, and hence
\[
 \begin{aligned}
 &\left|\norm{\mathcal G_jX}_{\rho_\beta}^2
       +\norm{(I-\mathcal G_j)X}_{\rho_\beta}^2-\norm X_{\rho_\beta}^2\right|\\
 &\quad=\left|2\operatorname{Re}\ip{(2\mathbb E_{R_j}-I)X}{\Delta_jX}
                  +2\norm{\Delta_jX}_{\rho_\beta}^2\right|\\
 &\quad\leq(2\delta+2\delta^2)\norm X_{\rho_\beta}^2.
 \end{aligned}
\]
Set $\xi=2\delta+2\delta^2$. Thus
\begin{equation}\label{eq:oned-second-single-split}
 (1-\xi)\norm X_{\rho_\beta}^2
 \leq\norm{\mathcal G_jX}_{\rho_\beta}^2+\norm{(I-\mathcal G_j)X}_{\rho_\beta}^2
 \leq(1+\xi)\norm X_{\rho_\beta}^2.
\end{equation}
Let $s=|J_P|$.
For any $\mathcal F\subseteq2^{J_P}$ and any family
$Y_T\in\ran\mathcal J_{P,T}$, $T\in\mathcal F$,
Lemma~\ref{lem:projection-iteration} applied with
\eqref{eq:oned-second-single-split} gives
\begin{equation}\label{eq:oned-second-iteration}
 (1-\xi)^s\norm{\sum_{T\in\mathcal F}Y_T}_{\rho_\beta}^2
 \leq\sum_{T\in\mathcal F}\norm{Y_T}_{\rho_\beta}^2
 \leq(1+\xi)^s\norm{\sum_{T\in\mathcal F}Y_T}_{\rho_\beta}^2.
\end{equation}
As $s\leq n$ and $\delta=1/(100n)$,
$(1-\xi)^s\geq1/2$ and $(1+\xi)^s\leq2$.
Thus \eqref{eq:oned-second-iteration} implies \eqref{eq:oned-second-sum}.
Taking $\mathcal F=2^{J_P}$ and $Y_T=\mathcal J_{P,T}X$ also gives
\[
 \frac12\norm X_{\rho_\beta}^2
 \leq\sum_{T\subseteq J_P}\norm{\mathcal J_{P,T}X}_{\rho_\beta}^2
 \leq2\norm X_{\rho_\beta}^2
 \qquad(X\in W_P).
\]
\end{proof}

Now, we are ready to prove the upper bound for the second decomposition.

\begin{proof}[Proof of Proposition~\ref{prop:oned-second-decomposition}]
For the weighted count \eqref{eq:oned-second-count}, each component
$Z_{P,T}$ occurs in exactly $|T|$ local errors, since
\[
 (I-\mathcal G_j)X_P=\sum_{T\ni j}Z_{P,T}.
\]
The lower bound in Lemma~\ref{lem:oned-second-almost-orthogonality}
therefore gives
\[
 \sum_T|T|\norm{Z_{P,T}}_{\rho_\beta}^2
 =\sum_{j\in J_P}\sum_{T\ni j}\norm{Z_{P,T}}_{\rho_\beta}^2
 \leq2\sum_{j\in J_P}\norm{(I-\mathcal G_j)X_P}_{\rho_\beta}^2.
\]
To bound these local errors, use
\[
 (I-\mathcal G_j)X_P
 =(I-\mathbb E_{R_j})X_P+(\mathbb E_{R_j}-\mathcal G_j)X_P.
\]
Lemma~\ref{lem:oned-local-gap} and
\eqref{eq:oned-second-projection-comparison} imply
\[
 \norm{(I-\mathcal G_j)X_P}_{\rho_\beta}^2
 \leq\frac{2}{\lambda_{\mathrm{loc}}}\mathcal D_{\rho_\beta,R_j}(X_P)
       +2\delta^2\norm{X_P}_{\rho_\beta}^2.
\]
Consecutive $R_j$ use opposite halves of $B_{j+1}$, so the
$R_j$ are disjoint. Summing therefore gives
\[
 \sum_T|T|\norm{Z_{P,T}}_{\rho_\beta}^2
 \leq\frac4{\lambda_{\mathrm{loc}}}\mathcal D_{\rho_\beta}(X_P)
       +4|J_P|\delta^2\norm{X_P}_{\rho_\beta}^2
 \leq C[\mathcal D_{\rho_\beta}(X_P)+\norm{X_P}_{\rho_\beta}^2].
\]
This proves \eqref{eq:oned-second-count}.
Finally, \eqref{eq:oned-second-sum} and $|T|\geq k+1$ give
\[
 \begin{aligned}
 \norm{\sum_{|T|>k}Z_{P,T}}_{\rho_\beta}^2&\leq2\sum_{|T|>k}\norm{Z_{P,T}}_{\rho_\beta}^2\leq\frac2{k+1}\sum_T|T|\norm{Z_{P,T}}_{\rho_\beta}^2\\
 &\leq\frac{C}{k+1}[\mathcal D_{\rho_\beta}(X_P)+\norm{X_P}_{\rho_\beta}^2],
 \end{aligned}
\]
which proves \eqref{eq:oned-second-tail-target}.
\end{proof}

\paragraph*{Step 3: the final decomposition and estimation.}

We now combine $X=\sum_PX_P$ from
\eqref{eq:oned-first-decomposition} with
$X_P=\sum_{T\subseteq J_P}Z_{P,T}$ from
\eqref{eq:oned-two-decompositions}.
To turn the support bound \eqref{eq:oned-second-support} into a
count of intervals $U_i=A_iB_iC_i$, recall that
$J_P=\{j\in[m-1]:j,j+1\notin P\}$ and define
\begin{equation}\label{eq:oned-interval-union}
 \Gamma(P,T):=
 P\cup\!\bigcup_{j\in T\cup([m-1]\setminus J_P)}\{j,j+1\},
 \qquad |\Gamma(P,T)|\leq3|P|+2|T|.
\end{equation}
The indices from $[m-1]\setminus J_P$ contribute only $P$ and
its immediate neighbors, at most $3|P|$ indices in all; each
$j\in T$ adds at most two. For $S\subseteq[m]$, define
\begin{equation}\label{eq:oned-grouped-components}
 \widehat X_S:=\sum_{\Gamma(P,T)=S}Z_{P,T}.
\end{equation}
Then we have
$$
X=\sum_{S\subseteq[m]}\widehat X_S,\qquad
 \operatorname{supp}(\widehat X_S)\subseteq\bigcup_{i\in S}U_i.
$$
Thus this construction satisfies the decomposition and support condition
\eqref{eq:oned-decomposition-target} in
Proposition~\ref{prop:oned-decomposition}.
We now verify its tail bound \eqref{eq:oned-tail-target}. For every
integer $k\geq0$,
\[
 \sum_{|S|>k}\widehat X_S=\sum_{|\Gamma(P,T)|>k}Z_{P,T}.
\]
It suffices to bound the norm of such a sum.
Again, we \reviewedit{use approximate orthogonality} of this decomposition.

\begin{lemma}[Norm of a sum in the final decomposition]
\label{lem:oned-global-sum}
For the chosen partition, every operator $X$ and every
$\mathcal F\subseteq\{(P,T):P\subseteq[m],\ T\subseteq J_P\}$ satisfy
\begin{equation}\label{eq:oned-sum-bound}
 \norm{\sum_{(P,T)\in\mathcal F}Z_{P,T}}_{\rho_\beta}^2
 \leq C\sum_{(P,T)\in\mathcal F}\norm{Z_{P,T}}_{\rho_\beta}^2.
\end{equation}
\end{lemma}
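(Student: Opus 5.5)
The plan is to reduce to the two almost-orthogonality lemmas already proved, applied in sequence. Given $\mathcal F$, group the pairs by their first coordinate: write $\mathcal F_P:=\{T:(P,T)\in\mathcal F\}$ and
\[
 Y_P:=\sum_{T\in\mathcal F_P}Z_{P,T}.
\]
The key structural observation is that $Y_P\in\ran\mathcal R_P$. Once this holds, Lemma~\ref{lem:oned-approx-orth}, applied to the family $\{Y_P\}$, gives
\[
 \norm{\sum_{(P,T)\in\mathcal F}Z_{P,T}}_{\rho_\beta}^2\leq C\sum_P\norm{Y_P}_{\rho_\beta}^2 .
\]
Next, for each fixed $P$, the upper bound in Lemma~\ref{lem:oned-second-almost-orthogonality}, applied to $Y_T=Z_{P,T}\in\ran\mathcal J_{P,T}$, gives
\[
 \norm{Y_P}_{\rho_\beta}^2\leq2\sum_{T\in\mathcal F_P}\norm{Z_{P,T}}_{\rho_\beta}^2 .
\]
Combining the two displays yields \eqref{eq:oned-sum-bound} with constant $2C$.

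The main obstacle is verifying $Z_{P,T}\in\ran\mathcal R_P$. Equivalently, we need $\mathcal T_iZ_{P,T}=\mathbf 1_{\{i\notin P\}}Z_{P,T}$ for every $i$, since $\ran\mathcal R_P=\{Y:\mathcal T_iY=\mathbf 1_{\{i\notin P\}}Y\ \forall i\}$ by commutativity and idempotence.

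For $i\notin P$, this is immediate. Each $\mathcal G_j$ maps $W_P$ into $W_P$, and so does $I-\mathcal G_j$, so $Z_{P,T}\in W_P$. Hence $Z_{P,T}$ is the identity on $B_i$, and $\ran\mathcal T_i=I_{B_i}\otimes\mathcal B(\mathcal H_{B_i^c})$ is fixed by the idempotent $\mathcal T_i$.

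For $i\in P$, we need $\mathcal T_iZ_{P,T}=0$. The idea is to show that $\mathcal T_i$ commutes with every $\mathcal G_j$, $j\in J_P$, on the relevant operators; then $\mathcal T_i\mathcal J_{P,T}X_P=\mathcal J_{P,T}\mathcal T_iX_P=0$. Here $\mathcal T_i$ acts only on $U_i=A_iB_iC_i$, while $\mathcal G_j$ acts only on $\Lambda_j=B_jD_jB_{j+1}$. Since $i\in P$ and $j\in J_P$, we have $i\neq j,j+1$, so $\Lambda_j\cap U_i$ is at most a buffer $C_i$ (when $j=i+1$) or $A_i$ (when $j+1=i-1$).

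If the supports overlap, the two maps need not commute, so this is where care is needed. I would first check that overlap actually occurs: $\Lambda_{i+1}=B_{i+1}C_{i+1}A_{i+2}B_{i+2}$ does not meet $U_i$, so in fact $\Lambda_j\cap U_i=\emptyset$ whenever $j,j+1\neq i$. Thus the supports are disjoint, $\mathcal T_i$ and $\mathcal G_j$ act as tensor products on disjoint factors, and they commute on all operators.

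One remaining point is that $\mathcal G_j$ is viewed as a map on $W_P$. I would note that $\mathcal T_i$ preserves $W_P$ (it touches only $U_i$, and $B_i$ is not an imposed identity factor since $i\in P$), so the commutation holds on $W_P$. Since $\mathcal T_iX_P=0$, this gives $\mathcal T_iZ_{P,T}=0$.

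With membership established, the two lemmas chain as described. The constant in \eqref{eq:oned-sum-bound} is independent of $m$ and $\mathcal F$, because both lemmas have such constants.
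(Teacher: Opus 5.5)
Your proposal is correct and follows the paper's proof essentially verbatim. You group by $P$, apply Lemma~\ref{lem:oned-approx-orth} to the $P$-sums and then Lemma~\ref{lem:oned-second-almost-orthogonality} for each fixed $P$. The key membership $Z_{P,T}\in\ran\mathcal R_P$ comes, as in the paper, from $\mathcal T_i$ fixing $W_P$ when $i\notin P$, and from $U_i\cap\Lambda_j=\emptyset$ (so $[\mathcal T_i,\mathcal G_j]=0$) when $i\in P$ and $j\in J_P$.

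The only difference is cosmetic: you check $\mathcal T_iZ_{P,T}=\mathbf 1_{\{i\notin P\}}Z_{P,T}$ directly, whereas the paper shows $\mathcal R_P\mathcal G_j\mathcal R_P=\mathcal G_j\mathcal R_P$. Your passing worry that $\Lambda_j$ might meet a buffer of $U_i$ is unfounded, as you then note, since $\Lambda_j\subseteq U_j\cup U_{j+1}$.
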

\begin{proof}[Proof of Lemma~\ref{lem:oned-global-sum}]
To prove \eqref{eq:oned-sum-bound}, we apply almost orthogonality
first to the sum over $P$, and then to the sum over $T$ for each
fixed $P$:
\[
 \begin{aligned}
 \norm{\sum_{(P,T)\in\mathcal F}Z_{P,T}}_{\rho_\beta}^2
 &\overset{\mathrm{(i)}}{\leq} C\sum_{P\subseteq[m]}
       \norm{\sum_{T:\,(P,T)\in\mathcal F}Z_{P,T}}_{\rho_\beta}^2\\
 &\overset{\mathrm{(ii)}}{\leq}2C\sum_{(P,T)\in\mathcal F}\norm{Z_{P,T}}_{\rho_\beta}^2.
 \end{aligned}
\]
Step (ii) follows directly from
Lemma~\ref{lem:oned-second-almost-orthogonality}.
For (i), we show that $Z_{P,T}\in\ran\mathcal R_P$ to apply
\eqref{eq:oned-decomposition-norms}.
Indeed, fix $j\in J_P$. If $i\in P$, then $U_i$ and $\Lambda_j$
are disjoint, so $\mathcal T_i$ and $\mathcal G_j$ commute.
If $i\notin P$, then $\mathcal T_i$ fixes $W_P$, which contains
$\ran\mathcal R_P$ and is preserved by $\mathcal G_j$. Thus
\[
 \mathcal T_i\mathcal G_j\mathcal R_P
 =\begin{cases}
  \mathcal G_j\mathcal T_i\mathcal R_P=0,&i\in P,\\
  \mathcal G_j\mathcal R_P,&i\notin P.
 \end{cases}
\]
The product defining $\mathcal R_P$ gives
$\mathcal R_P\mathcal G_j\mathcal R_P=\mathcal G_j\mathcal R_P$.
Hence every factor of $\mathcal J_{P,T}$ preserves
$\ran\mathcal R_P$, and
$Z_{P,T}=\mathcal J_{P,T}\mathcal R_PX\in\ran\mathcal R_P$, as required.
\end{proof}
Similar to~\eqref{eq:oned-first-count} and~\eqref{eq:oned-second-count} in the first two steps, we have the following lemma:
\begin{lemma}[Count for the final decomposition]
\label{lem:oned-global-count}
For the partition above and every operator $X$,
\begin{equation}\label{eq:oned-weighted-bound}
 \sum_{P,T}|\Gamma(P,T)|\norm{Z_{P,T}}_{\rho_\beta}^2
 \leq C\ell[\mathcal D_{\rho_\beta}(X)+\norm X_{\rho_\beta}^2].
\end{equation}
The constant $C$ depends only on $\beta,q$.
\end{lemma}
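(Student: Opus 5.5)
We bound $|\Gamma(P,T)|\leq 3|P|+2|T|$ from \eqref{eq:oned-interval-union} and split the weighted sum into two pieces:
\[
 \sum_{P,T}|\Gamma(P,T)|\norm{Z_{P,T}}_{\rho_\beta}^2
 \leq 3\sum_P|P|\sum_{T\subseteq J_P}\norm{Z_{P,T}}_{\rho_\beta}^2
 +2\sum_P\sum_{T\subseteq J_P}|T|\norm{Z_{P,T}}_{\rho_\beta}^2 .
\]
For the first piece, we apply Lemma~\ref{lem:oned-second-almost-orthogonality} with $\mathcal F=2^{J_P}$. This gives $\sum_T\norm{Z_{P,T}}_{\rho_\beta}^2\leq2\norm{X_P}_{\rho_\beta}^2$. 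The first piece is then at most $6\sum_P|P|\norm{X_P}_{\rho_\beta}^2$, which \eqref{eq:oned-first-count} bounds by $C[\mathcal D_{\rho_\beta}(X)+\norm X_{\rho_\beta}^2]$, with no factor of $\ell$.

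For the second piece, \eqref{eq:oned-second-count} gives the bound
\[
 C\sum_P\bigl[\mathcal D_{\rho_\beta}(X_P)+\norm{X_P}_{\rho_\beta}^2\bigr].
\]
The norm part is at most $C\norm X_{\rho_\beta}^2$ by Lemma~\ref{lem:oned-approx-orth} applied to $\sum_PX_P=X$. The main obstacle is the Dirichlet part $\sum_P\mathcal D_{\rho_\beta}(X_P)$, and this is where the factor $\ell$ enters.

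We write $\mathcal D_{\rho_\beta}(X_P)=\sum_{x,u}\norm{[\sigma_{x,u},X_P]}_{\rho_\beta}^2$ and fix a site $x$, say $x\in U_{i_0}$. Commutators with $\sigma_{x,u}$ commute with every $\mathcal T_i$ for $i\neq i_0$, because those maps act only on $U_i$, which is disjoint from $x$. Hence
\[
 [\sigma_{x,u},X_P]=\prod_{i\in P\setminus\{i_0\}}(I-\mathcal T_i)\prod_{i\notin P\cup\{i_0\}}\mathcal T_i\,[\sigma_{x,u},\mathcal S X],
\]
where $\mathcal S\in\{\mathcal T_{i_0},I-\mathcal T_{i_0}\}$ according to whether $i_0\in P$. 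Grouping the sets $P$ by the value of $\mathbf 1_{i_0\in P}$, the operators $[\sigma_{x,u},X_P]$ with $P\setminus\{i_0\}$ ranging over subsets of $[m]\setminus\{i_0\}$ form the components of the product decomposition of $[\sigma_{x,u},\mathcal SX]$ under the commuting family $\{\mathcal T_i\}_{i\neq i_0}$.

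The almost-orthogonality argument of Lemma~\ref{lem:oned-approx-orth}, which only uses \eqref{eq:oned-first-single-split} for each $i$ and Lemma~\ref{lem:projection-iteration}, then yields
\[
 \sum_P\norm{[\sigma_{x,u},X_P]}_{\rho_\beta}^2\leq C\bigl(\norm{[\sigma_{x,u},\mathcal T_{i_0}X]}_{\rho_\beta}^2+\norm{[\sigma_{x,u},(I-\mathcal T_{i_0})X]}_{\rho_\beta}^2\bigr).
\]
To control the right side, we compare $\mathcal T_{i_0}$ with $\mathbb E_{B_{i_0}}$. For $x\in B_{i_0}$ the commutator with $\mathcal T_{i_0}X$ vanishes, and the remaining term is at most $2\norm{[\sigma_{x,u},X]}^2+2\norm{[\sigma_{x,u},\mathcal T_{i_0}X]}^2$. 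For $x$ in the buffers of $U_{i_0}$ we need a bound of the form
\[
 \sum_{x\in U_{i_0},u}\norm{[\sigma_{x,u},\mathcal T_{i_0}X]}_{\rho_\beta}^2\leq C\ell\bigl[\mathcal D_{\rho_\beta,U_{i_0}}(X)+\mathcal D_{\rho_\beta}(X)\,\text{-local terms}+\delta^2\norm X_{\rho_\beta}^2\bigr].
\]
We plan to get this crudely. Since $\mathcal T_{i_0}$ is a local map on $U_{i_0}$ with $\rho_\beta\to\rho_\beta$ norm $O(1)$ (being $\delta$-close to a KMS projection), we have $\norm{[\sigma_{x,u},\mathcal T_{i_0}X]}\leq\norm{[\sigma_{x,u},\mathcal T_{i_0}X]-\mathcal T_{i_0}[\sigma_{x,u},X]}+O(1)\norm{[\sigma_{x,u},X]}$. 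The first term is bounded by $C\norm{(I-\mathcal T_{i_0})X}$-type quantities, which via \eqref{eq:oned-first-error-bound} are at most $C[\mathcal D_{\rho_\beta,B_{i_0}}(X)+\delta^2\norm X^2]$. Summing over the at most $\ell$ sites $x\in U_{i_0}$ produces the factor $\ell$. Summing over $i_0$, with the $B_{i_0}$ disjoint, gives $\sum_P\mathcal D_{\rho_\beta}(X_P)\leq C\ell[\mathcal D_{\rho_\beta}(X)+n\delta^2\norm X^2]$.

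The delicate point, which we expect to be the hardest, is justifying that the ``non-commutation'' defect $[\sigma_{x,u},\mathcal T_{i_0}\cdot]-\mathcal T_{i_0}[\sigma_{x,u},\cdot]$ is controlled by $\norm{(I-\mathcal T_{i_0})X}$ plus a local Dirichlet term. We would try to prove this by writing $\mathcal T_{i_0}X-X=-(I-\mathcal T_{i_0})X$ and bounding $\norm{[\sigma_{x,u},Y]}_{\rho_\beta}\leq 2M_\beta\norm Y_{\rho_\beta}$ via the single-site multiplication bound from \cite[Lemmas II.2, A.2]{BC2025}, giving
\[
 \norm{[\sigma_{x,u},\mathcal T_{i_0}X]}\leq\norm{[\sigma_{x,u},X]}+2M_\beta\norm{(I-\mathcal T_{i_0})X}.
\]
Combining the three estimates yields \eqref{eq:oned-weighted-bound}.
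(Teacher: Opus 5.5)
Your proposal is correct and follows the paper's proof. You use the same split via $|\Gamma(P,T)|\le 3|P|+2|T|$ and the same reduction to $\sum_P\mathcal D_{\rho_\beta}(X_P)\le C\ell[\mathcal D_{\rho_\beta}(X)+n\delta^2\norm X_{\rho_\beta}^2]$. You also arrive at the same per-site bound $\sum_P\norm{[\sigma_{x,u},X_P]}_{\rho_\beta}^2\le C\norm{[\sigma_{x,u},X]}_{\rho_\beta}^2+C\norm{(I-\mathcal T_{i_0})X}_{\rho_\beta}^2$ (the paper's Claim), summed over the at most $\ell$ sites of each $U_{i_0}$.

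The one difference is how you prove that per-site bound. You commute the commutator through the $\mathcal T_i$ with $i\ne i_0$, apply Lemma~\ref{lem:projection-iteration} to that reduced family, and finish with the triangle inequality and $\norm{[\sigma_{x,u},Y]}_{\rho_\beta}\le 2M_\beta\norm Y_{\rho_\beta}$. The paper instead uses the identity $[\partial,\mathcal R_P]X=\pm\mathcal R_QY$ with $Y=-\mathcal T_i\partial(I-\mathcal T_i)X$. Your route never needs the ``delicate'' non-commutation defect you flagged.
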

\begin{proof}
To prove \eqref{eq:oned-weighted-bound}, use the support count
$|\Gamma(P,T)|\leq3|P|+2|T|$ from
\eqref{eq:oned-interval-union}. Almost orthogonality in
Lemma~\ref{lem:oned-second-almost-orthogonality} and the count in
Proposition~\ref{prop:oned-second-decomposition} give
\begin{equation}\label{eq:oned-total-count-reduction}
 \begin{aligned}
 \sum_{P,T}|\Gamma(P,T)|\norm{Z_{P,T}}_{\rho_\beta}^2
 &\leq3\sum_P|P|\sum_T\norm{Z_{P,T}}_{\rho_\beta}^2
       +2\sum_P\sum_T|T|\norm{Z_{P,T}}_{\rho_\beta}^2\\
 &\leq C\sum_P\left[(|P|+1)\norm{X_P}_{\rho_\beta}^2
                         +\mathcal D_{\rho_\beta}(X_P)\right].
 \end{aligned}
\end{equation}
For the first term of \eqref{eq:oned-total-count-reduction},
Proposition~\ref{prop:oned-first-decomposition}
and almost orthogonality \eqref{eq:oned-decomposition-norms} give
\[
 \sum_P(|P|+1)\norm{X_P}_{\rho_\beta}^2
 \leq C[\mathcal D_{\rho_\beta}(X)+\norm X_{\rho_\beta}^2].
\]
Thus, it suffices to prove
\begin{equation}\label{eq:oned-component-energy}
 \sum_P\mathcal D_{\rho_\beta}(X_P)
 \leq C\ell[\mathcal D_{\rho_\beta}(X)+n\delta^2\norm X_{\rho_\beta}^2].
\end{equation}
The key to proving~\eqref{eq:oned-component-energy} is to use
almost orthogonality to control the sum of squared KMS norms in
the Dirichlet forms.

Recall the definition
\[
 \mathcal D_{\rho_\beta}(X)=\sum_{x=1}^n\sum_u
                    \norm{[\sigma_{x,u},X]}_{\rho_\beta}^2,
\]
where $\sigma_{x,u}$ ranges over the nonidentity Pauli operators on site $x$. We first bound the contribution of a single on-site commutator.
\begin{claim}\label{clm:oned-derivative-bound}
Fix a nonidentity Pauli operator $\sigma_{x,u}$ with $x\in U_i$,
and let $\partial X:=[\sigma_{x,u},X]$.
For every operator $X$, with $X_P=\mathcal R_PX$,
\begin{equation}\label{eq:oned-derivative-bound}
 \sum_P\norm{\partial X_P}_{\rho_\beta}^2
 \leq C\norm{\partial X}_{\rho_\beta}^2
      +C\norm{(I-\mathcal T_i)X}_{\rho_\beta}^2,
\end{equation}
where $C$ depends only on $\beta,q$.
\end{claim}
We prove Claim~\ref{clm:oned-derivative-bound} after completing the proof of Lemma~\ref{lem:oned-global-count}.

Summing the bound in Claim~\ref{clm:oned-derivative-bound}
over $x,u$ and grouping sites by $U_i$, we obtain
\[
 \begin{aligned}
 \sum_P\mathcal D_{\rho_\beta}(X_P)
 &=\sum_i\sum_{x\in U_i}\sum_u\sum_P
       \norm{[\sigma_{x,u},X_P]}_{\rho_\beta}^2\\
 &\overset{\text{by }\eqref{eq:oned-derivative-bound}}{\leq}
 C\sum_i\sum_{x\in U_i}\sum_u
       \left(\norm{[\sigma_{x,u},X]}_{\rho_\beta}^2
             +\norm{(I-\mathcal T_i)X}_{\rho_\beta}^2\right)\\
 &=C\mathcal D_{\rho_\beta}(X)
       +C(4^q-1)\sum_i|U_i|\norm{(I-\mathcal T_i)X}_{\rho_\beta}^2\\
 &\leq C\mathcal D_{\rho_\beta}(X)
       +C'\ell\sum_i\norm{(I-\mathcal T_i)X}_{\rho_\beta}^2.
 \end{aligned}
\]
Here each site has $4^q-1$ nonidentity Pauli operators, and $|U_i|\leq\ell$.
Finally, the local spectral gap in Lemma~\ref{lem:oned-local-gap}
and the comparison \eqref{eq:oned-first-locality-applied}
give the local-error bound used in
the proof of Proposition~\ref{prop:oned-first-decomposition},
\[
 \sum_i\norm{(I-\mathcal T_i)X}_{\rho_\beta}^2
 \leq C[\mathcal D_{\rho_\beta}(X)+n\delta^2\norm X_{\rho_\beta}^2].
\]
This proves \eqref{eq:oned-component-energy}.
Substituting this estimate and the bound on
$\sum_P(|P|+1)\norm{X_P}_{\rho_\beta}^2$ into
\eqref{eq:oned-total-count-reduction}, and using $\ell\geq1$ and
$n\delta^2\leq1$, proves \eqref{eq:oned-weighted-bound}.
\end{proof}

\begin{proof}[Proof of Claim~\ref{clm:oned-derivative-bound}]
To prove~\eqref{eq:oned-derivative-bound}, use $X_P=\mathcal R_PX$ to write
\[
 \partial X_P=\mathcal R_P(\partial X)+[\partial,\mathcal R_P]X.
\]
It therefore suffices to bound the two sums in
\[
 \sum_P\norm{\partial X_P}_{\rho_\beta}^2
 \leq2\sum_P\norm{\mathcal R_P(\partial X)}_{\rho_\beta}^2
     +2\sum_P\norm{[\partial,\mathcal R_P]X}_{\rho_\beta}^2.
\]
For the first sum, almost orthogonality
\eqref{eq:oned-decomposition-norms} gives
\[
 \sum_P\norm{\mathcal R_P(\partial X)}_{\rho_\beta}^2
 \leq C\norm{\sum_P\mathcal R_P(\partial X)}_{\rho_\beta}^2
 =C\norm{\partial X}_{\rho_\beta}^2.
\]
For the second sum, we express each commutator in terms of
$Y:=[\partial,\mathcal T_i]X$.
Since an on-site commutator preserves the space of operators that are
the identity on $B_i$, we have
\[
 (I-\mathcal T_i)\partial\mathcal T_i=0,
 \qquad
 Y=-\mathcal T_i\partial(I-\mathcal T_i)X,
 \qquad \mathcal T_iY=Y.
\]
Also, $\partial$ commutes with $\mathcal T_h$ for $h\ne i$,
because these maps act on disjoint intervals.
For $Q\subseteq[m]\setminus\{i\}$, expanding the products defining
$\mathcal R_Q$ and $\mathcal R_{Q\cup\{i\}}$ therefore gives
\[
 \begin{aligned}
 [\partial,\mathcal R_Q]X
 &=\left(\prod_{h\in Q}(I-\mathcal T_h)
         \prod_{h\in[m]\setminus(Q\cup\{i\})}\mathcal T_h\right)Y
   =\mathcal R_QY,\\
 [\partial,\mathcal R_{Q\cup\{i\}}]X
 &=-\mathcal R_QY.
 \end{aligned}
\]
Since $\sum_{Q\subseteq[m]\setminus\{i\}}\mathcal R_Q=\mathcal T_i$,
almost orthogonality \eqref{eq:oned-decomposition-norms} yields
\[
 \begin{aligned}
 \sum_{P\subseteq[m]}\norm{[\partial,\mathcal R_P]X}_{\rho_\beta}^2
 &=2\sum_{Q\subseteq[m]\setminus\{i\}}\norm{\mathcal R_QY}_{\rho_\beta}^2\\
 &\leq2C\norm{\sum_{Q\subseteq[m]\setminus\{i\}}\mathcal R_QY}_{\rho_\beta}^2
 =2C\norm Y_{\rho_\beta}^2.
 \end{aligned}
\]
Using $\norm{\mathcal T_i}_{\rho_\beta\to\rho_\beta}\leq1+\delta$ and
$\norm\partial_{\rho_\beta\to\rho_\beta}\leq C$ from
\cite[Lemmas II.2 and A.2]{BC2025}, we obtain
\[
 \norm Y_{\rho_\beta}^2
 \leq C\norm{(I-\mathcal T_i)X}_{\rho_\beta}^2.
\]
Combining these bounds proves~\eqref{eq:oned-derivative-bound}.
\end{proof}

Finally, we combine the previous results to complete the proof of Proposition~\ref{prop:oned-decomposition}.

\begin{proof}[Proof of Proposition~\ref{prop:oned-decomposition}]
For the partition chosen above,
Lemmas~\ref{lem:oned-global-sum} and~\ref{lem:oned-global-count}
give the required tail bound \eqref{eq:oned-tail-target}:
\[
 \begin{aligned}
 \norm{\sum_{|S|>k}\widehat X_S}_{\rho_\beta}^2
 &=\norm{\sum_{|\Gamma(P,T)|>k}Z_{P,T}}_{\rho_\beta}^2\\
 &\leq C\sum_{|\Gamma(P,T)|>k}\norm{Z_{P,T}}_{\rho_\beta}^2\\
 &\leq\frac{C}{k+1}\sum_{P,T}|\Gamma(P,T)|\norm{Z_{P,T}}_{\rho_\beta}^2\\
 &\leq\frac{C'\ell}{k+1}[\mathcal D_{\rho_\beta}(X)+\norm X_{\rho_\beta}^2].
 \end{aligned}
\]
The partition chosen after Lemma~\ref{lem:oned-local} has
$\ell=O(\log(2n))$ and satisfies \eqref{eq:oned-first-locality-applied}.
The choice verified after Lemma~\ref{lem:oned-buffer-removal}
also ensures \eqref{eq:oned-second-projection-comparison}
for every $P$, while preserving the first comparison.
Together with \eqref{eq:oned-grouped-components}, this proves
Proposition~\ref{prop:oned-decomposition}.
\end{proof}

\paragraph*{Comparison with \cite{BC2025}.}
Our decomposition follows the same local projection idea as~\cite{BC2025};
the middle column combines their spectral gap with our approximate
orthogonality, while the last column records the additional weighted bounds.
\begin{center}
\renewcommand{\arraystretch}{1.8}
\begin{tabularx}{\linewidth}{@{}>{\raggedright\arraybackslash}p{0.14\linewidth}
 >{\raggedright\arraybackslash}X>{\raggedright\arraybackslash}X@{}}
\toprule
Step & Gap consequences of~\cite{BC2025} & This work \\
\midrule
First decomposition
& $\displaystyle \sum_{P\ne\varnothing}\norm{X_P}_{\rho_\beta}^2
   \leq C\mathcal D_{\rho_\beta}(X)$
& $\displaystyle \sum_P|P|\norm{X_P}_{\rho_\beta}^2
   \leq C\mathcal D_{\rho_\beta}(X)$ \\
Second decomposition
& $\displaystyle \sum_{\substack{T\subseteq J_P\\T\ne\varnothing}}
   \norm{Z_{P,T}}_{\rho_\beta}^2\leq C\mathcal D_{\rho_\beta}(X_P)$
& $\displaystyle \sum_{T\subseteq J_P}|T|\norm{Z_{P,T}}_{\rho_\beta}^2
   \leq C\mathcal D_{\rho_\beta}(X_P)$ \\
Combining the two decompositions
& $\displaystyle \norm{X-\Tr(\rho_\beta X)I}_{\rho_\beta}^2
   \leq C\mathcal D_{\rho_\beta}(X)$
& $\displaystyle \norm{\sum_{|S|>k}\widehat X_S}_{\rho_\beta}^2
   \leq\frac{C\ell}{k+1}\mathcal D_{\rho_\beta}(X)$ \\
\bottomrule
\end{tabularx}
\end{center}

\subsection{Proof of Lemma~\ref{lem:framework-low-weight-norm}}
\label{sec:oned-norm}
In this section, we prove Lemma~\ref{lem:framework-low-weight-norm}.

\begin{proof}[Proof of Lemma~\ref{lem:framework-low-weight-norm}]
Our goal is to bound the operator norm of a low-weight operator $Y$ in terms of its KMS norm. To obtain the low-weight component, we first expand $Y=\sum_{w(P)\leq s}y_PP$.
Since every Pauli string has operator norm one,
\[
 \norm Y_\infty\leq\sum_{w(P)\leq s}|y_P|.
\]
There are at most $[1+(d^2-1)n]^s$ such strings. It therefore
suffices to prove $|y_P|\leq(Cn)^s\norm Y_{\rho_\beta}$ for each coefficient,
after choosing $C$ sufficiently large. We first select the part of
$Y$ with weight exactly $w$ and bound its KMS norm. We then extract
each coefficient of that weight by Pauli averaging.

\paragraph*{Step 1: selecting one weight by a polynomial.}
Write
\[
 Y=\sum_{w=0}^sY_w,\qquad
 Y_w:=\sum_{w(R)=w}y_RR.
\]
To control $\norm{Y_w}_{\rho_\beta}$, we express $Y_w$ as a polynomial
in an operator that multiplies each Pauli string by its weight.
For a uniformly random single-site Pauli $\omega_i$, define
\[
 \mathcal N(X):=\sum_{i=1}^n
       \bigl(X-\mathbb E_{\omega_i}[\omega_iX\omega_i^\dagger]\bigr),
 \qquad \mathcal N(P)=w(P)P.
\]
The imaginary-time bound in~\cite[Lemmas II.2 and A.2]{BC2025}
gives a constant $M_\beta\geq1$ such that, for every single-site Pauli $U$,
\[
 \norm{UX}_{\rho_\beta},\ \norm{XU}_{\rho_\beta}\leq M_\beta\norm X_{\rho_\beta}.
\]
Consequently,
$\norm{\mathcal N}_{\rho_\beta\to\rho_\beta}\leq n(1+M_\beta^2)$.
For a fixed $w\in\{0,\ldots,s\}$, construct the polynomial
\[
 f_w(x):=\prod_{\substack{0\leq j\leq s\\j\ne w}}
          \frac{x-j}{w-j}
       =\sum_{r=0}^s a_rx^r.
\]
Since $f_w(k)=\mathbf 1_{\{k=w\}}$ for $k=0,\ldots,s$ and
$\mathcal N(Y_k)=kY_k$, we have
\[
 f_w(\mathcal N)Y=\sum_{k=0}^sf_w(k)Y_k=Y_w.
\]
The nonzero coefficients of $f_w$ alternate in sign, so
\[
 \sum_{r=0}^s|a_r|=|f_w(-1)|=\binom{s+1}{w+1}\leq2^s.
\]
This coefficient bound and the norm bound on $\mathcal N$ give
\begin{equation}\label{eq:oned-weight-component-norm}
 \begin{aligned}
 \norm{Y_w}_{\rho_\beta}
 &=\norm{f_w(\mathcal N)Y}_{\rho_\beta}\leq\sum_{r=0}^s|a_r|\norm{\mathcal N}_{\rho_\beta\to\rho_\beta}^{\,r}
       \norm Y_{\rho_\beta}\\
 &\leq[2n(1+M_\beta^2)]^s\norm Y_{\rho_\beta}
 \leq(Cn)^s\norm Y_{\rho_\beta}.
 \end{aligned}
\end{equation}

\paragraph*{Step 2: extracting one coefficient by Pauli averaging.}
Fix a Pauli string $P$ with $w(P)=w$. To extract $y_P$ from $Y_w$,
choose $\omega$ by taking an independent uniformly random single-site
Pauli at each site in $\operatorname{supp}P$, and the identity elsewhere.
Then
\[
 \mathbb E_\omega[\omega P^\dagger Y_w\omega^\dagger]=y_PI.
\]
Each of $\omega$, $P^\dagger$, and $\omega^\dagger$ has at most $w$
nonidentity factors. The single-site multiplication bound and
$\norm I_{\rho_\beta}=1$ therefore give
\[
 |y_P|
 \leq\mathbb E_\omega\norm{\omega P^\dagger Y_w\omega^\dagger}_{\rho_\beta}
 \leq M_\beta^{3w}\norm{Y_w}_{\rho_\beta}.
\]
Combining this with~\eqref{eq:oned-weight-component-norm} and $w\leq s$
yields
\[
 |y_P|\leq M_\beta^{3s}[2n(1+M_\beta^2)]^s\norm Y_{\rho_\beta}
 \leq(Cn)^s\norm Y_{\rho_\beta}.
\]
After increasing $C$, summing over the at most $[1+(d^2-1)n]^s$
Pauli strings proves $\norm Y_\infty\leq(Cn)^{2s}\norm Y_{\rho_\beta}$. This completes the proof of Lemma~\ref{lem:framework-low-weight-norm}.
\end{proof}

\section{Discussion}

In this work, we develop a new framework for proving rapid mixing of quantum Markov semigroups. In contrast to existing approaches in the quantum literature, our analysis directly exploits the spectral structure of the generator. The key observation is that the slowest-decaying eigenoperators may contribute only weakly to the deviation of the evolving state from the stationary state, whereas the dominant contribution can lie in spectral subspaces associated with substantially faster decay rates. By capturing this nonuniform distribution of the deviation across different spectral modes, our framework yields convergence bounds that can be significantly stronger than those obtained from the spectral gap alone.

To apply this framework to one-dimensional noncommuting Gibbs samplers, inspired by the comparison argument of~\cite{BC2025}, we develop a new decomposition of the operator space into low-weight and high-weight components. \reviewedit{Our weight-dissipation estimate shows that operators with small Rayleigh quotient can be approximated by low-weight Pauli operators.
State normalization limits the contribution that this space can carry
when the chi-squared divergence is large, forcing stronger dissipation.} The use of the Pauli-string decomposition in this work is motivated by the comparison argument in~\ref{cond:framework-comparison}. The underlying framework, however, is not specific to Pauli strings: it can be applied to other operator bases or decompositions whenever an analogous comparison argument can be established.

Our results rigorously establish the possibility of rapid mixing for one-dimensional noncommuting Gibbs samplers \reviewedit{with KMS detailed balance at every fixed positive temperature}. A natural direction for future work is to extend our framework to broader classes of quantum Markov semigroups. As a first step, it would be interesting to investigate whether analogous decompositions can be constructed in other perturbative regimes, such as high-temperature Gibbs samplers in higher dimensions or weakly interacting systems. Such extensions may reveal a more universal design principle for identifying suitable decompositions across a broad class of quantum Markov dynamics. Another interesting direction is to explore connections between our framework and existing techniques for establishing rapid mixing. In particular, it would be valuable to understand whether the spectral structure of the generator exploited here has intrinsic connections to other functional properties of the semigroup, such as hypercontractivity or logarithmic Sobolev inequalities.

\paragraph*{AI assistance.}
This work originated from an investigation of a different problem aided by interactions with GPT-6. The authors formulated the initial research objective and explored whether an approach that emerged from that investigation could be adapted to establish rapid mixing for one-dimensional noncommuting Gibbs samplers based on the KMS-DBC Lindbladian introduced in~\cite{CKG2023}. Through iterative human-AI discussions, the authors utilized GPT-6 to brainstorm the initial proof ideas and to assist with drafting preliminary versions of certain technical calculations appearing in this work. The authors formulated these ideas into the coherent framework developed in Section~\ref{sec:spectral-criterion}, constructed the logically structured proof, and established the main proof strategies presented in Section~\ref{sec:proof-strategy}. The authors guided the overall research direction, critically examined and independently verified all arguments and calculations, and take full responsibility for the final content of the paper.

\appendix
\section*{Appendices}
\section{Iterating the almost orthogonality}
\label{app:almost-orthogonality}
We give the iteration estimate used in both decompositions of
Section~\ref{sec:oned-approximation}.

\begin{lemma}[Iteration for commuting idempotents]
\label{lem:projection-iteration}
Let $\mathcal P_1,\ldots,\mathcal P_m$ be commuting idempotents
on a Hilbert space. Suppose that, for some $0\leq\xi<1$,
\begin{equation}\label{eq:projection-single-split}
 (1-\xi)\norm X^2
 \leq\norm{\mathcal P_iX}^2+\norm{(I-\mathcal P_i)X}^2
 \leq(1+\xi)\norm X^2
 \qquad\text{for every }X\text{ and }i.
\end{equation}
For $S\subseteq[m]$, set
$\mathcal R_S:=\prod_{i\in S}(I-\mathcal P_i)\prod_{i\notin S}\mathcal P_i$.
Then every $\mathcal F\subseteq2^{[m]}$ and every family
$Y_S\in\ran\mathcal R_S$, $S\in\mathcal F$, satisfy
\begin{equation}\label{eq:projection-iteration}
 (1-\xi)^m\norm{\sum_{S\in\mathcal F}Y_S}^2
 \leq\sum_{S\in\mathcal F}\norm{Y_S}^2
 \leq(1+\xi)^m\norm{\sum_{S\in\mathcal F}Y_S}^2.
\end{equation}
\end{lemma}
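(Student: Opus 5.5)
We argue by induction on the number of idempotents $m$. The plan is to peel off one idempotent at a time, using \eqref{eq:projection-single-split} once per step, which produces one factor $(1\pm\xi)$ each time. For $m=0$ there is only $S=\emptyset$, so $\mathcal F$ contains at most one element and the claim is trivial.

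For the inductive step, fix $\mathcal P_m$ and split $\mathcal F=\mathcal F_0\sqcup\mathcal F_1$, where $\mathcal F_1=\{S\in\mathcal F: m\in S\}$. Set
\[
 X:=\sum_{S\in\mathcal F}Y_S,\qquad X_0:=\sum_{S\in\mathcal F_0}Y_S,\qquad X_1:=\sum_{S\in\mathcal F_1}Y_S.
\]
By commutativity and idempotence, $\ran\mathcal R_S\subseteq\ran\mathcal P_m$ when $m\notin S$, and $\ran\mathcal R_S\subseteq\ran(I-\mathcal P_m)$ when $m\in S$. Since $\mathcal P_m$ and $I-\mathcal P_m$ are complementary idempotents, this gives $\mathcal P_mX=X_0$ and $(I-\mathcal P_m)X=X_1$. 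Applying \eqref{eq:projection-single-split} to $X$ then yields
\[
 (1-\xi)\norm X^2\leq\norm{X_0}^2+\norm{X_1}^2\leq(1+\xi)\norm X^2 .
\]

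Next we reduce each of $X_0$ and $X_1$ to a problem with $m-1$ idempotents. Restrict $\mathcal P_1,\dots,\mathcal P_{m-1}$ to the invariant subspace $\ran\mathcal P_m$; invariance holds by commutativity. The restricted maps are still commuting idempotents, and \eqref{eq:projection-single-split} holds for them because it holds for every vector of the whole space. For $m\notin S$, write $S'=S$, viewed as a subset of $[m-1]$. Then
\[
 \mathcal R_S=\mathcal R'_{S'}\mathcal P_m,
\]
where $\mathcal R'$ is built from $\mathcal P_1,\dots,\mathcal P_{m-1}$. Hence $Y_S=\mathcal P_mY_S$ lies in $\ran\mathcal R'_{S'}|_{\ran\mathcal P_m}$. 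Distinct $S\in\mathcal F_0$ give distinct $S'$, so the inductive hypothesis applies and gives
\[
 (1-\xi)^{m-1}\norm{X_0}^2\leq\sum_{\mathcal F_0}\norm{Y_S}^2\leq(1+\xi)^{m-1}\norm{X_0}^2 .
\]
The same argument on $\ran(I-\mathcal P_m)$, with $S'=S\setminus\{m\}$, gives the analogous bound for $X_1$.

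Adding the two bounds and inserting the single-split bound gives \eqref{eq:projection-iteration}. The step needing the most care is the verification that $Y_S\in\ran\mathcal R'_{S'}$ on the invariant subspace. This relies on writing $\mathcal R_S$ as a product in which the factor $\mathcal P_m$ (or $I-\mathcal P_m$) can be moved to act first, which is exactly where commutativity of all the idempotents is used. The rest of the argument is routine bookkeeping.
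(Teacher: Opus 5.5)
Your proof is correct and is essentially the paper's argument: the paper peels off $\mathcal P_1,\dots,\mathcal P_m$ one at a time through the grouped sums $S_h=\sum_{Q\subseteq[h]}\norm{\sum_{S\cap[h]=Q}Y_S}^2$ and the recurrence $(1-\xi)S_{h-1}\leq S_h\leq(1+\xi)S_{h-1}$, which is exactly the chain of inequalities your induction produces when unrolled. The only difference is packaging: the paper applies \eqref{eq:projection-single-split} directly to the partial sums in the full space, so it does not need your restriction to the invariant subspaces $\ran\mathcal P_m$ and $\ran(I-\mathcal P_m)$.
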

\begin{proof}
To compare the two quantities in \eqref{eq:projection-iteration},
group the vectors according to $S\cap[h]$ and define
\[
 S_h:=\sum_{Q\subseteq[h]}
 \norm{\sum_{\substack{S\in\mathcal F\\S\cap[h]=Q}}Y_S}^2,
 \qquad 0\leq h\leq m.
\]
Since $S_0=\norm{\sum_{S\in\mathcal F}Y_S}^2$ and
$S_m=\sum_{S\in\mathcal F}\norm{Y_S}^2$, it suffices to show
\[
 (1-\xi)S_{h-1}\leq S_h\leq(1+\xi)S_{h-1}
 \qquad(1\leq h\leq m).
\]
Commutativity and idempotence give $\mathcal P_hY_S=0$ if $h\in S$
and $\mathcal P_hY_S=Y_S$ otherwise. Thus, for $Q\subseteq[h-1]$,
\[
 \begin{aligned}
 \mathcal P_h\left(\sum_{\substack{S\in\mathcal F\\S\cap[h-1]=Q}}Y_S\right)
 &=\sum_{\substack{S\in\mathcal F\\S\cap[h]=Q}}Y_S,\\
 (I-\mathcal P_h)\left(\sum_{\substack{S\in\mathcal F\\S\cap[h-1]=Q}}Y_S\right)
 &=\sum_{\substack{S\in\mathcal F\\S\cap[h]=Q\cup\{h\}}}Y_S.
 \end{aligned}
\]
Applying \eqref{eq:projection-single-split} to each group and summing
over $Q\subseteq[h-1]$ proves the required recurrence.
Iterating it from $h=1$ to $m$ gives \eqref{eq:projection-iteration}.
\end{proof}

\section{R\'enyi-2 decay and a modified logarithmic Sobolev inequality}
\label{app:renyi-log-sobolev}

The $\chi^2$ decay established in the proof of
Theorem~\ref{thm:main_result} is equivalent to exponential convergence
in sandwiched R\'enyi-2 divergence.
For a state $\eta$ and a full-rank state $\rho$, the sandwiched
R\'enyi-2 divergence~\cite{MLDSFT2013} is
\begin{equation}\label{eq:setup-renyi2}
 \widetilde D_2(\eta\Vert\rho)
 :=\log\Tr\!\left[(\rho^{-1/4}\eta\rho^{-1/4})^2\right]
 =\log\bigl(1+\chi^2(\eta\Vert\rho)\bigr).
\end{equation}
For the Gibbs state $\rho_\beta$, let $\eta_t=e^{t\mathcal L}(\eta_0)$,
$A_t=\rho_\beta^{-1/2}(\eta_t-\rho_\beta)\rho_\beta^{-1/2}$, and
$F(t)=\norm{A_t}_{\rho_\beta}^2=\chi^2(\eta_t\Vert\rho_\beta)$.
KMS detailed balance, the comparison~\eqref{eq:framework-comparison},
and Theorem~\ref{thm:framework-enhanced-dissipation} give
\[
 \begin{aligned}
 \frac{d}{dt}\widetilde D_2(\eta_t\Vert\rho_\beta)
 &=\frac{F'(t)}{1+F(t)}
 =-\frac{2\ip{A_t}{-\mathcal L^\dagger A_t}}{1+F(t)}\\
 &\leq-\frac{2c_{\mathrm{cmp}}c_{\mathrm{enh}}}{\log^3(n)}
          \log(1+F(t)).
 \end{aligned}
\]
Thus, with
$\kappa_n=2c_{\mathrm{cmp}}c_{\mathrm{enh}}/\log^3(n)
=\Omega(\log^{-3}(2n))$, we obtain
\begin{equation}\label{eq:setup-renyi-decay}
 \widetilde D_2(\eta_t\Vert\rho_\beta)
 \leq e^{-\kappa_n t}\widetilde D_2(\eta_0\Vert\rho_\beta),
 \qquad t\geq0.
\end{equation}
By~\eqref{eq:setup-renyi2}, exponentiating~\eqref{eq:setup-renyi-decay}
gives
\[
 F(t)\leq\exp\!\bigl[e^{-\kappa_n t}\log(1+F(0))\bigr]-1.
\]
Conversely, taking the logarithm of $1+F(t)$ recovers
\eqref{eq:setup-renyi-decay}, so the two decay estimates are equivalent.

This decay also yields an MLSI with constant $\Omega(\log^{-4}(2n))$.
We first obtain an $L^2\to L^\infty$ norm bound independent of $n$
after time $O(\log^4(2n))$, then combine it with the spectral gap
through the criterion below.

\paragraph*{An MLSI criterion.}
Let $\mathcal L$ be a KMS detailed-balanced Lindblad generator with
unique stationary state $\rho>0$. The relative entropy is
\[
 D(\eta\Vert\rho)=\Tr\bigl[\eta(\log\eta-\log\rho)\bigr].
\]
We define $\alpha_{\mathrm{MLSI}}$ as the largest constant for which
\[
 2\alpha_{\mathrm{MLSI}}D(\eta\Vert\rho)
 \leq-\Tr\bigl[\mathcal L(\eta)(\log\eta-\log\rho)\bigr]
\]
holds for every full-rank state $\eta$. The right-hand side is the
negative time derivative of relative entropy, so MLSI is equivalent to
\[
 D(e^{t\mathcal L}\eta_0\Vert\rho)
 \leq e^{-2\alpha_{\mathrm{MLSI}}t}D(\eta_0\Vert\rho)
 \qquad\text{for every state }\eta_0\text{ and }t\geq0.
\]
To obtain a lower bound on $\alpha_{\mathrm{MLSI}}$, we use
\[
 \norm{e^{t\mathcal L^\dagger}}_{2\to\infty,\rho}
 :=\sup_{X\ne0}\frac{\norm{e^{t\mathcal L^\dagger}X}_\infty}{\norm X_\rho}.
\]
The following criterion requires a bound on this norm at a single
time, in addition to the spectral gap.

\begin{lemma}[MLSI from an $L^2\to L^\infty$ bound]
\label{lem:mlsi-norm-criterion}
\cite[Theorem~3.3, $q=\infty$]{GW2026}
Let $\mathcal L$ be a KMS detailed-balanced Lindblad generator with
unique stationary state $\rho>0$ and spectral gap $\lambda>0$.
If $\norm{e^{\tau\mathcal L^\dagger}}_{2\to\infty,\rho}\leq M$ for some $\tau>0$ and
$M\geq1$, then
\begin{equation}\label{eq:mlsi-norm-criterion}
 \alpha_{\mathrm{MLSI}}
 \geq\frac{1}{2\tau+2(\log M+1)/\lambda}.
\end{equation}
\end{lemma}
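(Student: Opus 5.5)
The plan is to derive the infinitesimal inequality from an integrated entropy-decay estimate over one finite time window. Write $\eta_t=e^{t\mathcal L}\eta_0$ and $\mathrm{EP}(\eta):=-\Tr[\mathcal L(\eta)(\log\eta-\log\rho)]$ for the entropy production. Then $D(\eta_0\Vert\rho)-D(\eta_T\Vert\rho)=\int_0^T\mathrm{EP}(\eta_s)\,ds$. Fix the window length
\[
 T:=\tau+\frac{\log M+1}{\lambda}.
\]
The goal is an estimate of the form $D(\eta_T\Vert\rho)\le\theta\,D(\eta_0\Vert\rho)+(\text{terms controlled by }\mathrm{EP})$. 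Combined with a bound of $\int_0^T\mathrm{EP}(\eta_s)\,ds$ by $T\,\mathrm{EP}(\eta_0)$, this rearranges to $2\alpha D\le\mathrm{EP}$ with $2\alpha^{-1}$ of the order of $2T$. That matches the form $2\tau+2(\log M+1)/\lambda$ in \eqref{eq:mlsi-norm-criterion}.

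\emph{Step 1: regularization by the $L^2\to L^\infty$ bound.} By duality in the KMS pairing, $\norm{e^{\tau\mathcal L^\dagger}}_{2\to\infty,\rho}\le M$ gives an $L^1\to L^2$ bound for the Schr\"odinger evolution. Set $A_t=\rho^{-1/2}(\eta_t-\rho)\rho^{-1/2}$ and use $\langle X,A_\tau\rangle_\rho=\Tr[(e^{\tau\mathcal L^\dagger}X)^\dagger(\eta_0-\rho)]$. Taking the supremum over $\norm X_\rho\le1$ gives
\[
 \chi^2(\eta_\tau\Vert\rho)=\norm{A_\tau}_\rho^2\le M^2\norm{\eta_0-\rho}_1^2\le4M^2 .
\]
So after time $\tau$ the state has $O(M^2)$ $\chi^2$-divergence, whatever its initial divergence.

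\emph{Step 2: spectral-gap decay.} For $t\ge0$, the gap gives $\chi^2(\eta_{\tau+t}\Vert\rho)\le e^{-2\lambda t}\chi^2(\eta_\tau\Vert\rho)$. Together with $D\le\log(1+\chi^2)\le\chi^2$, this yields
\[
 D(\eta_{\tau+t}\Vert\rho)\le4M^2e^{-2\lambda t}.
\]
Taking $t=(\log M+1)/\lambda$ makes this at most $4e^{-2}$, an absolute constant. This is only useful when $D(\eta_0\Vert\rho)$ is of order one or larger. For small $D(\eta_0\Vert\rho)$, I would instead run the same argument on the $\chi^2$-to-relative-entropy comparison near equilibrium. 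For states close to $\rho$, $\chi^2$ and $D$ are comparable after the regularizing time $\tau$. Alternatively, interpolate by applying Step 1 to $\eta_0$ written as a mixture of $\rho$ and a state with small trace distance. In either case the aim is $D(\eta_T\Vert\rho)\le\tfrac12 D(\eta_0\Vert\rho)$ uniformly.

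\emph{Step 3: from the integrated decay to MLSI.} The previous steps give $\int_0^T\mathrm{EP}(\eta_s)\,ds\ge\tfrac12D(\eta_0\Vert\rho)$. To conclude $\mathrm{EP}(\eta_0)\ge D(\eta_0\Vert\rho)/(2T)$, I need $s\mapsto\mathrm{EP}(\eta_s)$ to be nonincreasing, or at least $\mathrm{EP}(\eta_s)\le\mathrm{EP}(\eta_0)$. I expect this to be the main obstacle: monotonicity of entropy production is not automatic for noncommuting KMS-detailed-balanced semigroups. My first approach would be the one that I believe \cite{GW2026} uses. That is, replace the ``decay over a window, then monotonicity'' argument by a direct interpolation. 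Write $D(\eta_0\Vert\rho)$ via the convexity of relative entropy along the semigroup, and estimate $D(\eta_0\Vert\rho)-D(\eta_T\Vert\rho)$ through a chain-rule/Cauchy--Schwarz bound on $\frac{d}{dt}D$ that yields $\mathrm{EP}(\eta_0)$ directly. If that fails, the fallback is to take \cite[Theorem~3.3]{GW2026} as given, since the lemma is a direct citation. In that case I would only check the bookkeeping: with the threshold $q=\infty$ and Steps 1--2 as the input hypotheses, their constant specializes exactly to \eqref{eq:mlsi-norm-criterion}.
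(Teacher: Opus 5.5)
There is a genuine gap at Step 3 of your self-contained argument. The paper itself gives no proof of this lemma. It imports it from \cite[Theorem~3.3]{GW2026} at $q=\infty$, and as quoted, that theorem's hypotheses are exactly the $L^2\to L^\infty$ bound and the gap. So your fallback coincides with the paper. Your description of it is off, though: the cited theorem does not take an entropy contraction like your Steps 1--2 as input, so there is no bookkeeping beyond setting $q=\infty$.

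You correctly flag Step 3 as the main obstacle, but neither remedy you suggest closes it, and it is the whole difficulty rather than a technicality. A uniform one-window contraction $D(\eta_T\Vert\rho)\le\theta D(\eta_0\Vert\rho)$ does not imply an MLSI. Iterating it only gives $D(\eta_t\Vert\rho)\le\theta^{-1}e^{-t\log(1/\theta)/T}D(\eta_0\Vert\rho)$, and a prefactor larger than one cannot be differentiated at $t=0$. Replacing $\int_0^T\mathrm{EP}(\eta_s)\,ds$ by $T\,\mathrm{EP}(\eta_0)$ requires $s\mapsto\mathrm{EP}(\eta_s)$ to be nonincreasing, i.e.\ convexity of $t\mapsto D(\eta_t\Vert\rho)$. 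That is a curvature-type (Bakry--\'Emery) property, not something KMS detailed balance provides. Your proposed ``convexity of relative entropy along the semigroup'' presupposes exactly this property.

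Even granting monotonicity, the window argument yields $\alpha_{\mathrm{MLSI}}\ge(1-\theta)/(2T)$. For your $T=\tau+(\log M+1)/\lambda$, \eqref{eq:mlsi-norm-criterion} equals exactly $1/(2T)$, so this route only recovers the bound up to a constant factor. By contrast, your worry about small $D$ in Step 2 is unfounded. Step 1 actually gives $\chi^2(\eta_\tau\Vert\rho)\le M^2\norm{\eta_0-\rho}_1^2\le 2M^2D(\eta_0\Vert\rho)$ by Pinsker, hence $D(\eta_T\Vert\rho)\le 2e^{-2}D(\eta_0\Vert\rho)$ for every $\eta_0$.

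The missing idea is to differentiate at $t=0$ a family of bounds that is sharp there, rather than integrate over a window. The relevant scale is the KMS-weighted $L^p$ norms $\norm{\rho^{1/(2p)}X\rho^{1/(2p)}}_p$, which reduce to $\norm X_\rho$ at $p=2$ and to $\norm X_\infty$ at $p=\infty$. Since $\mathcal L^\dagger$ is KMS self-adjoint, $e^{iy\mathcal L^\dagger}$ is a KMS isometry. Stein interpolation between the identity and the hypothesis at time $\tau$ then gives $\norm{e^{s\mathcal L^\dagger}}_{2\to q(s),\rho}\le M^{s/\tau}$ with $q(s)=2\tau/(\tau-s)$, with equality at $s=0$.

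Gross's differentiation at $s=0$ gives a defective $L^2$ log-Sobolev inequality $\mathrm{Ent}_{2,\rho}(X)\le 2\tau\,\mathcal E_{\mathcal L}(X,X)+2\log M\,\norm X_\rho^2$. A Rothaus-type inequality bounds the entropy of $X$ by that of $X-\Tr(\rho X)I$ plus $2\norm{X-\Tr(\rho X)I}_\rho^2$. The gap then converts the defect $(2\log M+2)\norm{X-\Tr(\rho X)I}_\rho^2$ into $\tfrac{2(\log M+1)}{\lambda}\mathcal E_{\mathcal L}(X,X)$.

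This mechanism produces exactly the combination $2\tau+2(\log M+1)/\lambda$ in \eqref{eq:mlsi-norm-criterion}. In the noncommutative setting, the Rothaus step and the passage from this $L^2$ entropy inequality to relative entropy are the nontrivial ingredients you would have to supply or import.
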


The dependence on both $\tau$ and $M$ is essential: our R\'enyi-2
estimate supplies $M=O(1)$ at a polylogarithmic time.

\begin{proposition}[MLSI for the one-dimensional Gibbs sampler]
\label{prop:oned-mlsi}
Under the hypotheses of Theorem~\ref{thm:main_result}, there is a
constant $c>0$, independent of $n$, such that
\[
 \alpha_{\mathrm{MLSI}}\geq\frac{c}{\log^4(2n)}.
\]
Consequently, for every initial state $\eta_0$ and every $t\geq0$,
\[
 D(e^{t\mathcal L}\eta_0\Vert\rho_\beta)
 \leq e^{-2ct/\log^4(2n)}D(\eta_0\Vert\rho_\beta).
\]
The constant may depend on the fixed model parameters $\beta,q,d$.
\end{proposition}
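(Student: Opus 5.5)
The plan is to apply Lemma~\ref{lem:mlsi-norm-criterion} with $\lambda=c_{\rm cmp}c_{\rm gap}$, the size-independent gap from Fact~\ref{cond:framework-comparison}. We choose $\tau=\Theta(\log^4(2n))$ and show $\norm{e^{\tau\mathcal L^\dagger}}_{2\to\infty,\rho_\beta}\leq M$ with $M$ a constant independent of $n$. The criterion then gives
\[
\alpha_{\rm MLSI}\geq\bigl(2\tau+2(\log M+1)/\lambda\bigr)^{-1}=\Omega(\log^{-4}(2n)).
\]
The stated entropy decay follows directly from the definition of $\alpha_{\rm MLSI}$.

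The core step is the $L^2\to L^\infty$ bound, which we get by duality from the $\chi^2$ decay. For Hermitian $X$ (the general case follows by splitting into Hermitian and anti-Hermitian parts, at a cost of a factor $2$), we use $\norm{Y}_\infty=\sup_{\eta}|\Tr(\eta Y)|$ over density operators $\eta$. Taking $Y=e^{\tau\mathcal L^\dagger}X$ gives
\[
\Tr(\eta\,e^{\tau\mathcal L^\dagger}X)=\Tr(e^{\tau\mathcal L}(\eta)X)=\ip{\rho_\beta^{-1/2}\eta_\tau\rho_\beta^{-1/2}}{X},
\]
where $\eta_\tau=e^{\tau\mathcal L}\eta$. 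By Cauchy--Schwarz in the KMS inner product, this is at most
\[
\norm{\rho_\beta^{-1/2}\eta_\tau\rho_\beta^{-1/2}}_{\rho_\beta}\norm X_{\rho_\beta}=\sqrt{1+\chi^2(\eta_\tau\Vert\rho_\beta)}\,\norm X_{\rho_\beta}.
\]
Here we used the orthogonal splitting $\rho_\beta^{-1/2}\eta_\tau\rho_\beta^{-1/2}=I+A_\tau$ together with $\ip{I}{A_\tau}=\Tr(\eta_\tau-\rho_\beta)=0$. This yields
\[
M\leq\sup_{\eta_0}\sqrt{1+F(\tau)}.
\]

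It remains to bound $F(\tau)$ uniformly over initial states, which is exactly the content of the proof of Theorem~\ref{thm:main_result}, equivalently the R\'enyi-2 decay \eqref{eq:setup-renyi-decay}. Since $\log(1+F(0))\leq\log\lambda_{\min}(\rho_\beta)^{-1}\leq n\log d+2\beta(n-1)=O(n)$, we have
\[
\log(1+F(\tau))\leq e^{-\kappa_n\tau}\,O(n),\qquad \kappa_n=\Omega(\log^{-3}(2n)).
\]
Choosing $\tau=C\log^3(2n)\log(2n)$ with $C$ large makes the right side at most $1$, so $M\leq\sqrt e$ uniformly in $\eta_0$. Here $\log M=O(1)$ and $\tau=O(\log^4(2n))$, which gives the claim.

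The main obstacle is not analytic but hygienic. First, the definition of $\norm{\cdot}_{2\to\infty,\rho}$ in \cite{GW2026} must match ours: the same KMS weighting, possibly restricted to mean-zero $X$ or defined on the full space. If it is restricted to mean-zero $X$, the argument above with $I$ removed gives $\sqrt{F(\tau)}\leq 1$ directly. Second, the unique stationarity and KMS detailed balance hypotheses of Lemma~\ref{lem:mlsi-norm-criterion} must be checked; both follow from the positive gap in Fact~\ref{cond:framework-comparison}. Third, the extension of the $L^\infty$ duality to non-Hermitian $X$ should be verified, though this costs only a constant factor in $M$.
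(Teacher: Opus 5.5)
Your proposal is correct and follows the paper's proof. Both apply the GW2026 criterion with the size-independent gap $\lambda\geq c_{\mathrm{cmp}}c_{\mathrm{gap}}$, obtain an $L^2\to L^\infty$ bound at $\tau=O(\log^4(2n))$ via state--observable duality, KMS Cauchy--Schwarz, and the R\'enyi-2 decay \eqref{eq:setup-renyi-decay} with the $O(n)$ initial bound. The differences are cosmetic: you keep the identity component inside Cauchy--Schwarz (getting $\sqrt{1+\chi^2}$) instead of splitting off the scalar part and requiring $\chi^2\leq 1/2$, and, as a small slip, KMS detailed balance comes from the CKG construction rather than from the gap.
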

\begin{proof}
To obtain the MLSI bound, we use the size-independent spectral gap
from Fact~\ref{cond:framework-comparison}:
\[
 \lambda:=\lambda_{\rm gap}(\mathcal L)
 \geq c_{\mathrm{cmp}}c_{\mathrm{gap}}>0.
\]
Lemma~\ref{lem:mlsi-norm-criterion} therefore reduces the proof to finding
$\tau=O(\log^4(2n))$ such that
$\norm{e^{\tau\mathcal L^\dagger}}_{2\to\infty,\rho_\beta}\leq\sqrt2$.
We first reduce this norm bound to $\chi^2$ mixing at fixed accuracy.

For a centered Hermitian operator $X$, meaning
$\Tr(\rho_\beta X)=0$, state--observable duality and KMS
Cauchy--Schwarz give
\begin{equation}\label{eq:mlsi-state-observable-duality}
 \begin{aligned}
 \norm{e^{\tau\mathcal L^\dagger}X}_\infty
 &=\sup_{\eta_0\ \mathrm{state}}
       \left|\Tr\bigl[(e^{\tau\mathcal L}\eta_0-\rho_\beta)X\bigr]\right|\\
 &\leq\sup_{\eta_0\ \mathrm{state}}
       \sqrt{\chi^2(e^{\tau\mathcal L}\eta_0\Vert\rho_\beta)}
       \norm X_{\rho_\beta}.
 \end{aligned}
\end{equation}
Extending~\eqref{eq:mlsi-state-observable-duality} to arbitrary
centered operators costs at most a factor $\sqrt2$; adding the scalar
part then gives the required norm bound provided that
\begin{equation}\label{eq:mlsi-fixed-accuracy}
 \sup_{\eta_0\ \mathrm{state}}
 \chi^2(e^{\tau\mathcal L}\eta_0\Vert\rho_\beta)\leq\frac12.
\end{equation}

The decay estimate~\eqref{eq:setup-renyi-decay}, together with the
uniform initial bound $\widetilde D_2(\eta_0\Vert\rho_\beta)
\leq\log\norm{\rho_\beta^{-1}}_\infty=O(n)$, directly gives
\eqref{eq:mlsi-fixed-accuracy} for $\tau=O(\log^4(2n))$.
Finally, the gap $\lambda=\lambda_{\rm gap}(\mathcal L)=\Omega(1)$
and~\eqref{eq:mlsi-norm-criterion} give
\[
 \alpha_{\mathrm{MLSI}}
 \geq\frac{1}{2\tau+(2+\log2)/\lambda}
 =\Omega(\log^{-4}(2n)).
\]
\end{proof}

\end{document}